\def\eqref#1{equation~\ref{#1}}
\def\1{\bm{1}}
\def\rx{{X}}
\def\ry{{Y}}
\def\rvx{{\mathbf{X}}}
\def\rvy{{\mathbf{Y}}}
\def\vx{{\mathbf{x}}}
\def\vy{{\mathbf{y}}}
\DeclareMathAlphabet{\mathsfit}{\encodingdefault}{\sfdefault}{m}{sl}
\SetMathAlphabet{\mathsfit}{bold}{\encodingdefault}{\sfdefault}{bx}{n}
\newcommand{\E}{\mathbb{E}}
\DeclareMathOperator*{\argmax}{arg\,max}
\theoremstyle{plain}
\newtheorem{theorem}{Theorem}[section]
\newtheorem{lemma}[theorem]{Lemma}
\theoremstyle{definition}
\newtheorem{definition}[theorem]{Definition}
\newtheorem{remark}[theorem]{Remark}
\newtheorem{assumption}[theorem]{Assumption}
\crefname{assumption}{assumption}{assumptions}
\newcommand{\compilehidecomments}{false}
\newcommand{\opt}[0]{{\text{opt} }}
\newcommand{\N}[0]{{\mathbb{N} }}
\title{Peer Prediction for Learning Agents}
\author{%
  Shi Feng \\
  Institute for Interdisciplinary Information Sciences,
  Tsinghua University\\
  Beijing, China \\
  \texttt{fengs19@mails.tsinghua.edu.cn} \\
  \AND 
  Fang-Yi Yu \\
  Department of Computer Science, George Mason University \\
  Fairfax, VA, USA \\
  \texttt{fangyiyu@gmu.edu}
  \AND
  Yiling Chen \\
  John A. Paulson School of Engineering and Applied Sciences, Harvard University \\
  Cambridge, MA, USA \\
  \texttt{yiling@seas.harvard.edu}
}
\begin{document}

\ifthenelse{ \equal{\compilehidecomments}{false} }{%
	\newcommand{\fang}[1]{}
	\newcommand{\shi}[1]{}
	\newcommand{\yiling}[1]{}
}{
	\newcommand{\fang}[1]{{\color{blue}  [\text{Fang-Yi:} #1]}}
	\newcommand{\shi}[1]{{\color{yellow!80!black} [\text{Shi:} #1]}}
	\newcommand{\yiling}[1]{{\color{red} [\text{Yiling:} #1]}}
}

\newcommand{\compilefullversion}{false}
\ifthenelse{\equal{\compilefullversion}{false}}{%
	\newcommand{\OnlyInFull}[1]{}
	\newcommand{\OnlyInShort}[1]{#1}
}{%
	\newcommand{\OnlyInFull}[1]{#1}%
	\newcommand{\OnlyInShort}[1]{}%
}%

\maketitle

\begin{abstract}

Peer prediction refers to a collection of mechanisms for eliciting information from human agents when direct verification of the obtained information is unavailable. They are designed to have a game-theoretic equilibrium where everyone reveals their private information truthfully. This result holds under the assumption that agents are Bayesian and they each adopt a fixed strategy across all tasks. Human agents however are observed in many domains to exhibit learning behavior in sequential settings. In this paper, we explore the dynamics of sequential peer prediction mechanisms when participants are learning agents. We first show that the notion of no regret alone for the agents' learning algorithms cannot guarantee convergence to the truthful strategy. We then focus on a family of learning algorithms where strategy updates only depend on agents' cumulative rewards and prove that agents' strategies in the popular Correlated Agreement (CA) mechanism converge to truthful reporting when they use algorithms from this family. This family of algorithms is not necessarily no-regret, but includes several familiar no-regret learning algorithms (e.g multiplicative weight update and Follow the Perturbed Leader) as special cases. Simulation of several algorithms in this family as well as the $\epsilon$-greedy algorithm, which is outside of this family, shows convergence to the truthful strategy in the CA mechanism.  

\end{abstract}

\section{Introduction}

A fundamental challenge in many domains is to elicit high-quality information from people when directly verifying the acquired information is {\em not} feasible, either because the ground truth is not available or because it's too costly to obtain. Notable settings include asking people to label data for machine learning, having students perform peer grading in education, and soliciting customer feedback for products and services. 

The peer prediction literature has made impressive progress on this challenge in the past two decades, with many mechanisms that have desirable incentive properties developed for this problem \cite{prelec2004bayesian, miller2005eliciting, witkowski2012robust, dasgupta2013crowdsourced, radanovic2014incentives, faltings2014incentives, kamble2015truth, shnayder2016informed,kong2016framework, kong2016equilibrium, prelec2017solution, liu2018surrogate, schoenebeck2020learning,kong2020dominantly}. The term peer prediction refers to a collection of reward mechanisms that solicit information from human agents and reward each agent solely based on how the agent's reported information compares with that of the other agents, without having access to the ground truth. Under some assumptions, many peer prediction mechanisms \cite{prelec2004bayesian, miller2005eliciting, witkowski2012robust, radanovic2014incentives, faltings2014incentives, kong2016equilibrium, prelec2017solution} guarantee that every agent truthfully reporting their information is a game-theoretic equilibrium, and the more recent multi-task peer prediction mechanisms \cite{dasgupta2013crowdsourced, kamble2015truth, shnayder2016informed, kong2016framework, liu2018surrogate, schoenebeck2020learning, kong2020dominantly} further ensure that agents receive the highest expected payoff at the truthful equilibrium, compared with other strategy profiles. 

While achieving truthful reporting as a highest-payoff equilibrium is a victory to declare for this challenging without-verification setting, there are however caveats associated with adopting the notion of equilibrium as a solution concept. The equilibrium results rely on the assumption that participants are fully rational Bayesian agents. Equilibrium is a static notion and doesn't address how agents, who act independently, jump to play their equilibrium strategies. Moreover, the equilibrium results of multi-task peer prediction mechanisms heavily depend on a {\em consistent strategy} assumption, that is, each agent is assumed to adopt a fixed strategy across all tasks that she participates. All together these assumptions exclude the possibility that agents may explore and learn from previous experience, a behavior that's not only commonly observed in practice but also has been modeled in studying other strategic settings \cite{braverman2017selling, deng2019strategizing, camara2020mechanisms}. 


This paper is the first theoretical study on the dynamics of sequential peer prediction mechanisms when participants are learning agents. The main question that we explore is whether and when in sequential peer prediction, learning agents will converge to all playing the truthful reporting strategy. We first consider agents adopting no-regret learning algorithms and prove that the notion of no regret alone cannot guarantee convergence to truthful reporting. We then define a natural family of reward-based learning algorithms where strategy updates only depend on agents' cumulative rewards. While algorithms in this family is not necessarily no-regret (e.g. the Follow the Leader algorithm), this family includes some familiar no-regret learning algorithms, including the Multiplicative Weight Update and the Follow the Perturbed Leader algorithms. Our main result shows that, for the binary-signal setting, agents' strategies in the popular Correlated Agreement (CA) mechanism \cite{dasgupta2013crowdsourced} converge to truthful reporting when agents use any algorithm from this family. 
To prove the result, we show the process has a self-fulfilling property: once Alice and Bob have large accumulated rewards for truth-telling, they are more likely to play truth-telling and resulting in larger accumulated rewards.  Theoretically, we carefully partition the process into three stages, bad, intermediate, and good events illustrated in \cref{Fig.schemfig}, and use tools in martingale theory to argue the progress of the process.  Finally, we simulate the strategy dynamics in the CA mechanism for several algorithms in this family as well as for the $\epsilon$-greedy algorithm, which doesn't belong to this family. We observe convergence to truthful reporting for all algorithms considered in our simulation, suggesting an interesting future direction to characterize all learning algorithms that converge to truthful reporting.

\paragraph{Related Works}
This paper relates to two lines of work, information elicitation and mechanisms for learning agents. 
  
\textbf{Information Elicitation Mechanisms}  The literature on information elicitation without verification focuses on capturing the strategic aspect of human agents.  In multi-task settings,  \citet{dasgupta2013crowdsourced} proposed a seminal informed truthful mechanism, the Correlated Agreement (CA) mechanism, for binary positively correlated signals.  A series of works then relaxed the binary and positively correlation assumptions~\cite{shnayder2016informed,kong2016framework,schoenebeck2020learning,kong2020dominantly}. Additionally, \citet{DBLP:journals/corr/abs-2106-03176} study the limitation of information elicitation in the multi-task setting. However, all of the above works assume agents using consistent strategies that are identical across all tasks.  The consistent strategy assumption excludes the possibility of agent learning.  
Our work removes the consistent strategy assumption and explicitly considers learning agents. We theoretically prove truthful convergence of the CA mechanism when agents using algorithms from a family of reward-based online learning algorithms. 

Our work of considering learning agents can be viewed as a way of testing the robustness of information elicitation mechanisms with respect to deviation from the rational Bayesian agent model. From this perspective, \citet{shnayder2016measuring} is closely related to ours. They consider sequential information elicitation and empirically study if agents using replicator dynamics can converge to truth-telling in the CA mechanism and several other mechanisms.  Our work theoretically proves that, besides replicator dynamics, learning agents can converge to truth-telling in the CA mechanism when they use a general family of learning algorithms.  Additionally,  \citet{schoenebeck2021information} designed an information elicitation mechanism that was robust against a small fraction of adversarial agents. 



\textbf{Mechanisms for Learning Agents} Several works in economics and computer science try to design mechanisms for learning agents, rather than for rational, Bayesian ones.  
\citet{braverman2017selling} studied pricing mechanisms for learning agents with no external regrets called mean-based algorithm.\fang{their mean-based mechanism may not be no regret.} Their work was generalized by \citet{deng2019strategizing} to consider repeated Stackelberg games in full-information settings.  \citet{deng2022nash} studied auction for mean-based algorithm and show the convergence to Nash equilibrium.  \citet{camara2020mechanisms} further proposed counterfactual internal regrets (CIR) together with no-CIR assumption, which was proved to be a sufficient behavior assumption for no-regret principal mechanism design in repeated stage games. However, all of these works focus on a single agent or full-information games, while peer prediction is an incomplete-information game with multiple agents.
Finally, our goal is slightly different from that of most sequential mechanism design.  Instead of maximizing the mechanism designer's utility, our goal is to incentivize truthful reporting from agents.

\section{Peer Prediction Settings}\label{sec:pre}


For simplicity we consider two agents, Alice and Bob, who work on a sequence of tasks indexed by $t\ge 1$.\footnote{For more than two agents, we can partition the agents into groups of two agents to run our mechanisms when the number of agents is even.  Then all our results still hold. Finally, when the number of agents is odd, we can pair the unpaired agent with a reference agent whose payment is not affected by the unpaired one.}  For round $t$, both agents work on task $t$, Alice receives a signal $\rx_t = x_t$ in $\{0,1\}$, and Bob a signal $\ry_t = y_t$ in $\{0,1\}$ where $\rx_t$ and $\ry_t$ denote random variables, and $x_t$ and $y_t$ are their realizations.  Then Alice and Bob report $\hat{\rx}_t = \hat{x}_t$ and $\hat{\ry}_t = \hat{y}_t$ in $\{0,1\}$.  
We define $\rvx_{\le t} = \{\rx_s: 1\le s\le t\}\in \{0,1\}^t$ and $\hat{\rvx}_{\le t} = \{\hat{\rx}_s: 1\le s \le t\}\in \{0,1\}^t$ to denote Alice's {signal profile} and report profiles until $t$-th round respectively, and define $\rvy_{\le t}$ and $\hat{\rvy}_{\le t}$ for Bob similarly.  We use $\rvx, \hat{\rvx}, \rvy$, and $\hat{\rvy}$ for the complete signal and report profiles. 
Additionally, we consider the signals are generated from some distribution $\mathbb{P}$ that satisfies the following assumptions:
\begin{assumption}[name = A priori similar tasks~\cite{dasgupta2013crowdsourced}, label = asm.apriori]
Each pair of signal is identically and independently (i.i.d.) generated: there exists a distribution $P_{X,Y}$ over $\{0,1\}^2$ such that $(X_t,Y_t)\sim P_{X,Y}$ for any $t\in\mathbb{N}^+$. Moreover, we assume the distribution has full support, $P_{X,Y}(x,y)>0$ for all $x,y\in\{0,1\}$.
\end{assumption}

\begin{assumption}[name = Positively correlated signals, label = asm.poscorr]
The distribution $P_{X,Y}$ is positively correlated, $\min\{P_{X,Y}(1,1),P_{X,Y}(0,0)\}>\max\{P_{X,Y}(1,0),P_{X,Y}(0,1)\}.$
\end{assumption}

Now we introduce multi-task peer prediction mechanisms and sequential peer prediction mechanisms, and their relation.  We will focus on the sequential setting.
Multi-task peer prediction mechanisms work on a fixed number of tasks.  Formally, a multi-task peer prediction mechanism on $k$ tasks is a pair of payment functions $\bar{M}: \{0,1\}^k\to [0,1]^2$.  For instance, the \emph{(multi-task) correlated agreement mechanism} (CA mechanism)\footnote{While the CA mechanism can be defined on non binary setting and does not require positive correlation.~\cite{shnayder2016informed}, with \cref{asm.poscorr}, the CA mechanism reduces to \cref{eq:ca} and is first proposed in \cite{dasgupta2013crowdsourced}.  Finally, when the number of task is greater than two, we can compute the payment based on the last two tasks or two random tasks since agents using consistent strategy and \cref{asm.apriori}.}~\cite{dasgupta2013crowdsourced,shnayder2016informed,shnayder2016measuring} is $\bar{M}^{CA}(\hat{\vx}, \hat{\vy}) = \left(\mathbb{I}[\hat{x}_2=\hat{y}_2]-\mathbb{I}[\hat{x}_2=\hat{y}_{1}], \mathbb{I}[\hat{y}_2=\hat{x}_2]-\mathbb{I}[\hat{y}_2=\hat{x}_{1}]\right)$ for all $\hat{\vx}, \hat{\vy}\in \{0,1\}^2$.  Intuitively, the CA mechanism rewards agreement on the same task and punishes agreement on uncorrelated tasks.  

A \emph{sequential information elicitation mechanism} is a sequence of payment functions $\mathcal{M} = \{M_t:t\ge 1\}$ where $M_t:\{0,1\}^{2\times t}\to [-1,1]$ for all $t$.  After Alice and Bob reporting $\hat{\vx}_{\le t}$ and $\hat{\vy}_{\le t}$ in round $t$, the mechanism computes $(r_t, s_t) := M_{t}(\hat{\vx}_{\le t}, \hat{\vy}_{\le t})$ and pay $r_t$ to Alice and $s_t$ to Bob.  Here we assume $M_t$ can only depends on a constant $k$ round of reports so that $M_{t}(\hat{\vx}_{\le t}, \hat{\vy}_{\le t}) = M_{t}(\hat{x}_{t-k+1},\hat{x}_{t-k+1},\dots,\hat{x}_{t}, \hat{y}_{t-k+1},\hat{y}_{t-k+1},\dots,\hat{y}_{t})$ for all $t$, $\hat{\vx}_{\le t}$, and $\hat{\vy}_{\le t}$, and we call such $\mathcal{M}$ \emph{rank $k$ mechanism}.
For instance, the (multi-task) CA mechanism can be adopted as a sequential rank $2$ information elicitation mechanism:  At round $t$, the payment is 
\begin{equation}\label{eq:ca}
    M_t^{CA}(\hat{\vx}_{\le t}, \hat{\vy}_{\le t}) = \left(\mathbb{I}[\hat{x}_t=\hat{y}_t]-\mathbb{I}[\hat{x}_t=\hat{y}_{t-1}], \mathbb{I}[\hat{y}_t=\hat{x}_t]-\mathbb{I}[\hat{y}_t=\hat{x}_{t-1}]\right)
\end{equation}
where $\hat{x}_0$ and $\hat{y_0}$ are set as $0$.  
Similarly, we say a sequential information elicitation mechanism $\mathcal{M} = (M_t)_{t\ge 1}$ is a \emph{sequential version} of a multi-task information elicitation mechanism $\bar{M}$ if $M_t(\hat{\vx}_{\le t}, \hat{\vy}_{\le t}) = \bar{M}(\hat{x}_{t-k+1},\hat{x}_{t-k+1},\dots,\hat{x}_{t}, \hat{y}_{t-k+1},\hat{y}_{t-k+1},\dots,\hat{y}_{t})$
for all $\hat{\vx}_{\le t}, \hat{\vy}_{\le t}$ and $t\ge k$.  The payment at round $t\ge k$ is $\bar{M}$ on the latest $k$ reports.  
Conversely, a sequential information elicitation mechanism can be seen as a sequence of multi-task information elicitation mechanisms.
Now we formally define agents' strategies. Due to symmetry, we introduce notation for Alice and omit Bob's.  Given an information elicitation mechanism $\mathcal{M}$, at round $t$, Alice observes her signal $x_t$ and decides on her report $\hat{x}_t$.  Thus, Alice has four \emph{options (pure strategies)}: 1) $\opt_1$: report the private signal truthfully, 2) $\opt_2$: flip the private signal, 3) $\opt_3$: report $1$ regardless of the signal, and 4) $\opt_4$: report $0$ regardless of the signal.  We call $\opt_3$ and $\opt_4$ uninformative strategies.  We use $\opt^X_t$ to denote Alice's pure strategy, and $r_t$ for her payoff at round $t$.  At each round $t$, Alice knows her previous signals $\vx_{\le t}\in \{0,1\}^t$, her pure strategies $\opt^X_1, \dots, \opt^X_{t-1}$, and Bob's reports $\hat{\vy}_{\le t-1}$, so we use $\mathcal{F}_t = \{\vx_{\le t}, \opt^X_{\le t-1}, \hat{\vy}_{\le t-1}\}$
to denote Alice knowledge at round $t$.  Thus, Alice's mixed strategy at round $t$ is a stochastic mapping $\sigma^X_t$ from $\mathcal{F}_t$ to $\{\opt_1, \opt_2, \opt_3, \opt_4\}$.  We'll abuse our nation and also use $\sigma_t^X = \opt^X_t$ to represent the realized pure strategy.\fang{I am not sure if we need to abuse $\sigma_t^X$.  On the other hand, $\opt^X_1$ and $\opt_1$ may be confusing.}  Finally, a learning algorithm of Alice is a mapping from an information elicitation mechanism $\mathcal{M}$ to her strategies. 

\yiling{Should we change actions to pure strategies? These are not actions but strategies. Actions are just reporting 0 or 1, not contingent on agents' private signal.}\yiling{We can define $\sigma_t^X$ as a standard mixed strategy. That is, it maps from $\mathcal{F}_t$ to a distribution over $\{\opt_1, \opt_2, \opt_3, \opt_4\}$. And then, we can say that we'll abuse our nation and also use $\sigma_t^X$ to represent the realized pure strategy.}
\fang{As reviewer point out, we may need to distinguish action (reporting zero or one) and strategy more explicitly.}

\paragraph{Strongly Truthful for Rational and Bayesian agents} 
Previous works on information elicitation try to ensure truth-telling $\opt_1$ is the best strategy for rational and strategic agents.  In particular, a mechanism is \emph{strongly truthful} if  Alice and Bob report truthfully is a Bayesian Nash Equilibrium (BNE) and they get strictly higher payment at this BNE than at any other non-permutation BNE.  We present the formal definitions in the appendix. Informally, in a permutation BNE, every agent's strategy on each round is a permutation/bijection from his/her signals to reports.\fang{out of place}
However, the equilibrium results of previous mechanisms not only require \cref{asm.apriori} but further assume agents using \emph{consistent strategies}.  Specifically, Alice uses a consistent strategy if there is a fixed distribution on $\{\opt_1, \opt_2, \opt_3, \opt_4\}$ so that $\opt^X_t$ is generated from a fixed distribution that is independent of her private signals on other tasks and the round number.\yiling{Hmm, we probably do not need to introduce $\mu^X$. It should just be $\sigma^X$.}\fang{I remove the $\mu$ here but not sure if we use this notation in other places}   
For instance, when Alice and Bob are Bayesian and use consistent strategies under \cref{asm.poscorr} and \ref{asm.apriori}, \citet{dasgupta2013crowdsourced} show CA mechanism in \cref{eq:ca} is strongly truthful.    Intuitively, positive correlation \cref{asm.poscorr} guarantees that truthful reporting can maximize the chance of agreeing with the peer on the same task while avoiding agreeing on reports on other tasks.  Furthermore, in \cref{app.bayesiantruthful} we show CA mechanism merely has three types Bayesian Nash equilibria, at which both agents 1) play truth-telling $\opt_1$, 2) flip the signal $\opt_2$, or 3) generate uninformative reports (mixture between $\opt_3, \opt_4$) when agents use consistent strategies and \cref{asm.poscorr,asm.apriori} hold.

\yiling{Also should make it clear whether the two assumptions in this section are what we need for our results or what the peer prediction mechanisms generally need for equilibrium to hold.}\shi{We also need these two assumptions.}\fang{added one line in the last sentence}

\paragraph{Truthful convergence for learning agents}
However, as we consider agents using a family of online learning algorithms to decide their strategies, standard solution concepts like Bayesian Nash equilibrium no longer apply.  Additionally, online learning algorithms often have exploration, so we cannot hope agents will always use the truth-telling strategy.  For learning agents, our goal is to test whether existing  mechanisms can ensure that agents will \emph{converge} to truthful reporting when they deploy certain learning behavior that goes beyond obliviously consistent strategies.\yiling{Rephrase? We do not design mechanisms in this paper. We merely study existing mechanisms for learning agents.}  \fang{good point}

We now formalize the convergence of algorithms to truthful reporting.  Because we want to elicit information without verification, it is information-theoretically impossible for us to separate permutation equilibrium, where all agents play $\opt_2$, from truthful equilibrium, where all agents play $\opt_1$, without any additional information~\cite{kong2016framework}. 
However, if we have an additional bit of information on whether the prior of $0$ is larger than $1$, we may tell apart these two equilibria. \fang{Do we need more details? One reviewer complained.}We hence define convergence to truthful reporting as the limits of both $\opt^X_t$ and $\opt^Y_t$ being truth-telling ($\opt_1$) or flipping ($\opt_2$).  
Note that \cref{def.truthconvergence} requires almost surely convergence which is very strong convergence concept.  
\begin{definition}\label{def.truthconvergence}
An information elicitation mechanism $\mathcal{M}$ achieves \emph{truthful convergence} for agents using algorithms $A_1$ and $A_2$ respectively if and only if both sequences of pure strategies converge to truth-telling or both flipping the reports.
$$\Pr\left\{\lim_{t\rightarrow +\infty}\opt^X_t=\lim_{t\rightarrow +\infty}\opt^Y_t=\opt_1\vee\lim_{t\rightarrow +\infty}\opt^X_t=\lim_{t\rightarrow +\infty}\opt^Y_t=\opt_2\right\}=1.$$
\end{definition}
\yiling{Probably doesn't matter. But I find it more nature to define the convergence in terms of $\sigma_t$.}\shi{Yes, they are the same.}\fang{It is not clear to me how to define limits on $(\sigma_t)_t$.  Specifically, $\sigma_t$ is a function from the history up to time t to a pure strategy, and $\sigma_t$ and $\sigma_{t+1}$ are in different spaces.}

\section{Online Learning Algorithms}
In this section, we explore candidates to model agents' learning behavior to replace Bayesian agents' consistent strategies in the literature.  
We first show the conventional no-regret assumption is a necessary but not a sufficient condition for truthful convergence in \cref{sec:noregret}.   
Then in \cref{sec:rewardbased}, we introduce a family of reward-based online learning algorithm to model agents' learning behavior, and show that the family of reward-based online learning algorithms contains several common no-regret algorithms as special cases.   

\subsection{No-regret online learning algorithms}\label{sec:noregret}
We now investigate the relationship between no regret and truthful convergence.  First, we show general no-regret algorithms may not ensure truthful convergence (\cref{thm:impossible}).  However, we show the converse is almost true (\cref{thm:converge2noregret}): If truthful convergence happens, the agents do not have regret when the sequential mechanism is a sequential version of a strongly truthful mechanism.  \fang{no regret in a sequence of interactions and in general decision making}

Given a sequential information elicitation mechanism $\mathcal{M} = (M^X_t, M^Y_t)_{t\ge 1}$, signals $\vx, \vy$, and reports $\hat{\vx},\hat{\vy}$, we define $r_{i,t} = M_t^X(\opt_i(x_1),\dots,\opt_i(x_t),\hat{\vy}_{\le t})$ be the payoff when Alice uses strategy $\opt_i$ and Bob's choices are unchanged.  Then Alice's \emph{regret} is 
$Reg^X(T) = \max_i \sum_{t\le T} r_{i,t}-\sum_{t\le T} r_t$.  Finally, we say that Alice's and Bob's online learning algorithms are \emph{no regret} (on $\mathcal{M}$) if $\E[Reg^X(T)] = \E[Reg^Y(T)] = o(T)$ over the randomness of signals and the algorithms, and we say Alice and Bob are no regret for short. \yiling{I'm not sure that we can define $r_{i,t}$ this way. Clearly, $r_{i,t}$ depends on Bob's strategies. Unchanged from what? Maybe we can't say a mechanism M is no-regret. It's a pair of algorithms achieves no regret in a mechanism M.} 

One may hope that no regret as a behavior assumption for agents is sufficient for achieving  desirable outcome in a mechanism.  
However, the following theorem shows that we cannot have an information elicitation mechanism that achieves truthful convergence for all no-regret agents.
\begin{theorem}[label = thm:impossible]
For any sequential information elicitation mechanism $\mathcal{M}$ of rank $k\in \mathbb{N}$, there exist no-regret algorithms for Alice and Bob so that $\mathcal{M}$ cannot achieve truthful convergence.
\end{theorem}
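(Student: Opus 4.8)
The plan is to exploit a mismatch between the two notions involved: ``no regret'' only constrains each agent's \emph{time-averaged} performance, whereas \cref{def.truthconvergence} demands that the sequences of \emph{pure} strategies $(\opt^X_t)_t$ and $(\opt^Y_t)_t$ converge almost surely. I would therefore construct a pair of no-regret algorithms that nonetheless forces each agent to replay a fixed uninformative strategy infinitely often; this makes $\opt^X_t$ and $\opt^Y_t$ fail to converge while costing only a sublinear amount of regret, so truthful convergence is destroyed even though no regret is retained.

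Concretely, fix any standard full-information no-regret learner (e.g.\ Multiplicative Weight Update) over the four options $\{\opt_1,\opt_2,\opt_3,\opt_4\}$. This is legitimate because, after observing Bob's report $\hat y_t$, Alice can reconstruct the entire counterfactual reward vector $(r_{i,t})_{i=1}^4$ from the known rank-$k$ payment rule $M_t$, her last $k$ signals, and Bob's last $k$ reports; hence each agent faces a genuine experts problem and admits a learner with $\E[Reg(T)] = O(\sqrt{T}) = o(T)$ against an arbitrary sequence of reward vectors. Now pick any infinite set $S \subseteq \mathbb{N}$ of density zero, say $S=\{2^j : j\ge 1\}$, and define a modified algorithm $\tilde A^X$: on rounds $t\notin S$ it outputs the base learner's recommendation, and on rounds $t\in S$ it overrides the output to the uninformative option $\opt_3$ (while still feeding the observed reward vector into the base learner's update). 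Define $\tilde A^Y$ for Bob symmetrically.

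I would then verify that no regret is preserved. Because rewards lie in $[-1,1]$ and $\opt^X_t$ differs from the base recommendation only on $S$, the realized regret satisfies $Reg^X(T) \le Reg^X_{\mathrm{base}}(T) + 2\,|S\cap[1,T]|$ pathwise, where $Reg^X_{\mathrm{base}}$ is the regret the base learner would incur on the \emph{actually realized} reward vectors. The experts guarantee bounds $\E[Reg^X_{\mathrm{base}}(T)] = o(T)$ regardless of how those reward vectors were generated --- in particular regardless of Bob also being modified --- and $|S\cap[1,T]| = O(\log T) = o(T)$ deterministically, so $\E[Reg^X(T)] = o(T)$; the same holds for Bob. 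Finally, since $S$ is infinite and $\opt^X_t = \opt_3 \notin \{\opt_1,\opt_2\}$ for every $t\in S$, the sequence $(\opt^X_t)_t$ visits $\opt_3$ infinitely often and hence cannot converge to $\opt_1$ or to $\opt_2$; the event in \cref{def.truthconvergence} therefore has probability $0$, and $\mathcal{M}$ fails truthful convergence.

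The step I expect to require the most care is the regret bound, precisely because it must hold for an \emph{arbitrary} mechanism and an \emph{adaptively reacting} opponent. The subtlety is that overriding Alice's action on $S$ changes the whole trajectory (Bob reacts, future reward vectors shift), so one cannot simply compare against a fixed clean run. The resolution is to use external regret measured against the realized reward vectors together with the adversarial robustness of the experts learner: the learner attains $o(T)$ regret on \emph{any} sequence of bounded reward vectors, so it is irrelevant that the sequence is itself produced by the coupled modified dynamics. This decoupling is exactly what lets a single construction work uniformly over all rank-$k$ mechanisms.
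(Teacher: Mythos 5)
You propose a genuinely different route from the paper (modifying a generic no-regret learner with sparse overrides, rather than the paper's probabilistic-method derandomization of a colluding uninformative report sequence), but there is a genuine gap in your regret claim, and it is not the subtlety you flagged. The paper's regret compares Alice's \emph{realized} payment $r_t = M^X_t(\hat{x}_{t-k+1},\dots,\hat{x}_t,\hat{y}_{t-k+1},\dots,\hat{y}_t)$ — computed from a report history generated by \emph{different} options in different rounds — against counterfactuals $r_{i,t}$ in which a single option is applied consistently. Your experts reduction controls $\max_i\sum_{t\le T} r_{i,t}-\sum_{t\le T} r_{I_t,t}$, where $I_t$ is the option sampled at round $t$, and then tacitly identifies $\sum_t r_t$ with $\sum_t r_{I_t,t}$ (your bookkeeping $Reg^X(T)\le Reg^X_{\mathrm{base}}(T)+2|S\cap[1,T]|$ assumes the round-$t$ payment is determined by the round-$t$ option alone). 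For rank $k\ge 2$ mechanisms whose payment depends on the agent's \emph{own} last $k$ reports, these differ, and because MWU resamples its option every round the discrepancy $\sum_t(r_{I_t,t}-r_t)$ can be linear. Concretely, take the rank-$2$ mechanism paying Alice $\mathbb{I}[\hat{x}_t=\hat{x}_{t-1}]$ and Bob $\mathbb{I}[\hat{y}_t=\hat{y}_{t-1}]$. Then $r_{3,t}=r_{4,t}=1$ for all $t\ge 2$, while $r_{1,t}=r_{2,t}=\mathbb{I}[x_t=x_{t-1}]$ has expectation bounded away from $1$ under \cref{asm.apriori}; MWU's cumulative rewards for $\opt_3$ and $\opt_4$ differ by at most a constant, so it places weight bounded away from zero on each of them in every round, consecutive realized reports disagree with constant probability, $\E[r_t]$ stays bounded away from $1$, and $\E[Reg^X(T)]=\Omega(T)$. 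Your sparse overrides do not change this, so the proposed pair is not no-regret on this mechanism — and the theorem quantifies over \emph{all} rank-$k$ mechanisms. By contrast, the adaptivity/trajectory-coupling issue you identified as the crux is indeed handled correctly by the pathwise experts bound together with the fact that $r_{\cdot,t}$ is conditionally independent of Alice's round-$t$ randomization.

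The gap is repairable while keeping your architecture: replace MWU by a low-switching base learner (lazy or blocked MWU, or Follow the Lazy Leader, as in online learning against bounded-memory adversaries), so that the expected number of option switches up to $T$ is $o(T)$ (e.g.\ $O(\sqrt{T})$); each switch perturbs at most $2k-1$ realized payments by at most $2$, making the self-memory discrepancy $O(k\sqrt{T})=o(T)$, after which your override argument and your final step (the pure-strategy sequence visits $\opt_3$ infinitely often, so the event in \cref{def.truthconvergence} has probability zero) are sound. The paper avoids the memory issue altogether: in \cref{app:impossible} it uses McDiarmid-style concentration plus the probabilistic method to extract a \emph{fixed deterministic} sequence of uninformative reports ($\opt_3/\opt_4$ every round) with $o(T)$ regret, so no learner's switching needs to be controlled. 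As written, however, your central regret claim is false for some rank-$k$ mechanisms, so the proof does not establish the theorem in the generality stated.
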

The main idea of the proof is that the no-regret assumption cannot prevent Alice and Bob from colluding. In our counterexample, Alice and Bob decide on a no-regret sequence of reports regardless of their signals once the mechanism is announced. 
Technically, we use probabilistic method to show the existence of a deterministic and no-regret sequence of strategies $(\opt^X_t, \opt^Y_t)_{t \ge 1}$ that consists of reporting $1$ or $0$ regardless of private signal, i.e. $\opt_3$ or $\opt_4$.  The formal proof is in \cref{app:impossible}.


The notion of truthful convergence in \cref{def.truthconvergence} provides an ideal truthful guarantee to the mechanism designer.  Here we show that truthful convergence also ensures no regret for agents when the sequential mechanism is a sequential version of a strongly truthful one-shot multi-task mechanism. That is, when the one-shot mechanism admits truthful reporting as a highest-payoff BNE. \yiling{Is it ok if I call it one-shot mechanism?}\fang{Now I define multi-task and sequential mechanism separately, and am not sure we need to emphasize one-shot}

For instance, if a pair of algorithms exhibits truthful convergence on the sequential CA mechanism (\cref{eq:ca}), they are also no regret (on the game). Intuitively, if Bob converges to the truth-telling $\lim_{t\to \infty} \opt^Y_t = \opt_1$, the average expected gain of Alice deviating to $\opt_i$ is equal to the expected gain of deviating to $\opt_i$ when Bob always tells the truth.
The gain is non positive because the CA mechanism is strictly truthful by~\cref{thm.strictne}.Therefore, the expected regrets $\E[Reg^X(T)]$ and $\E[Reg^Y(T)]$ are small.
\Cref{thm:converge2noregret} formalizes and extends the above idea to any strongly truthful multi-task information elicitation mechanism.
\yiling{I found the discussion of multi-task mechanism and sequential mechanism a bit confusing. I wonder whether in Section 2, we should be more clear about how we can convert a multi-task peer prediction mechanism into a sequential one. I guess, we only introduced sequential CA mechanism there. But in this section, our discussion is more general.}

\begin{theorem}[label = thm:converge2noregret]
Let $\bar{M}$ be a strongly truthful multi-task information elicitation mechanism, and $\mathcal{M}$ be a sequential version of $\bar{M}$.  If $\mathcal{M}$ achieve truthful convergence for Alice and Bob using algorithms $A_1$ and $A_2$ respectively, then Alice and Bob are no regret.
\end{theorem}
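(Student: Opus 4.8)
The plan is to exploit the almost-sure convergence hypothesis to reduce, for all but finitely many rounds, the sequential regret to a sum of stage-game deviation terms whose per-round expectation is controlled by the best-response (Bayesian Nash) property of $\bar M$. By \cref{def.truthconvergence} the events $E_1 = \{\lim_t \opt^X_t = \lim_t \opt^Y_t = \opt_1\}$ and $E_2 = \{\lim_t \opt^X_t = \lim_t \opt^Y_t = \opt_2\}$ together have probability $1$, so I would argue on each separately; I describe $E_1$, the case $E_2$ being identical after replacing $\opt_1$ by the flip $\opt_2$ and using that mutual flipping is a permutation equilibrium of $\bar M$. By the symmetry of the two agents it then suffices to bound $\E[Reg^X(T)]$.

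On $E_1$, since strategies live in the finite set $\{\opt_1,\dots,\opt_4\}$, convergence means there is an a.s.-finite random time $\tau$ with $\opt^X_t = \opt^Y_t = \opt_1$ for all $t \ge \tau$. Because $\mathcal{M}$ has rank $k$ and is the sequential version of $\bar M$, for every $t \ge \tau + k$ the window $[t-k+1,t]$ contains only truthful reports of Bob while Alice herself plays $\opt_1$; hence her realized payoff satisfies $r_t = r_{1,t}$ and the counterfactual deviation collapses to a one-shot quantity $r_{i,t} - r_{1,t} = \tilde Z^{(i)}_t$, where I define the signal-only sequence $\tilde Z^{(i)}_t := \bar M^X(\opt_i(\vx_{t-k+1:t}),\vy_{t-k+1:t}) - \bar M^X(\opt_1(\vx_{t-k+1:t}),\vy_{t-k+1:t})$ for all $t$. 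Under \cref{asm.apriori} the windowed signals are i.i.d.\ draws from $P_{X,Y}$, so $(\tilde Z^{(i)}_t)_t$ is a bounded, $k$-dependent stationary sequence with mean $\delta_i := \E[\tilde Z^{(i)}_t]$; and since truth-telling is a BNE of $\bar M$, deviating to $\opt_i$ against a truthful Bob cannot raise Alice's expected stage payoff, giving $\delta_i \le 0$ (with $\delta_1 = 0$).

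From here I would bound the regret from both sides. Writing $Reg^X(T) = \max_i \sum_{t\le T}(r_{i,t}-r_t)$ and splitting each inner sum at $\tau+k$, the prefix $\sum_{t<\tau+k}(r_{i,t}-r_t)$ is bounded in absolute value by $2(\tau+k)$ (payoffs lie in $[-1,1]$), while the tail equals $\sum_{t=\tau+k}^T \tilde Z^{(i)}_t$. By the strong law for $k$-dependent sequences (equivalently, by splitting into $k$ i.i.d.\ subsequences), $\frac1T\sum_{t\le T}\tilde Z^{(i)}_t \to \delta_i$ a.s.; since $\tau$ is a.s.\ finite the random prefix contributes $o(T)$ after dividing by $T$, and taking the maximum over the four strategies (a finite set) yields $\limsup_T \frac1T Reg^X(T) \le \max_i \delta_i \le 0$ a.s. For the matching lower bound, the choice $i=1$ already gives $\sum_{t\le T}(r_{1,t}-r_t) = \sum_{t<\tau+k}(r_{1,t}-r_t) \ge -2(\tau+k)$, whence $\liminf_T \frac1T Reg^X(T)\ge 0$ a.s. Therefore $\frac1T Reg^X(T)\to 0$ a.s.\ on $E_1\cup E_2$, i.e.\ a.s.; as $\frac1T Reg^X(T)$ is bounded by the constant $2$, the bounded convergence theorem gives $\E[Reg^X(T)] = o(T)$, and symmetrically $\E[Reg^Y(T)] = o(T)$.

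I expect the main obstacle to be the interplay between the random, signal-correlated stopping time $\tau$ and the averaging argument: one cannot take expectations round-by-round because the ``truthful regime'' begins at a data-dependent time. The device that resolves this is to replace the actual deviation terms by the fixed signal-only sequence $\tilde Z^{(i)}_t$, defined for \emph{all} $t$, whose Cesàro average converges irrespective of $\tau$, so that the random finite offset is absorbed harmlessly. A secondary point requiring care is the $k$-dependence created by overlapping windows, handled by the standard decomposition into $k$ independent subsequences, and the verification that $\delta_i\le 0$, which uses only the equilibrium (best-response) half of strong truthfulness rather than its strictness.
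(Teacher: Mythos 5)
Your proposal is correct and rests on the same central device as the paper's proof: after the a.s.-finite convergence time, the counterfactual deviation payoffs are replaced by a strategy-independent, signal-only sequence (your $\tilde Z^{(i)}_t$ is exactly the increment of the paper's $F^i_T(\vx,\vy)$), whose mean is non-positive because truth-telling is a BNE of $\bar M$. Where you diverge is the analytic engine. The paper works at finite $T$: it applies a bounded-difference Chernoff bound to each $F^i_T$ to get $\max_i F^i_T \le T^{2/3}$ except with probability $4\exp(-\Omega(T^{1/3}))$, splits the regret on whether the convergence time $t^*$ exceeds $T^{2/3}$, and then must control a \emph{conditional} expectation given the convergence event $\mathcal{E}$, via $\Pr[\neg\mathcal{G}\mid\mathcal{E}]\le \Pr[\neg\mathcal{G}]/\Pr[\mathcal{E}]$ — a step that needs $\Pr[\mathcal{E}]>0$ and some bookkeeping. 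You instead prove $Reg^X(T)/T \to 0$ almost surely, using the SLLN for the $k$-dependent stationary sequence $(\tilde Z^{(i)}_t)_t$ (via stride-$k$ i.i.d.\ subsequences, since windows $[t-k+1,t]$ and $[t+1,t+k]$ are disjoint) together with the a.s.\ finiteness of $\tau$, and then pass to $\E[Reg^X(T)]=o(T)$ by bounded convergence. This sidesteps the conditioning on $\mathcal{E}$ and the $T^{2/3}$ thresholding entirely and is arguably cleaner, at the cost of being purely asymptotic: the paper's concentration route would yield an explicit rate if one had a rate on $\Pr[t^*>T^{2/3}]$, whereas an SLLN argument cannot. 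One shared soft spot deserves mention: in the flipping case both arguments need $\E\bigl[\bar M^X(\opt_i(\vx),\opt_2(\vy))\bigr]\le \E\bigl[\bar M^X(\opt_2(\vx),\opt_2(\vy))\bigr]$, i.e.\ that mutual flipping is itself an equilibrium of $\bar M$; the paper waves this through ``by symmetry,'' while you state the permutation-equilibrium assumption explicitly — for a general strongly truthful $\bar M$ as defined in \cref{def.strongtruth} this is an extra (standard, but implicit) hypothesis, satisfied for instance by the CA mechanism.
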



\subsection{Reward-based Online Learning Algorithms}\label{sec:rewardbased}As shown in \cref{thm:impossible}, the no-regret assumption allow does not guarantee truthful convergence.  
In this section, we introduce a general family of online learning algorithms, \emph{reward-based online learning algorithm}, under the general full feedback bandit setting, and we will apply these algorithms in sequential information elicitation mechanisms later.\fang{New}  Informally, a reward-based online learning algorithm decides each round's strategy using a fixed update function that depends only on the accumulative reward.

For simplicity, we only consider algorithms on four strategies $\{\opt_1, \opt_2, \opt_3, \opt_4\}$ which can be extended easily.   Recall that the payoff of choosing option $\opt_i,i\in[4]$ at round $t$ is $r_{i,j}$ when others' choices are unchanged. We denote the accumulated payoffs of these four options as $R_{i,t}=\sum_{j=1}^tr_{i,j}$ for $i\in[4]$. 
Symmetrically, we use $S_{i,t}$ and $s_{i,t}$ to represent accumulated payoffs and the payoff of turning to choose option $\opt_i$ in the $t^{th}$ round for Bob.  For example, for our specific peer prediction game using CA mechanism in \cref{eq:ca}, $r_{1,t}=\mathbb{I}[x_t=\hat{y}_t]-\mathbb{I}[x_t=\hat{y}_{t-1}]$ and therefore, $R_{1,t}=\sum_{j=1}^t(\mathbb{I}[x_t=\hat{y}_t]-\mathbb{I}[x_t=\hat{y}_{t-1}])$ for Alice. 

\fang{out of place: We suppose our learning algorithm uses $R_{i,t},i\in[4]$ to determine the option that will be used. Due to Lemmas~\ref{lemma.boundeduninformative} and \ref{lemma.sumofpayoffs}, CA mechanism is powerful for such algorithms.}

We consider a family of \emph{reward-based online learning algorithms} $\mathcal{A}$ that use an \emph{update function} $f:\mathbb{R}^4\rightarrow \triangle^3$, and choose $\opt_i$ with probability $f_i(R_{1,t-1},R_{2,t-1},R_{3,t-1},R_{4,t-1})$ for $i\in[4]$ in the $t^{th}$ round, where $f_i$ is the $i^{th}$ coordinate of $f$. A such mechanism based on accumulated payoffs is denoted by $A_f$. We have three assumptions for the update function $f$, which are all very natural. The first two require that $f$ is exchangeable
and preserves ordering.
\begin{assumption}[name=Exchangeability of $f$, label=asm.symf]
For any $R_1,R_2,R_3,R_4\in\mathbb{R}$ and an arbitrary permutation of them $R_{i_1},R_{i_2},R_{i_3},R_{i_4}$,  $f_{i_j}(R_1,R_2,R_3,R_4)=f_j(R_{i_1},R_{i_2},R_{i_3},R_{i_4})$ for all  $j\in[4]$.
\end{assumption}

\begin{assumption}[name=Order preservation of $f$, label=asm.consistf]
For any $R_1,R_2,R_3,R_4\in\mathbb{R}$ and suppose that $R_{i_1},R_{i_2},R_{i_3},R_{i_4}$ is a non-increasing order of them, for $f$ we have $
    f_{i_1}(R_1,R_2,R_3,R_4)\geq f_{i_2}(R_1,R_2,R_3,R_4)\geq f_{i_3}(R_1,R_2,R_3,R_4)\geq f_{i_4}(R_1,R_2,R_3,R_4)$. 
\end{assumption}

Finally we consider the strategy chosen by the update function $f$ when the accumulated payoff of an strategy is much higher than that of other strategies (\cref{asm.fullsupf}).  Appendix~\ref{app.justasm} shows that the assumption is necessary for reward-based online learning algorithms to achieve no regret for any online decision problem.

\begin{assumption}[name=Full exploitation of $f$, label=asm.fullsupf] $\lim_{R_1-\max\{R_2,R_3,R_4\}\rightarrow+\infty}f_1(R_1,R_2,R_3,R_4)=1$.
\end{assumption}


Now we show that the family of reward-based online learning algorithms $\mathcal{A}$ satisfying \cref{asm.symf,asm.fullsupf,asm.consistf} contains several classic no-regret online learning algorithms \cite{freund1997decision, kalai2005efficient}.
\begin{theorem}[label = thm:specialcases]
$\mathcal{A}$ contains Follow the Perturbed Leader (FPL) algorithm, and Multiplicative Weights algorithm as special cases. Corresponding $f$'s for them are listed as below:\yiling{Where is the $f$ for replicator dynamics?}\shi{In general, replicator dynamics is not in our family. It can be seen as a reward-based online algorithm on in the particular binary peer prediction problem.}
\begin{description}
\item[Multiplicative Weight algorithm] $f_i^{\text{hedge 1}}(R_1,R_2,R_3,R_4)=\frac{e^{\beta R_i}}{\sum_{j\in[4]}e^{\beta R_j}}$ for $i\in [4]$ 
\item[FPL algorithm] Given a noise distribution $\mathcal{N}$ on scalars, $f_i^{\text{FPL}}(R_1,R_2,R_3,R_4)=\Pr\left\{R_i+p_i=\max_{j\in[4]}\{R_j+p_j\}\middle|p_j\overset{\mathrm{iid}}{\sim} \mathcal{N},j\in[4]\right\}$ for $i\in[4]$.
\end{description}
\end{theorem}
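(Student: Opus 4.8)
The plan is to check, for each of the two update functions, that it maps $\mathbb{R}^4$ into $\triangle^3$ and satisfies the three defining properties of $\mathcal{A}$: exchangeability (\cref{asm.symf}), order preservation (\cref{asm.consistf}), and full exploitation (\cref{asm.fullsupf}). Membership then follows by definition.

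For the Multiplicative Weight update $f_i^{\text{hedge 1}} = e^{\beta R_i}/\sum_{j}e^{\beta R_j}$ with $\beta>0$, every step is an elementary softmax computation. The coordinates are nonnegative and sum to $1$, so the map lands in $\triangle^3$; exchangeability is immediate because the denominator is symmetric in $(R_1,\dots,R_4)$, so permuting the arguments permutes the numerators accordingly. For the last two assumptions I would rewrite $f_1^{\text{hedge 1}} = 1/\bigl(1+\sum_{j\neq 1}e^{\beta(R_j-R_1)}\bigr)$: order preservation follows since $x\mapsto e^{\beta x}$ is increasing (a larger $R_i$ gives a larger numerator over a common positive denominator), and full exploitation follows because $R_1-R_j\ge R_1-\max\{R_2,R_3,R_4\}\to+\infty$ drives every term $e^{\beta(R_j-R_1)}\to 0$, hence $f_1^{\text{hedge 1}}\to 1$.

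For FPL the argument is more delicate, and I expect \emph{order preservation} to be the crux. First I would assume the noise law $\mathcal{N}$ is atomless, so that $\argmax_j(R_j+p_j)$ is almost surely unique; this is exactly what forces $\sum_i f_i^{\text{FPL}}=1$ and places the map in $\triangle^3$. Exchangeability follows since $p_1,\dots,p_4$ are i.i.d., so their joint law is invariant under relabeling and permuting rewards and indices jointly leaves the winning probability unchanged. To establish order preservation, suppose $R_a\ge R_b$; I would show $f_a^{\text{FPL}}\ge f_b^{\text{FPL}}$ by a coupling that swaps the two noise coordinates $p_a\leftrightarrow p_b$ while fixing the others. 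This swap preserves the joint law (the $p$'s are i.i.d.), and a short case check shows that on the event ``$b$ is the $\argmax$'' the swapped configuration makes ``$a$ the $\argmax$'': the defining inequality $R_b+p_b>R_a+p_a$ gives $p_a-p_b<-(R_a-R_b)\le R_a-R_b$, hence $R_a+p_b>R_b+p_a$, while $R_a\ge R_b$ handles every remaining coordinate $k$ via $R_a+p_b\ge R_b+p_b>R_k+p_k$. Thus the swap injects the ``$b$ wins'' event into the ``$a$ wins'' event, giving $f_b^{\text{FPL}}\le f_a^{\text{FPL}}$, and the general ordered chain in \cref{asm.consistf} follows by applying this pairwise comparison.

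Finally, full exploitation for FPL follows from a union bound: writing $\Delta=R_1-\max\{R_2,R_3,R_4\}$, we have $1-f_1^{\text{FPL}}\le\sum_{j\neq 1}\Pr\{p_j-p_1>R_1-R_j\}\le 3\,\Pr\{p_2-p_1>\Delta\}$, which tends to $0$ as $\Delta\to\infty$ because $p_2-p_1$ is an almost-surely finite real random variable (no moment assumptions on $\mathcal{N}$ are needed). The main obstacle is getting the swap in the order-preservation step exactly right; once the coupling is set up and shown to be measure-preserving, the remaining verifications are routine.
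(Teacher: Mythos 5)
Your proposal is correct, and its skeleton is the same as the paper's: verify that each update function lands in $\triangle^3$ and satisfies \cref{asm.symf,asm.consistf,asm.fullsupf}. For the Multiplicative Weights update your argument is essentially identical to the paper's (symmetry of the softmax for exchangeability, monotonicity of $x\mapsto e^{\beta x}$ for order preservation, and the rewriting $f_1 = 1/\bigl(1+\sum_{j\neq 1}e^{\beta(R_j-R_1)}\bigr)$ for full exploitation), and your union bound for FPL's full exploitation matches the paper's tail-probability argument. Where you genuinely add something is FPL's order preservation: the paper dismisses it as ``obvious by the definition of $f^{\text{FPL}}$ because each of the $p_i$'s are sampled from an identical noise distribution,'' whereas you supply an actual proof via the measure-preserving swap coupling $p_a\leftrightarrow p_b$, and your case check (the defining inequality $R_b+p_b>R_a+p_a$ forces $R_a+p_b>R_b+p_a$ when $R_a\ge R_b$, while $R_a+p_b\ge R_b+p_b>R_k+p_k$ handles the other coordinates) is sound; the pairwise comparisons then chain to the full ordering in \cref{asm.consistf}. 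You also notice a subtlety the paper silently glosses over: with atoms in $\mathcal{N}$, ties occur with positive probability and the stated $f_i^{\text{FPL}}$ need not sum to $1$. Your atomless hypothesis is a clean fix, but note it slightly narrows the theorem relative to the paper's intent --- the paper treats Follow the Leader (degenerate, hence atomic, noise) as a member of the FPL family --- so to recover full generality you would instead redefine $f$ with uniform tie-breaking, i.e.\ split the mass of a $k$-way tie equally; all three assumptions survive this modification by the same symmetry and coupling arguments.
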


In the binary peer prediction problem using CA mechanism, we can further show that replicator dynamics and linear updating multiplicative weight algorithm \cite{arora2012multiplicative} are both in $\mathcal{A}$ in \cref{app.morealgina}.

\begin{remark}
Our reward based online learning is very similar to the mean-based algorithm in \cite{braverman2017selling} and both use the accumulated rewards to characterize the algorithm's choice.  Additionally, like the mean-based algorithms, our reward based online algorithm may contain algorithms with regret in genral game, e.g., follow the leader.   
The family of reward based online algorithms uses an identical update function across all rounds.  Thus some no-regret algorithms, e.g., $\epsilon$-greedy with time decreasing $\epsilon$, doesn't belong to family.  Finally, if a mean-based algorithm uses the same update function in each round, then the mean-based algorithm belongs to reward based algorithms.  Additionally, \fang{to complete}then the update function needs to satisfy \cref{asm.fullsupf}.\yiling{I don't understand the purpose of the last sentaence. Do you mean that any mean-based algorithm that satisfies \cref{asm.fullsupf} also belongs to our family of reward based algorithms?}
\end{remark}

\section{Truthful Convergence of CA Mechanism on Reward-based Algorithms}
\label{sec.main}
Now we present our main result.  We will show that the sequential CA mechanism can achieve truthful convergence if both agents use reward-based online learning algorithms from $\mathcal{A}$.  This convergence result suggests that the classical CA is robust even when agents deviate from Bayesian rational behavior and use a general family of online learning algorithms.

\begin{theorem}\label{thm.converge}
Under \cref{asm.apriori,asm.poscorr}, the binary-signal, sequential CA mechanism as defined in \cref{eq:ca} achieves truthful convergence when agents use reward-based algorithms $A_f$ and $A_g$, where the update functions $f$ and $g$ satisfy \cref{asm.symf,asm.fullsupf,asm.consistf}.
%
\end{theorem}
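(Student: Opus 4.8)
The plan is to collapse the four-dimensional reward state into a single coupled pair and then run a drift/martingale argument. First I would record the algebraic structure of the CA payoffs. Since exactly one of $\opt_1,\opt_2$ and exactly one of $\opt_3,\opt_4$ matches any fixed report, the per-round payoffs satisfy $r_{1,t}+r_{2,t}=0$ and $r_{3,t}+r_{4,t}=0$, so $R_{1,t}=-R_{2,t}$ and $R_{3,t}=-R_{4,t}$ for all $t$ (and likewise for Bob's $S_{i,t}$). Moreover the uninformative accumulated rewards telescope: $R_{3,t}=\mathbb{I}[\hat y_t=1]-\mathbb{I}[\hat y_0=1]\in\{0,1\}$, hence $|R_{3,t}|,|R_{4,t}|\le 1$ stay bounded. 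Consequently the whole state is governed by the pair $(R_{1,t},S_{1,t})$, and \cref{asm.symf,asm.consistf,asm.fullsupf} translate into: when $R_{1,t}\to+\infty$ Alice plays $\opt_1$ with probability $\to 1$, when $R_{1,t}\to-\infty$ she plays $\opt_2$ with probability $\to1$, and symmetrically for Bob in terms of $S_{1,t}$.

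Next I would pin down the drift of $R_{1,t}$. Writing $q_t$ for Bob's mixing at round $t$ and conditioning on $\mathcal{F}_t$, the expected increment of $R_{1,t}$ is a convex combination of the four options' one-step gains minus the bounded reference term $\mathbb{I}[x_t=\hat y_{t-1}]$. The key computation, using \cref{asm.apriori}, is that when Bob reports truthfully the average one-step gain of Alice's truth-telling equals $2\,\Cov(X,Y)=2\,(P_{X,Y}(1,1)P_{X,Y}(0,0)-P_{X,Y}(1,0)P_{X,Y}(0,1))$, which is strictly positive by \cref{asm.poscorr}; by the flip symmetry the same positive drift holds for $\opt_2$ when Bob flips. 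This is the self-fulfilling mechanism: a large $S_{1,t}$ (Bob committed to truth) forces $R_{1,t}$ to drift upward, and vice versa.

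With the drift in hand I would establish lock-in. Fix a threshold $L$ and call $G$ the event that both $R_{1,t}$ and $S_{1,t}$ exceed $L$. Inside $G$ both agents play $\opt_1$ with probability at least $1-\eta(L)$ with $\eta(L)\to0$, so the conditional drift of each coordinate is at least some $\delta>0$. Decomposing $R_{1,t}$ into its predictable compensator plus a bounded-increment martingale and applying Azuma--Hoeffding (equivalently, showing $e^{-\lambda R_{1,t}}$ is a supermartingale while in $G$), I would show that once the process is deep enough inside $G$ it stays there forever and $R_{1,t},S_{1,t}\to+\infty$ at a linear rate with probability arbitrarily close to $1$. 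A linearly growing gap makes the per-round exploration probabilities summable, so Borel--Cantelli yields that the realized pure strategies stabilize at $\opt_1$; the symmetric good region gives the $\opt_2$ alternative, matching \cref{def.truthconvergence}.

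The remaining and hardest step is to show that $G$ (or its flip counterpart) is reached almost surely. I would argue that from any configuration there is a probability bounded below that exploration --- guaranteed by \cref{asm.consistf}, which keeps every option's probability positive when the accumulated rewards are comparable --- together with the positive-correlation drift drives $(R_{1,t},S_{1,t})$ across the threshold $L$, and then upgrade ``positive probability per attempt'' to ``almost sure eventual entry'' via a L\'evy zero--one / conditional Borel--Cantelli argument. The main obstacle is controlling the coupled two-dimensional walk $(R_{1,t},S_{1,t})$ in the bad and intermediate regions: there the per-step drift sign is genuinely ambiguous, since it depends on the previous report $\hat y_{t-1}$ and on anti-coordinated configurations (one agent near truth, the other near flipping) where the two coordinates push in opposite directions. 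Handling this is exactly what the bad/intermediate/good decomposition described in the introduction and the martingale estimates are designed for, and it is where I expect the bulk of the technical work to lie.
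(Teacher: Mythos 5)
Your overall architecture coincides with the paper's: the reduction to the pair $(R_{1,t},S_{1,t})$ via $R_{1,t}+R_{2,t}=R_{3,t}+R_{4,t}=0$ and boundedness of the uninformative rewards is exactly \cref{lemma.boundeduninformative,lemma.sumofpayoffs}; your drift computation is correct (indeed $2\Cov(X,Y)=2\left(P_{X,Y}(1,1)P_{X,Y}(0,0)-P_{X,Y}(1,0)P_{X,Y}(0,1)\right)$, which matches the paper's expected truthful payoff and the lower bound in \cref{lemma.delta}); and your Azuma--Hoeffding lock-in in the good region is the paper's \cref{lemma.expectr12increase,lemma.convergecondition}. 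The genuine gap is in the step you yourself flag as hardest: almost-sure entry into the good region. Your plan --- ``from any configuration there is a probability bounded below'' that exploration plus drift pushes $(R_{1,t},S_{1,t})$ across the threshold, followed by a conditional Borel--Cantelli upgrade --- fails precisely on the bad events. In $\mathcal{E}^{1,2}_t$ one has $R_{1,t}>c_0$ and $S_{1,t}=-S_{2,t}<-c_0$, and $S_{1,t}$ can be arbitrarily negative; since each accumulated reward moves by at most $1$ per round, the number of rounds needed to reach either good region is unbounded over such configurations, so forcing an explicit path costs probability exponential in that unbounded distance and no uniform lower bound exists. With no uniform per-attempt probability, the L\'evy/conditional Borel--Cantelli argument has nothing to sum. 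The paper closes this exact hole with a separate negative-drift stage (its Step 2): inside $\mathcal{E}^{1,2}_t$, Bob plays $\opt_2$ with probability at least $1-\delta$ for a whole block, so $R_{1,t}-R_{2,t}$ drifts down by at least $100$ per block (\cref{lemma.delta,lemma.expectr12}); a stopped supermartingale plus Azuma--Hoeffding and Borel--Cantelli then show the bad event terminates almost surely (\cref{lemma.alwaysnicesituation}, the gambler's-ruin step). Only from the resulting intermediate states, where after symmetry one may take $R_{1,T}\geq 0$ and $S_{1,T}\geq -c_0$, is the good region within the bounded distance $4(u+c_0)+100$, which is what lets the explicit process $\mathcal{P}_T$ of \cref{lemma.inflargegap} deliver the uniform constant $\lambda_u$ that your infinitely-many-attempts argument (the paper's \cref{corr.inflargegap}) actually requires. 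You name this obstacle honestly, but the mechanism you propose for it is the one that does not work there.

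A secondary flaw: in the lock-in step you conclude stabilization of the realized pure strategies because linear growth of the gap ``makes the per-round exploration probabilities summable.'' \cref{asm.fullsupf} guarantees only that $f_1\rightarrow 1$ as the gap diverges, with no rate, so summability does not follow (an update function with $1-f_1$ decaying like the reciprocal of the logarithm of the gap is consistent with all three assumptions), and the first Borel--Cantelli lemma cannot be invoked as you state. The paper instead derives convergence from the persistence of the nested events $\mathcal{E}^{1,1}\left(\lfloor\frac{u}{2}\rfloor+i\right)$ along the block subsequence in \cref{lemma.convergecondition} combined with \cref{asm.fullsupf}; the paper is admittedly terse at this final deduction as well, but your version asserts a quantitative rate that the assumptions do not provide.
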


Note that given agents' online learning algorithm and the payment function, the sequence of accumulated reward vector $(R_{1,t}, R_{2,t}, R_{3,t},R_{4,t},S_{1,t}, S_{2,t}, S_{3,t}, S_{4,t})_{t\ge 1}$ forms a stochastic process and we define $\mathcal{H}_t$ as the game history of the rewards and private signals in the first $t$ round.  Additionally, if the accumulated reward of truth-telling $R_{1,t}, S_{1,t}$ is much larger than the others', we can show Alice and Bob converge to the truth-telling by \cref{asm.consistf}. \yiling{Is it by assumption 3.4 or assumption 3.5?} Thus, it is sufficient for us to track the evolution of accumulated reward vector.  Though the process of accumulated reward vector is not a Markov chain because the payment function~\cref{eq:ca} depends on reports in two rounds, we can still use ideas from semi-martingale to track the process.

Before proving our main result, we first present two properties of CA mechanism for our binary signal peer prediction problem. Lemma~\ref{lemma.boundeduninformative} shows that the accumulated payoffs of uninformative strategies $\opt_3,\opt_4$ $R_{3,t},R_{4,t},S_{3,t}$, and $S_{4,t}$  are always bounded. 
\begin{lemma}\label{lemma.boundeduninformative}
Given the game defined in \cref{thm.converge}, for any round $t$, the accumulated payoffs $R_{3,t},R_{4,t},S_{3,t}$, and $S_{4,t}$ for two agents are bounded by $[-1,1]$.
\end{lemma}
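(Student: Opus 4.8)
The plan is to exploit the fact that the two uninformative strategies $\opt_3$ and $\opt_4$ produce a report that is a fixed constant in every round, which collapses each accumulated payoff into a telescoping sum.

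First I would instantiate the per-round payoff for Alice. By the definition $r_{i,t} = M_t^X(\opt_i(x_1),\dots,\opt_i(x_t),\hat{\vy}_{\le t})$ together with \cref{eq:ca}, we have $r_{i,t} = \mathbb{I}[\opt_i(x_t)=\hat{y}_t]-\mathbb{I}[\opt_i(x_t)=\hat{y}_{t-1}]$. The crucial observation is that $\opt_3(x_t)=1$ and $\opt_4(x_t)=0$ for every $t$, so in each case the induced report equals a constant $c\in\{0,1\}$ that does not depend on $t$. Writing $g_j := \mathbb{I}[c=\hat{y}_j]$, the payoff becomes the first difference $r_{i,t} = g_t - g_{t-1}$.

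Next I would sum over $j\le t$ to obtain the telescoping identity
$$R_{i,t} = \sum_{j=1}^t (g_j-g_{j-1}) = g_t - g_0,$$
and then substitute the convention $\hat{y}_0 = 0$, which gives $g_0 = \mathbb{I}[c=0]$. For $\opt_3$ (so $c=1$) this yields $R_{3,t} = \mathbb{I}[\hat{y}_t=1] \in\{0,1\}$, and for $\opt_4$ (so $c=0$) it yields $R_{4,t} = \mathbb{I}[\hat{y}_t=0]-1 \in\{-1,0\}$; both lie in $[-1,1]$. Running the identical argument with Alice's and Bob's roles exchanged bounds $S_{3,t}$ and $S_{4,t}$ in the same way.

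There is no genuine obstacle here: the only point requiring care is the boundary term $\hat{y}_0=0$, which fixes the sign of the constant offset and makes the two-sided bound tight (the endpoints $\pm1$ are actually attained depending on the final report). Everything else is the elementary cancellation that the CA payment structure guarantees once the report is held constant across rounds, so the lemma follows immediately once the telescoping is exposed.
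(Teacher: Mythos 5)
Your proof is correct and follows essentially the same route as the paper's: both telescope the first differences $r_{i,j}=\mathbb{I}[c=\hat{y}_j]-\mathbb{I}[c=\hat{y}_{j-1}]$ (with $c$ the constant report of $\opt_3$ or $\opt_4$) to get $R_{i,t}=\mathbb{I}[c=\hat{y}_t]-\mathbb{I}[c=\hat{y}_0]\in[-1,1]$, and handle Bob symmetrically. Your substitution of the convention $\hat{y}_0=0$ to pin down the offset is a small refinement the paper omits, but it changes nothing essential.
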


The second one, Lemma~\ref{lemma.sumofpayoffs}, tells us that the summation of accumulated payoffs of $\opt_1$ and $\opt_2$ is always fixed and is equal to the summation of accumulated payoffs of uninformative ones $\opt_3$ and $\opt_4$ for both agents. The proofs of these two lemmas are in \cref{app.propertyca}.
\begin{lemma}\label{lemma.sumofpayoffs}
Given the game defined in \cref{thm.converge}, for any round $t$, Alice has $R_{1,t}+R_{2,t}=R_{3,t}+R_{4,t}=0$, and Bob has $S_{1,t}+S_{2,t}=S_{3,t}+S_{4,t}=0$.
\end{lemma}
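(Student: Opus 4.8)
The plan is to establish the two claimed identities \emph{round by round}: I will show that $r_{1,t}+r_{2,t}=0$ and $r_{3,t}+r_{4,t}=0$ hold for every individual round $t$, after which the statements about the accumulated payoffs follow immediately by summing $R_{i,t}=\sum_{j\le t} r_{i,j}$. The starting observation is that the CA payment in \cref{eq:ca} has rank $2$, so Alice's round-$t$ counterfactual payoff depends only on her round-$t$ report $\opt_i(x_t)$ together with Bob's (held-fixed) reports $\hat{y}_t$ and $\hat{y}_{t-1}$; explicitly, $r_{i,t}=\mathbb{I}[\opt_i(x_t)=\hat{y}_t]-\mathbb{I}[\opt_i(x_t)=\hat{y}_{t-1}]$. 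This is what makes each counterfactual payoff well-defined on a per-round basis and reduces the whole lemma to a pointwise cancellation.

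The key elementary fact I would isolate is a binary complementarity: for any $a,b\in\{0,1\}$, exactly one of $a$ and $1-a$ equals $b$, so $\mathbb{I}[a=b]+\mathbb{I}[1-a=b]=1$; likewise, using the two constants, $\mathbb{I}[1=b]+\mathbb{I}[0=b]=1$. For the pair $\opt_1,\opt_2$ I use $\opt_1(x_t)=x_t$ and $\opt_2(x_t)=1-x_t$, which are complementary, so applying the identity with $b=\hat{y}_t$ and then with $b=\hat{y}_{t-1}$ gives
\[
r_{1,t}+r_{2,t}=\bigl(\mathbb{I}[x_t=\hat{y}_t]+\mathbb{I}[1-x_t=\hat{y}_t]\bigr)-\bigl(\mathbb{I}[x_t=\hat{y}_{t-1}]+\mathbb{I}[1-x_t=\hat{y}_{t-1}]\bigr)=1-1=0.
\]
For the uninformative pair I use $\opt_3(x_t)=1$ and $\opt_4(x_t)=0$ together with the second identity, obtaining $r_{3,t}+r_{4,t}=0$ in exactly the same way. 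Summing each identity over $j=1,\dots,t$ then yields $R_{1,t}+R_{2,t}=0$ and $R_{3,t}+R_{4,t}=0$.

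Finally, Bob's identities $S_{1,t}+S_{2,t}=S_{3,t}+S_{4,t}=0$ follow by the identical argument after swapping the roles of Alice and Bob, which is legitimate because the CA payment in \cref{eq:ca} is symmetric under exchanging the two agents. I do not anticipate a genuine obstacle: the only point requiring a moment of care is that the boundary convention $\hat{y}_0=0$ (invoked when $t=1$) is still a valid value in $\{0,1\}$, so the complementarity identity applies unchanged to the $\hat{y}_{t-1}$ term; everything else is the pointwise cancellation described above.
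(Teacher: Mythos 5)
Your proof is correct and follows essentially the same route as the paper's: both rest on the binary complementarity $\mathbb{I}[a=b]+\mathbb{I}[1-a=b]=1$ applied to the agreement and punishment terms of the CA payment, the only cosmetic difference being that you cancel round by round ($r_{1,t}+r_{2,t}=0$) while the paper sums first and cancels $t-t=0$. Your explicit check of the boundary convention $\hat{y}_0=0$ is a sound extra precaution that the paper leaves implicit.
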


We now sketch the proof that consists of four steps. Informally, using \cref{lemma.boundeduninformative,lemma.sumofpayoffs}, the first step says that the uninformative strategies $\opt_3$ and $\opt_4$ can not completely dominate other strategies.  Specifically, both agents can not choose an option between $\opt_3$ and $\opt_4$ with a probability larger than $0.75$. 
With the first step, \cref{lemma.boundeduninformative} and \cref{lemma.sumofpayoffs}, we only need to focus on agents' reward of the truthtelling $\opt_1$ shown in \cref{Fig.schemfig}. We partition the space into three types of events.  \emph{Good events} happen when $R_1$, $S_1$ are both very large or very small, \emph{bad events} happen when one of $R_1,S_1$ is very large and one of them is very small, and \emph{intermediate states} are the states between them. The second step removes the possibility that Alice and Bob continue using different reports $\opt_1$ and $\opt_2$. Therefore, we can always escape "bad events" and enter "intermediate states". The third step further shows that if the game is in an intermediate state, there exists a constant probability that the game will get into a good events that leads to truthful convergence in a constant number of rounds. Hence, after the game enters "good events", which leaves us the final step: showing their strategies converges to either both truth telling $\opt_1$ or flipping $\opt_2$ truthful convergence.

Now we discuss each steps in more details but defer all the proofs to \cref{app.mainproof}.
\begin{figure}[htbp] 
\centering
\includegraphics[width=0.55\textwidth]{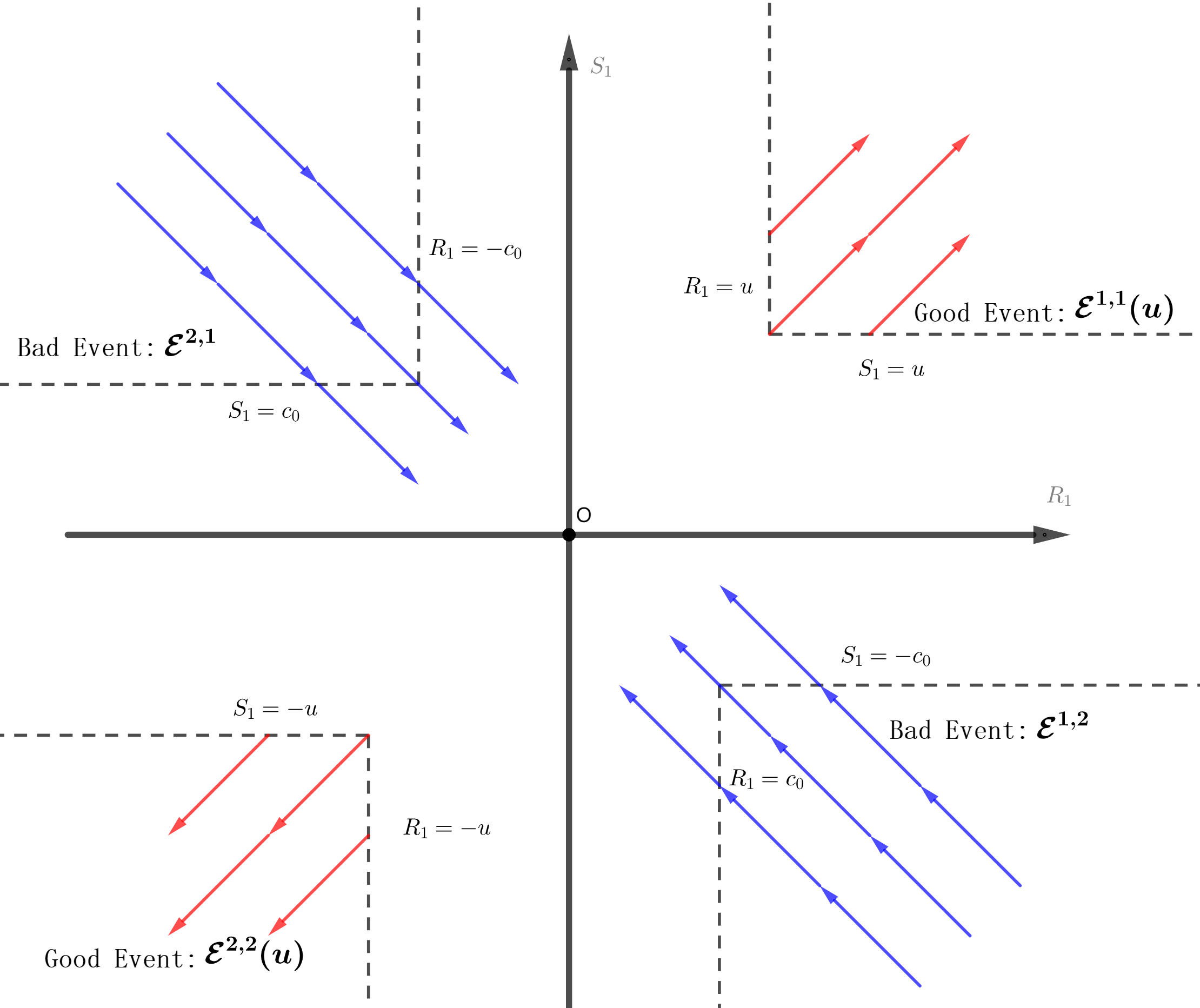}
\caption{A schematic diagram of behaviors of $R_{1,t}$ and $S_{1,t}$.}
\label{Fig.schemfig}
\end{figure}

\textbf{Step 1: Choosing \texorpdfstring{$\opt_1,\opt_2$}{Lg} with Nonzero Probability}

Combining Lemmas~\ref{lemma.boundeduninformative} and \ref{lemma.sumofpayoffs}, we have the following Lemma~\ref{lemma.nonoptimality34}, which completes step 1. 

\begin{lemma}\label{lemma.nonoptimality34}
Given the game defined in \cref{thm.converge}, for any round $t\ge 0$ and $i\in\{1,2\}$, if $\sum_{j=1}^tr_{i,j}\geq0$, the probability for Alice to choose $\opt_i$ is larger than $\frac{1}{4}$; if $\sum_{j=1}^ts_{i,t}\geq0$, the probability for Bob to choose $\opt_i$ is larger than $\frac{1}{4}$.
\end{lemma}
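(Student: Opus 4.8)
The plan is to show that the hypothesis $R_{i,t}=\sum_{j\le t}r_{i,j}\ge 0$ forces the option $\opt_i$ to be a maximizer of the accumulated reward among all four options; once $\opt_i$ is a maximizer, order preservation (\cref{asm.consistf}) together with the fact that $f$ outputs a distribution (so the four probabilities sum to one) immediately yields $f_i\ge\tfrac14$, which is exactly what is needed to rule out either uninformative strategy being played with probability exceeding $\tfrac34$. It suffices to treat $i=1$; the case $i=2$ is identical after swapping the two informative coordinates.

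By \cref{lemma.sumofpayoffs} I already have $R_{2,t}=-R_{1,t}$ and $R_{4,t}=-R_{3,t}$, so $R_{1,t}\ge 0$ gives $R_{1,t}\ge R_{2,t}$, and it remains only to compare $R_{1,t}$ with the uninformative rewards. First I would record the exact form of $R_{3,t}$: since $\opt_3$ reports $1$ on every task, its per-round payoff in \cref{eq:ca} is $r_{3,j}=\mathbb{I}[1=\hat y_j]-\mathbb{I}[1=\hat y_{j-1}]=\hat y_j-\hat y_{j-1}$, which telescopes to $R_{3,t}=\hat y_t\in\{0,1\}$ (hence $R_{4,t}=-\hat y_t$), a sharpening of \cref{lemma.boundeduninformative}. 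Thus $R_{4,t}\le 0\le R_{1,t}$ is automatic, and the only dangerous configuration is $\hat y_t=1$ (so $R_{3,t}=1$) together with $R_{1,t}=0$.

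The heart of the argument, and the step I expect to be the main obstacle, is to rule out exactly this configuration through a parity coupling between $R_{1,t}$ and $R_{3,t}$. The key observation is that $r_{1,j}=\mathbb{I}[x_j=\hat y_j]-\mathbb{I}[x_j=\hat y_{j-1}]$ is nonzero only when Bob switches his report between rounds $j-1$ and $j$ (i.e. $\hat y_{j-1}\ne\hat y_j$), in which case it equals $\pm1$. Hence $R_{1,t}$ is a sum of $N$ terms each $\pm1$, where $N$ is the number of switches in Bob's stream $\hat y_0=0,\hat y_1,\dots,\hat y_t$, so $R_{1,t}\equiv N\pmod 2$; since each switch flips the report and $\hat y_0=0$, we also have $\hat y_t\equiv N\pmod 2$, giving the identity $R_{1,t}\equiv\hat y_t=R_{3,t}\pmod 2$. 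I stress this coupling because \cref{lemma.boundeduninformative,lemma.sumofpayoffs} alone do not suffice: the abstract reward vector $(R_1,R_2,R_3,R_4)=(0,0,1,-1)$ satisfies both lemmas yet makes $\opt_1$ non-maximal, and a Follow-the-Leader update (which lies in $\mathcal{A}$) would then place probability $0$ on $\opt_1$. It is precisely the parity identity, a genuine consequence of the CA dynamics rather than of the two lemmas, that forbids $(R_{1,t},R_{3,t})=(0,1)$.

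With the parity identity in hand the comparison is immediate: if $R_{1,t}\ge 0$ is even then $\hat y_t=0$, so $R_{3,t}=R_{4,t}=0\le R_{1,t}$; if $R_{1,t}\ge 0$ is odd then $R_{1,t}\ge 1=R_{3,t}$ while $R_{4,t}=-1$. In either case $R_{1,t}=\max\{R_{1,t},R_{2,t},R_{3,t},R_{4,t}\}$, so $\opt_1$ sits at the top of the ordering used in \cref{asm.consistf}; order preservation then gives $f_1\ge f_2,f_3,f_4$, and since the four probabilities sum to one this forces $f_1\ge\tfrac14$. The identical computation for Bob (using $S_{1,t}\equiv\hat x_t\pmod 2$) gives the second half. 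One caveat concerns strictness: at degenerate states where all four accumulated rewards coincide, most notably $t=0$, exchangeability (\cref{asm.symf}) forces $f=(\tfrac14,\tfrac14,\tfrac14,\tfrac14)$, so the bound is genuinely $\ge\tfrac14$ there rather than strict. I would therefore either state the lemma with $\ge$ or confine the strict claim to rounds where $\opt_i$ strictly dominates some other option and $f$ is not locally flat, noting that the downstream application only needs $f_i\ge\tfrac14$ in order to conclude $f_3,f_4\le\tfrac34$.
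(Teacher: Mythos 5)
Your proof is correct, and at the top level it follows the same skeleton as the paper's: show that the hypothesis $R_{i,t}\ge 0$ forces $\opt_i$ to be a (weak) maximizer of the accumulated-reward vector, then invoke order preservation (\cref{asm.consistf}) and the normalization $\sum_i f_i=1$ to get $f_i\ge\tfrac14$. Where you genuinely diverge is in the one step you correctly identify as the crux: ruling out the state $(R_1,R_2,R_3,R_4)=(0,0,1,-1)$, which \cref{lemma.boundeduninformative,lemma.sumofpayoffs} alone cannot exclude. The paper does this by enumerating the five possible per-round reward vectors $(0,0,0,0)$, $(\pm1,\mp1,\pm1,\mp1)$, $(\pm1,\mp1,\mp1,\pm1)$ and observing that $r_{1,t}-r_{2,t}+r_{3,t}-r_{4,t}\equiv 0\pmod 4$, hence $R_{1,t}-R_{2,t}+R_{3,t}-R_{4,t}\equiv 0\pmod 4$, which fails for $(0,0,1,-1)$. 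Your switch-counting argument ($r_{1,j}\ne 0$ only when $\hat y_{j-1}\ne\hat y_j$, giving $R_{1,t}\equiv N\equiv\hat y_t\pmod 2$, while $R_{3,t}$ telescopes to exactly $\hat y_t$) yields the parity identity $R_{1,t}\equiv R_{3,t}\pmod 2$; given the zero-sum identities $R_2=-R_1$ and $R_4=-R_3$ from \cref{lemma.sumofpayoffs}, this is precisely equivalent to the paper's mod-$4$ invariant, so the two derivations encode the same obstruction. Your route buys a little more: the explicit identification $R_{3,t}=\hat y_t\in\{0,1\}$ sharpens \cref{lemma.boundeduninformative} and collapses the paper's case analysis into two one-line cases, and the parity is derived from the dynamics rather than from a reward-vector enumeration, which makes clearer why the invariant holds.

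Your strictness caveat is also a genuine catch rather than a defect of your proof: the paper's own proof only ever concludes $f_1\ge\tfrac14$ (from order preservation plus normalization) yet the lemma asserts ``larger than $\tfrac14$,'' and your $t=0$ example—where all four accumulated rewards vanish and exchangeability (\cref{asm.symf}) forces $f=(\tfrac14,\tfrac14,\tfrac14,\tfrac14)$—shows equality can actually occur, so the statement should read $\ge\tfrac14$. This is harmless downstream: the proof of \cref{lemma.inflargegap} lower-bounds each round's probability by $\min_{i,j}\{P_{X,Y}(i,j)\}\times 0.25^2$, for which the weak inequality suffices.
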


\textbf{Step 2: Escaping Bad Events}

Before introduce the formal statement of step 2, we define two "bad events".  Given $c_0$ for all $t\ge 1$
$\mathcal{E}_{t}^{1,2}:=\{R_{1,t}>c_0, \text{ and }S_{2,t}>c_0\}\text{ and }\mathcal{E}_{t}^{2,1}:=\{R_{2,t}>c_0\text{ and }S_{1,t}>c_0\}$.
We will specify $c_0$ later. Intuitively, when $c_0$ is sufficiently large, $\mathcal{E}_{t}^{1,2}$ implies that Alice and Bob will choose $\opt_1,\opt_2$ respectively with a probability close to $1$ in following rounds.
For simplicity, we treat each event $\mathcal{E}$ as an indicator function, i.e., $\mathcal{E}$ happens if and only if $\mathcal{E}=1$. In order to prove that these two bad events cannot go on forever, we want to show that when $\mathcal{E}_t^{1,2}$ happens, $R_{1,t}-R_{2,t}$ will tend to decrease at a rapid rate. Therefore, Alice will eventually deviate from $\opt_1$ to choose $\opt_2$ with a relatively high probability.

By \cref{asm.fullsupf} and \cref{lemma.boundeduninformative}, given $\delta>0$ there exists a constant $c_1$ such that when $R_{1,t}>c_1$, Alice chooses $\opt_1$ with probability larger than $1-\delta$; when $S_{2,t}>c_1$, Bob chooses $\opt_2$ with probability larger than $1-\delta$. 
Let $\gamma_1 = P_{X,Y}(1,1)+P_{X,Y}(0,0)-P_{X,Y}(1,0)-P_{X,Y}(0,1)$ $\gamma_2 = (P_{X,Y}(1,1)-P_{X,Y}(0,0))^2-(P_{X,Y}(1,0)-P_{X,Y}(0,1))^2$.
\begin{lemma}\label{lemma.delta}
Given the game defined in \cref{thm.converge}, there exists a $\delta>0$ and corresponding $c_1$ such that for any round $t$, $
    \mathbb{E}[r_{1,t+1}-r_{2,t+1}|S_{1,t-1}>c_1+1]\geq \frac{\gamma_1-\gamma_2}{2}>0.
$
\end{lemma}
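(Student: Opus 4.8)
The plan is to reduce the one-step gap to a pair of agreement indicators and then exploit that, conditioned on $\{S_{1,t-1}>c_1+1\}$, Bob is truth-telling (up to his exploration noise) at \emph{both} rounds $t$ and $t+1$. First I would use that for binary signals exactly one of $\hat y=x$, $\hat y=1-x$ holds, which gives
\[
r_{1,t+1}-r_{2,t+1}=2\bigl(\mathbb{I}[\hat y_{t+1}=x_{t+1}]-\mathbb{I}[\hat y_{t}=x_{t+1}]\bigr).
\]
Because the algorithms are reward-based, the option played in any round is a function of the accumulated rewards through the \emph{previous} round (plus independent exploration), hence independent of that round's fresh signal pair; this is what lets me treat $(x_{t+1},y_{t+1})\sim P_{X,Y}$ as independent of the conditioning event and of Bob's chosen option.

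Next I would establish the driving probabilistic fact. On $\{S_{1,t-1}>c_1+1\}$ the per-round payoff lies in $[-1,1]$, so $S_{1,t}\ge S_{1,t-1}-1>c_1$. By \cref{asm.fullsupf} together with \cref{lemma.boundeduninformative,lemma.sumofpayoffs} (which force $S_{2,t}=-S_{1,t}$ and $S_{3,t},S_{4,t}\in[-1,1]$), making $S_{1,\tau}$ exceed a large enough threshold sends $g_1\to 1$; so for suitable $c_1$ Bob plays $\opt_1$ with probability larger than $1-\delta$ at round $t$ (via $S_{1,t-1}>c_1$) and at round $t+1$ (via $S_{1,t}>c_1$). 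This is exactly why the gap is taken across two rounds, $t-1\to t+1$: the single event $\{S_{1,t-1}>c_1+1\}$ simultaneously controls $\hat y_t$ and $\hat y_{t+1}$, the only Bob-reports entering $r_{1,t+1}-r_{2,t+1}$.

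Then comes the exact computation. When Bob reports truthfully at both rounds, $\hat y_{t+1}=y_{t+1}$ and $\hat y_t=y_t$, and by independence across rounds
\[
\mathbb{E}\bigl[r_{1,t+1}-r_{2,t+1}\mid \text{Bob plays }\opt_1\text{ at }t,t+1\bigr]=2\bigl(\Pr[X=Y]-\Pr[Y_t=X_{t+1}]\bigr)=4\,\Cov(X,Y).
\]
Writing $a=P_{X,Y}(1,1)$, $b=P_{X,Y}(0,0)$, $e=P_{X,Y}(1,0)$, $h=P_{X,Y}(0,1)$, $p=\Pr[X=1]$, $q=\Pr[Y=1]$, a short calculation with $\Pr[X=Y]=a+b$ and $\Pr[Y_t=X_{t+1}]=pq+(1-p)(1-q)$ yields $4(ab-eh)=\gamma_1-\gamma_2$; here I would also record the identity $\Cov(X,Y)=ab-eh$, which is strictly positive by \cref{asm.poscorr} since $\min\{a,b\}>\max\{e,h\}$, so $\gamma_1-\gamma_2>0$.

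Finally I would bookkeep the exploration error. Splitting the gap into its two indicator terms, on Bob's $\opt_1$ event the first term has conditional mean $2(a+b)$ and the second $2\bigl(pq+(1-p)(1-q)\bigr)$, while off that event each term lies in $[0,2]$; conditioning on $\{S_{1,t-1}>c_1+1\}$ and inserting the $1-\delta$ bounds from the second step gives
\[
\mathbb{E}[r_{1,t+1}-r_{2,t+1}\mid S_{1,t-1}>c_1+1]\ge(\gamma_1-\gamma_2)-2\delta(a+b+1)\ge(\gamma_1-\gamma_2)-4\delta.
\]
Choosing $\delta\le(\gamma_1-\gamma_2)/8$ in the earlier selection of $c_1$ delivers the claimed bound $\tfrac{\gamma_1-\gamma_2}{2}>0$. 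I expect the main obstacle to be this last step: because $\hat y_t$ is already realized in $\mathcal{H}_t$ while $\hat y_{t+1}$ depends on the round-$(t+1)$ option, the conditional expectations must be collapsed in the correct order (conditioning on $\mathcal{H}_t$, then $\mathcal{H}_{t-1}$) so that the fresh signals $(x_{t+1},y_{t+1})$ keep their marginal law $P_{X,Y}$ and the off-$\opt_1$ contributions are bounded cleanly rather than entangled with the event $\{S_{1,t-1}>c_1+1\}$.
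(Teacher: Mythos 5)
Your proposal is correct and follows essentially the same route as the paper's proof: condition on $S_{1,t-1}>c_1+1$ so that (since per-round payoffs lie in $[-1,1]$, and by \cref{lemma.boundeduninformative,lemma.sumofpayoffs} together with \cref{asm.fullsupf}) Bob plays $\opt_1$ with probability at least $1-\delta$ at both rounds $t$ and $t+1$; identify $\gamma_1$ and $\gamma_2$ as the truthful-case means of the same-round and lagged agreement terms $\mathbb{I}[x_{t+1}=\hat{y}_{t+1}]-\mathbb{I}[1-x_{t+1}=\hat{y}_{t+1}]$ and $\mathbb{I}[x_{t+1}=\hat{y}_{t}]-\mathbb{I}[1-x_{t+1}=\hat{y}_{t}]$; absorb exploration as an $O(\delta)$ loss; and choose $\delta$ proportional to $\gamma_1-\gamma_2$. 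Your constants ($\le 4\delta$ total loss, $\delta\le(\gamma_1-\gamma_2)/8$) are consistent with the paper's bookkeeping via $\max\{\gamma_2,(1-\delta)\gamma_2-\delta\}$, and your explicit remark about collapsing the conditioning in the order $\mathcal{H}_t$, then $\mathcal{H}_{t-1}$ is exactly the (implicit) reason the paper's hypothesis reaches back to $S_{1,t-1}$ rather than $S_{1,t}$. The one genuine divergence is the positivity of $\gamma_1-\gamma_2$: you prove the exact identity $\gamma_1-\gamma_2=4\left(P_{X,Y}(1,1)P_{X,Y}(0,0)-P_{X,Y}(1,0)P_{X,Y}(0,1)\right)=4\,\Cov(X,Y)$, strictly positive under \cref{asm.poscorr} since $P_{X,Y}(1,1)P_{X,Y}(0,0)>\max\{P_{X,Y}(1,0),P_{X,Y}(0,1)\}^2$, whereas the paper argues through an absolute-value decomposition whose final strict inequality $|P_{X,Y}(0,1)-P_{X,Y}(1,0)|-(P_{X,Y}(0,1)-P_{X,Y}(1,0))^2>0$ fails in the symmetric case $P_{X,Y}(1,0)=P_{X,Y}(0,1)$ (precisely the distribution used in the paper's own simulations). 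Your identity is therefore not just cleaner: it covers this edge case the paper's chain glosses over, and it pins down the exact truthful-case drift rather than only its sign.
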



Given such $\delta$ and $c_1$ in \cref{lemma.delta}, we set $c_0=c_1+\lceil\frac{1000}{\gamma_1-\gamma_2}\rceil+1$. Because in each round $R_{1,t},R_{2,t}$ and $S_{1,t},S_{2,t}$ vary by at most $1$, if  $\mathcal{E}_{t}^{1,2}$ happens Alice chooses $\opt_1$ and Bob chooses $\opt_2$ with probability larger than $1-\delta$ independently for the next $\lceil\frac{1000}{\gamma_1-\gamma_2}\rceil+1$ rounds. Similar argument holds for $\mathcal{E}_{t}^{2,1}$.  We use the above observation to show \cref{lemma.expectr12}.

\begin{lemma}\label{lemma.expectr12}
Given the game defined in \cref{thm.converge}, for all $t$ and history $\mathcal{H}_t\in \mathcal{E}_t^{1,2}$, we have
$
    \mathbb{E}\left[\sum_{j=1}^{\lceil\frac{1000}{\gamma_1-\gamma_2}\rceil+1}(r_{1,t+j}-r_{2,t+j})\middle|\mathcal{H}_t\right]\leq -100.
$
\end{lemma}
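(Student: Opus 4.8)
The plan is to convert the bad event $\mathcal{E}_t^{1,2}$ into a guaranteed downward drift of Alice's truth-telling reward. First I would record the algebraic simplification from \cref{lemma.sumofpayoffs}: since $R_{1,s}+R_{2,s}=0$ for every $s$, taking increments gives $r_{2,s}=-r_{1,s}$, so $r_{1,s}-r_{2,s}=2r_{1,s}$ and $|r_{1,s}-r_{2,s}|\le 2$ because the CA payoffs lie in $\{-1,0,1\}$. Thus the statement is equivalent to showing $R_1$ drops by at least $50$ in expectation over the window. Next I would use the margin built into $c_0=c_1+\lceil 1000/(\gamma_1-\gamma_2)\rceil+1$. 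Writing $N=\lceil 1000/(\gamma_1-\gamma_2)\rceil+1$, the at-most-$1$-per-round movement of $R_{1,s}$ and $S_{2,s}$ means that on $\mathcal{E}_t^{1,2}$ (where $R_{1,t}>c_0$ and $S_{2,t}>c_0$) we have $R_{1,s}>c_1$ and $S_{2,s}>c_1$ \emph{deterministically} for all $s\in\{t,\dots,t+N-1\}$. By the choice of $\delta$ and $c_1$ from \cref{asm.fullsupf}, this forces Alice to play $\opt_1$ and Bob to play $\opt_2$ with probability larger than $1-\delta$ at each of rounds $t+1,\dots,t+N$.

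The heart of the argument is the per-round drift, which is the sign-reversed analog of \cref{lemma.delta}. When Bob flips on two consecutive rounds, Alice's truthful report agrees with Bob's on the same task with probability $P_{X,Y}(1,0)+P_{X,Y}(0,1)$, but agrees with the previous (independent) task with the strictly larger probability $P(X=1)P(Y=0)+P(X=0)P(Y=1)$; positive correlation (\cref{asm.poscorr}) makes the difference negative. Carrying out this computation and charging the at most $\delta$ chance that Bob fails to flip yields, for every history with $S_{2,s-2}>c_1+1$, the per-history bound $\mathbb{E}[r_{1,s}-r_{2,s}\mid \mathcal{H}_{s-2}]\le -\frac{\gamma_1-\gamma_2}{2}$, where $\gamma_1-\gamma_2=4\bigl(P_{X,Y}(1,1)P_{X,Y}(0,0)-P_{X,Y}(1,0)P_{X,Y}(0,1)\bigr)>0$. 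The conditioning must sit two rounds back (on $\mathcal{H}_{s-2}$, not $\mathcal{H}_{s-1}$) so that the intermediate signal $y_{s-1}$, which enters $r_{1,s}$ through Bob's report $\hat y_{s-1}$, is still averaged rather than frozen; the hypothesis $S_{2,s-2}>c_1+1$ holds throughout the window by the deterministic margin above.

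Finally I would telescope. For each $j\ge 2$ the history $\mathcal{H}_{t+j-2}$ refines $\mathcal{H}_t$, so the tower property gives $\mathbb{E}[r_{1,t+j}-r_{2,t+j}\mid\mathcal{H}_t]=\mathbb{E}\bigl[\mathbb{E}[r_{1,t+j}-r_{2,t+j}\mid\mathcal{H}_{t+j-2}]\mid\mathcal{H}_t\bigr]\le-\frac{\gamma_1-\gamma_2}{2}$. The single boundary round $j=1$, where $\hat y_t$ is already determined by $\mathcal{H}_t$ and cannot be averaged, I would bound crudely by $|r_{1,t+1}-r_{2,t+1}|\le 2$. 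Summing over the window then gives
\[
\mathbb{E}\left[\sum_{j=1}^{N}(r_{1,t+j}-r_{2,t+j})\,\middle|\,\mathcal{H}_t\right]\le 2-(N-1)\frac{\gamma_1-\gamma_2}{2}\le 2-\frac{1000}{\gamma_1-\gamma_2}\cdot\frac{\gamma_1-\gamma_2}{2}=-498\le -100,
\]
since $N-1=\lceil 1000/(\gamma_1-\gamma_2)\rceil\ge 1000/(\gamma_1-\gamma_2)$; the gap between $-498$ and $-100$ comfortably absorbs the boundary round.

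I expect the main obstacle to be the two-round memory of the CA payment: because $r_{1,s}$ depends on Bob's reports in both rounds $s$ and $s-1$, the reward process has no clean one-step martingale increments, so one must condition two steps back to keep $y_{s-1}$ random while still applying the tower property in the correct (coarsening) direction. Verifying that the ``Bob-flips'' regime persists deterministically across the entire length-$N$ window, and cleanly disposing of the first round whose intermediate report is already frozen by $\mathcal{H}_t$, are the places where the bookkeeping must be handled with care — though the large numerical slack ($-498$ against the target $-100$) means none of the estimates needs to be tight.
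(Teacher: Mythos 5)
Your proposal is correct and follows essentially the same route as the paper's own proof: the slack built into $c_0=c_1+\lceil\frac{1000}{\gamma_1-\gamma_2}\rceil+1$ keeps $R_{1,s}$ and $S_{2,s}$ above $c_1$ deterministically throughout the window so that Bob flips with probability $>1-\delta$ in every round, the boundary round $j=1$ is bounded crudely by $2$, each round $j\geq 2$ contributes expected drift at most $-\frac{\gamma_1-\gamma_2}{2}$ via the sign-reversed analogue of the \cref{lemma.delta} computation, and the sum gives $2-500=-498<-100$. Your explicit two-step-back conditioning on $\mathcal{H}_{s-2}$ (to keep $y_{s-1}$ random in the cross-task term) and the closed form $\gamma_1-\gamma_2=4\left(P_{X,Y}(1,1)P_{X,Y}(0,0)-P_{X,Y}(1,0)P_{X,Y}(0,1)\right)$ make the bookkeeping cleaner than the paper's one-shot conditioning on $\mathcal{H}_t$, but these are refinements of the same argument rather than a different approach.
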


This lemma formalizes the blue arrows in Fig.~\ref{Fig.schemfig}.  With this lemma, we get the main result of step two. 

\begin{lemma}\label{lemma.alwaysnicesituation}
Given the game defined in \cref{thm.converge}, $
    \Pr\left\{\limsup_{t\rightarrow\infty}\overline{\mathcal{E}_t^{1,2}\vee\mathcal{E}_t^{2,1}}=1\right\}=1.
$
\end{lemma}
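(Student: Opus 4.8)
The claim is equivalent to saying that, almost surely, the ``good'' event $\overline{\mathcal{E}_t^{1,2}\vee\mathcal{E}_t^{2,1}}$ occurs for infinitely many $t$; equivalently, the complementary event $\mathcal{B}:=\{\exists T:\ \mathcal{E}_t^{1,2}\vee\mathcal{E}_t^{2,1}\text{ for all }t\ge T\}$ has probability zero. I would first reduce $\mathcal{B}$ to two one-colored events. Since $R_{1,t}>c_0>0$ forces $R_{2,t}<0<c_0$ by \cref{lemma.sumofpayoffs}, the events $\mathcal{E}_t^{1,2}$ and $\mathcal{E}_t^{2,1}$ are mutually exclusive, and a transition between them at consecutive rounds would require $R_{1,t}$ to jump from above $c_0$ to below $-c_0$, which is impossible because $|R_{1,t+1}-R_{1,t}|\le 1$ and $c_0>1$. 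Hence on $\mathcal{B}$ the ``color'' is eventually constant, so $\mathcal{B}\subseteq\mathcal{B}^{1,2}\cup\mathcal{B}^{2,1}$, where $\mathcal{B}^{1,2}:=\bigcup_T\{\mathcal{E}_t^{1,2}\ \forall t\ge T\}$ and $\mathcal{B}^{2,1}$ is symmetric. By a union bound and the symmetry between the agents, it suffices to show $\Pr[\mathcal{E}_t^{1,2}\ \forall t\ge T]=0$ for each fixed $T$.

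The engine is \cref{lemma.expectr12}. Writing $N:=\lceil\frac{1000}{\gamma_1-\gamma_2}\rceil+1$ and $Z_t:=R_{1,t}-R_{2,t}=2R_{1,t}$ (the last equality from \cref{lemma.sumofpayoffs}), the lemma states that whenever the block-start history lies in the bad event, $\mathbb{E}[Z_{t+N}-Z_t\mid\mathcal{H}_t]\le -100$. Thus $Z$ has a uniform negative drift of at least $100$ per length-$N$ block while the process remains in $\mathcal{E}^{1,2}$, whereas membership in $\mathcal{E}^{1,2}$ keeps $Z$ pinned above $2c_0$. These two facts are in tension and should force an exit.

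To turn this into a proof I would sample at block times $t_k:=T+kN$, let $\kappa$ be the first index $k$ with $\mathcal{H}_{t_k}\notin\mathcal{E}_{t_k}^{1,2}$, and consider the stopped process $Y_k:=Z_{t_{k\wedge\kappa}}+100\,(k\wedge\kappa)$. For $k<\kappa$, \cref{lemma.expectr12} gives $\mathbb{E}[Y_{k+1}-Y_k\mid\mathcal{F}_{t_k}]\le -100+100=0$, and for $k\ge\kappa$ the increment is $0$, so $Y_k$ is a supermartingale. It is bounded below by a constant: before $\kappa$ we have $Z_{t_k}>2c_0$, and since $|Z_{t_{k+1}}-Z_{t_k}|\le 2N$ the last sampled value satisfies $Z_{t_\kappa}>2c_0-2N$, while the additive term is nonnegative, so $Y_k\ge 2c_0-2N$. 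A supermartingale bounded below converges almost surely to a finite limit, so $\Pr[Y_k\to\infty]=0$. But on $\{\kappa=\infty\}$ we have $Y_k=Z_{t_k}+100k\ge 2c_0+100k\to\infty$, forcing $\Pr[\kappa=\infty]=0$. Since $\{\mathcal{E}_t^{1,2}\ \forall t\ge T\}\subseteq\{\kappa=\infty\}$, we conclude $\Pr[\mathcal{E}_t^{1,2}\ \forall t\ge T]=0$, and the reduction above completes the argument.

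The main obstacle is the careful construction and bookkeeping of the stopped block supermartingale rather than any deep new inequality, since \cref{lemma.expectr12} already supplies the drift. The delicate points are: (i) applying the drift bound only at block starts that still lie in $\mathcal{E}^{1,2}$ and correctly freezing the process afterward; (ii) establishing the uniform lower bound on $Y_k$ despite the possibility of exiting $\mathcal{E}^{1,2}$ through the $S$-coordinate (so that $R_{1}$ need not be small at the exit); and (iii) the mutual-exclusivity and no-switching reduction that lets the union bound over the two bad colors close the proof.
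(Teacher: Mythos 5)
Your proof is correct and follows the same architecture as the paper's: the same reduction of the complement to $\{\exists T:\forall t\ge T,\ \mathcal{E}_t^{1,2}\vee\mathcal{E}_t^{2,1}\}$, the same no-switching observation (bounded per-round increments plus mutual exclusivity force the eventual ``color'' to be constant), and the same engine, namely the per-block drift bound of \cref{lemma.expectr12} applied to the stopped, compensated block process $Z_{t_{k\wedge\kappa}}+100(k\wedge\kappa)$ --- this is exactly the paper's $D_i$ up to the sign convention on the compensator. The one genuine difference is the concluding step: the paper bounds $\Pr\{D_i>2c_0+100i\}$ via Azuma--Hoeffding (\cref{thm:asuma}) using the bounded differences $102+2\zeta$, and then invokes Borel--Cantelli (\cref{thm.borel}) to rule out $R_{1,t_i}>c_0$ infinitely often; you instead observe that your $Y_k$ is a supermartingale bounded below (by $2c_0-2N$, via \cref{lemma.sumofpayoffs} and the block-increment bound $2N$) and apply Doob's almost-sure convergence theorem, so that $Y_k\to\infty$ on $\{\kappa=\infty\}$ yields $\Pr[\kappa=\infty]=0$ directly. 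Your route is slightly cleaner --- it needs no concentration inequality and no quantitative tail bookkeeping --- while the paper's version yields explicit exponential decay rates, which is harmless here since only the qualitative conclusion is used. Two cosmetic points: your lower bound ``$Z_{t_k}>2c_0$ before $\kappa$'' implicitly assumes $\kappa\ge 1$; for $\kappa=0$ the process is frozen at the constant $Z_T$, which is finite for fixed $T$ (and constant given $\mathcal{H}_T$), so boundedness below still holds --- worth one line. And your worry (ii) about exiting through the $S$-coordinate is a non-issue for the lower bound, since such an exit leaves $Z_{t_\kappa}$ large rather than small; the only constraint needed is the one you already derived from the $R$-coordinate exit.
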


If we treat $R_{1,t}-R_{2,t}$ as money of Alice, Lemma~\ref{lemma.alwaysnicesituation} is similar to the gambler's ruin problem \cite{epstein2012theory}. More specifically, $R_{1,t}-R_{2,t}$ has a negative expected growth each $\lceil\frac{1000}{\gamma_1-\gamma_2}\rceil+1$ rounds by Lemma~\ref{lemma.expectr12}, so $R_{1,t}-R_{2,t}$ will always become small enough to escape $\mathcal{E}^{1,2}$. 

\textbf{Step 3: From Intermediate States to Good Events}

We define a series of "good events" at first.  For all $u\in \N^+$ and $t\ge 1$,
$\mathcal{E}^{1,1}_t(u):=\{R_{1,t}\geq u, \text{ and }S_{1,t}\geq u\}\text{ and }\mathcal{E}^{2,2}_t(u):=\{R_{2,t}\geq u\text{ and }S_{2,t}\geq u\}$.

To the end, we want to show that $\vee_{t\in\mathbb{N}^+}\mathcal{E}^{1,1}_t(u)$ and $\vee_{t\in\mathbb{N}^+}\mathcal{E}^{2,2}_t(u)$ happen with probability $1$ for any $u\in\mathbb{N}^+$. Formally, we claim Lemma~\ref{lemma.inflargegap}.

\begin{lemma}\label{lemma.inflargegap}
Given the game defined in \cref{thm.converge}, for all $u$ there exists $\lambda_u$ so that for any $T$ with history $\mathcal{H}_T\in \overline{\mathcal{E}_T^{1,2}\vee\mathcal{E}_T^{2,1}}$, we have
$\Pr\left\{\left(\vee_{i=T}^{T+4(u+c_0)+100}\mathcal{E}^{1,1}_t(u)\right)\vee\left(\vee_{i=T}^{T+4(u+c_0)+100}\mathcal{E}^{2,2}_t(u)\right)=1\middle|\mathcal{H}_T\right\}\geq \lambda_u$.
\end{lemma}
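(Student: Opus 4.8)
The plan is to prove \cref{lemma.inflargegap} by exhibiting, for each non-bad starting history $\mathcal{H}_T$, an explicit \emph{favorable script}: a prescribed sequence of signal realizations together with the requirement that each agent plays its currently leading informative option. I would then bound its probability below by a positive $\lambda_u$ and check that along it the pair $(R_{1,t},S_{1,t})$ enters a good event within $4(u+c_0)+100$ rounds. Two facts make every scripted round occur with probability at least an absolute constant $p_0>0$: by \cref{asm.apriori} any prescribed signal pair $(x_t,y_t)$ has probability at least $p_{\min}:=\min_{x,y}P_{X,Y}(x,y)>0$; and by \cref{lemma.nonoptimality34}, whenever $R_{1,t}\ge 0$ (resp. $R_{1,t}\le 0$) Alice plays $\opt_1$ (resp. $\opt_2$) with probability at least $\tfrac14$, and symmetrically for Bob. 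Since the signals and the two agents' option choices are conditionally independent given the history, each prescribed round has probability at least $p_0:=\tfrac{1}{16}p_{\min}$, so a script of length $N\le 4(u+c_0)+100$ has probability at least $p_0^{N}=:\lambda_u>0$. Throughout I use \cref{lemma.sumofpayoffs} and \cref{lemma.boundeduninformative} to reduce the state to $(R_{1,t},S_{1,t})$, since $R_{2,t}=-R_{1,t}$, $S_{2,t}=-S_{1,t}$, and the uninformative options can never be unique leaders.

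The core combinatorial observation is that the CA increments in \cref{eq:ca} can be steered by the signals. If both agents play $\opt_1$ and I feed them agreeing signals that alternate across rounds ($x_t=y_t$ and $x_t=1-x_{t-1}$), then the previous reports satisfy $\hat y_{t-1}=1-x_t$ and $\hat x_{t-1}=1-y_t$, so $r_{1,t}=s_{1,t}=+1$ in every round. Thus once both leading rewards are nonnegative, a single script makes $R_{1,t}$ and $S_{1,t}$ climb in lockstep and reach $\{R_1\ge u,\ S_1\ge u\}=\mathcal{E}^{1,1}_t(u)$ after at most $u$ rounds; the $\opt_2$-analogue drives the process into $\mathcal{E}^{2,2}_t(u)$. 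The only subtlety is one buffer round to fix the parity of the previous reports inherited from the alignment stage. This immediately settles the two concordant-sign starting cases (both rewards $\ge 0$, or both $\le 0$).

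It remains to handle alignment, i.e.\ the mixed-sign case (say $R_{1,T}>0>S_{1,T}$), where Alice leads $\opt_1$ while Bob leads $\opt_2$ and the naive climbing script cannot be run. The key point is that, since $\mathcal{H}_T\notin\mathcal{E}_T^{1,2}\vee\mathcal{E}_T^{2,1}$, the two rewards cannot both have magnitude exceeding $c_0$, so the reward to be flipped always has magnitude at most $c_0$: if $|S_{1,T}|\le c_0$ I drive $S_{1,t}$ up to $0$ (flipping Bob's lead to $\opt_1$) and then target $\mathcal{E}^{1,1}_t(u)$, whereas if $R_{1,T}\le c_0$ I drive $R_{1,t}$ down to $0$ (flipping Alice's lead to $\opt_2$) and then target $\mathcal{E}^{2,2}_t(u)$. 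With Alice playing $\opt_1$ and Bob playing $\opt_2$, a two-round signal pattern nets $+1$ to $S_{1,t}$ while keeping $R_{1,t}$ nondecreasing, so the flip is achieved in $O(c_0)$ rounds and leaves both leading rewards nonnegative, after which the climbing script applies. Summing, alignment uses at most about $2c_0$ rounds and the climb at most about $u$ rounds, comfortably within $4(u+c_0)+100$.

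The main obstacle is precisely this alignment step. Unlike the concordant case, one cannot force an agent off a strong lead — for update functions such as Follow the Leader the non-leading option has probability exactly $0$ — so alignment must be produced indirectly, by choosing favorable (low-probability but positive) signal realizations that push the smaller-magnitude reward across zero \emph{against} its natural drift while both agents keep playing their guaranteed leads. Verifying that such a two-round pattern exists, respects the one-round memory of the CA mechanism (each increment depends on the previous round's reports), and simultaneously controls both coordinates is the delicate part; the estimate of \cref{lemma.delta} is what guarantees that, once aligned, the concordant region is genuinely self-reinforcing rather than merely reachable. Once \cref{lemma.inflargegap} is established, it supplies exactly the intermediate-to-good transition depicted in \cref{Fig.schemfig}.
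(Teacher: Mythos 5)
Your proposal is correct and takes essentially the same route as the paper: the paper's proof also builds an explicit scripted process $\mathcal{P}_T$ (an alignment phase with Alice on $\opt_1$, Bob on $\opt_2$, and signals $x_{j+1}=1-\hat{y}_j$, $y_{j+1}=1-\hat{x}_j$ that raises $S_{1,t}$ by $2$ every $4$ rounds while keeping $R_{1,t}$ nondecreasing, followed by an alternating-agreeing-signal climb under $(\opt_1,\opt_1)$ with one buffer round), and lower-bounds each scripted round by $\min_{i,j\in\{0,1\}}\{P_{X,Y}(i,j)\}\times 0.25^2$ via \cref{lemma.nonoptimality34}, giving exactly your $\lambda_u=\left(\min_{i,j\in\{0,1\}}\{P_{X,Y}(i,j)\}\times0.25^2\right)^{100+4(u+c_0)}$. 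The only cosmetic differences are your two-round alignment pattern versus the paper's period-4 one, and your explicit case split versus the paper's symmetry reduction to $R_{1,T}+S_{1,T}\geq 0$, $R_{1,T}\geq S_{1,T}$; your closing aside that \cref{lemma.delta} is needed here is inessential (it enters only the later steps), not a gap.
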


This lemma generally says that when the agents are in an intermediate state that $\overline{\mathcal{E}_T^{1,2}\vee\mathcal{E}_T^{2,1}}=1$, they have a constant probability $\lambda_u$ to get into a good event in the next $4(u+c_0)+100$ rounds. In order to prove this lemma, we define a nice event $\mathcal{P}_T$ such that  $\mathcal{P}_T$ happens with probability larger than $\lambda_u$ and $\mathcal{P}_T$ implies the good events happen in no more than $4(u+c_0)+100$ rounds. Formally, event $\mathcal{P}_T$ is defined as the following: First, for $j = T,\dots, T_1$ until some round $T_1\geq T$ such that $S_{1,T_1}\geq 0$, $x_{j+1}=1-\hat{y}_j,y_{j+1}=1-\hat{x}_j$, Alice uses strategy $\opt_1$, Bob uses strategy $\opt_2$.  Then, Alice and Bob uses strategy $\opt_1$ for $4u+50$ rounds and signals are generated as $x_{j+1}=y_{j+1}=-x_j$ for $j\geq T_1$.
We can use Lemma~\ref{lemma.nonoptimality34} to prove that $ \lambda_u=(\min_{i,j\in\{0,1\}}\{P_{X,Y}(i,j)\}\times0.25^2)^{100+4(u+c_0)}$ is a feasible lower bound.



\textbf{Step 4: Good Events Lead to Truthful Convergence}

Similar to step 2, we have \cref{lemma.expectr12increase} to show that $\mathcal{E}_t^{1,1}(u)$ can lead to increasing of $R_{1,t}-R_{2,t}$ at first. The idea is completely similar to Lemma~\ref{lemma.expectr12} and it formalize the red arrows in Fig.~\ref{Fig.schemfig}. 

\begin{lemma}\label{lemma.expectr12increase}
Given the game defined in \cref{thm.converge}, for all $u>2c_0+1$ and $t$ if $\mathcal{H}_t\in \mathcal{E}_t^{1,1}\left(\lfloor\frac{u}{2}\rfloor\right)$, we have $
    \mathbb{E}\left[\sum_{j=1}^{\lceil\frac{1000}{\gamma_1-\gamma_2}\rceil+1}\left(r_{1,t+j}-r_{2,t+j}\right)\middle|\mathcal{H}_t\right]\geq 100
$.
\end{lemma}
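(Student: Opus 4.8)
The plan is to prove this as the exact mirror of \cref{lemma.expectr12}. There, conditioning on $\mathcal{E}_t^{1,2}$ kept Bob's flipping reward $S_{2,\cdot}$ above threshold, forcing Bob to play $\opt_2$ and driving $r_{1}-r_{2}$ negative; here, conditioning on $\mathcal{E}_t^{1,1}(\lfloor u/2\rfloor)$ keeps Bob's truth-telling reward $S_{1,\cdot}$ above threshold, forcing Bob to play $\opt_1$, and by positive correlation this drives $r_{1}-r_{2}$ positive. The one analytic input I need is already supplied by \cref{lemma.delta}: whenever $S_{1,\cdot}>c_1+1$, the counterfactual drift satisfies $\mathbb{E}[r_{1,\cdot}-r_{2,\cdot}\mid\cdot]\geq\frac{\gamma_1-\gamma_2}{2}>0$. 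The key structural observation is that $r_{1,t+j}-r_{2,t+j}$ is a \emph{counterfactual} difference for Alice, depending only on Alice's fresh signal and Bob's two latest reports, so I never have to control Alice's own play --- only Bob's, and Bob is controlled entirely through $S_{1,\cdot}$.

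First I would show the good event pins $S_{1,\cdot}$ above threshold for the whole window. Conditioning on $\mathcal{H}_t\in\mathcal{E}_t^{1,1}(\lfloor u/2\rfloor)$ gives $S_{1,t}\geq\lfloor u/2\rfloor$, and the hypothesis $u>2c_0+1$ is chosen precisely so that $\lfloor u/2\rfloor\geq c_0=c_1+\lceil\frac{1000}{\gamma_1-\gamma_2}\rceil+1$. Since the per-round CA payment lies in $\{-1,0,1\}$, the value $S_{1,\cdot}$ changes by at most $1$ each round, giving the deterministic bound $S_{1,r}\geq S_{1,t}-(r-t)\geq\lfloor u/2\rfloor-(r-t)$ for $r\geq t$. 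For every index $r$ in the range $t-1\leq r\leq t+\lceil\frac{1000}{\gamma_1-\gamma_2}\rceil-1$ this yields $S_{1,r}\geq c_1+1$ (in fact $\geq c_1+2$ at the far end). This is exactly the window of Bob-reward indices that feed the drift bound, and it is the only step that consumes the hypothesis $u>2c_0+1$: the gap $c_0-c_1=\lceil\frac{1000}{\gamma_1-\gamma_2}\rceil+1$ is the slack needed for the threshold to survive the entire window.

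Then I would apply \cref{lemma.delta} round by round. Identifying its ``$t$'' with $t+j-1$, the hypothesis for round $t+j$ is $S_{1,t+j-2}>c_1+1$, which holds surely under the conditioning by the previous step, so $\mathbb{E}[r_{1,t+j}-r_{2,t+j}\mid\mathcal{H}_{t+j-1}]\geq\frac{\gamma_1-\gamma_2}{2}$ for each $j\in\{1,\dots,\lceil\frac{1000}{\gamma_1-\gamma_2}\rceil+1\}$; this per-history bound follows because the signals are fresh i.i.d.\ and Bob's mixed strategy is $\mathcal{H}$-measurable, so the event-conditioned inequality of \cref{lemma.delta} in fact holds pointwise on any history with $S_{1,t+j-2}>c_1+1$. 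Taking $\mathbb{E}[\cdot\mid\mathcal{H}_t]$ and using the tower property and linearity, the sum is at least $(\lceil\frac{1000}{\gamma_1-\gamma_2}\rceil+1)\cdot\frac{\gamma_1-\gamma_2}{2}\geq 500\geq 100$.

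I do not expect a genuine obstacle here, since all the probabilistic content lives in \cref{lemma.delta}; the target constant $100$ is comfortably cleared (the argument actually yields $500$), which confirms the slack. The only points demanding care are the index/threshold bookkeeping in the first step --- ensuring $S_{1,\cdot}$ stays above $c_1+1$ two rounds back at every point of the window --- and the routine promotion of \cref{lemma.delta}'s event-conditioned inequality to a bound conditioned on the full filtration $\mathcal{H}_{t+j-1}$ so that the tower property applies.
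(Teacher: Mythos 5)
Your high-level plan coincides with the paper's proof---mirror \cref{lemma.expectr12}, use the slack in $u>2c_0+1$ to keep $S_{1,\cdot}$ above the threshold $c_1+1$ across the whole window, then invoke the drift bound of \cref{lemma.delta}---and your threshold bookkeeping is correct. But the step you dismiss as routine, promoting \cref{lemma.delta} to the \emph{pointwise} bound $\mathbb{E}[r_{1,t+j}-r_{2,t+j}\mid\mathcal{H}_{t+j-1}]\geq\frac{\gamma_1-\gamma_2}{2}$, is false. The difference $r_{1,t+j}-r_{2,t+j}$ contains the cross-round term $\mathbb{I}[x_{t+j}=\hat{y}_{t+j-1}]-\mathbb{I}[1-x_{t+j}=\hat{y}_{t+j-1}]$, and conditional on $\mathcal{H}_{t+j-1}$ the report $\hat{y}_{t+j-1}$ is a fixed realized value, so this term has conditional expectation $2\Pr\{X=\hat{y}_{t+j-1}\}-1$, not $\gamma_2$. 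The proof of \cref{lemma.delta} crucially \emph{averages} over Bob's round-$(t+j-1)$ behavior: only because Bob is truthful there with probability at least $1-\delta$ is $\hat{y}_{t+j-1}$ (up to probability $\delta$) a fresh signal $y_{t+j-1}$, independent of $x_{t+j}$, which is what makes the cross term's expectation equal $\gamma_2$. For a skewed prior your pointwise version fails even with $\delta=0$: take $P_{X,Y}(1,1)=0.82$, $P_{X,Y}(0,0)=0.08$, $P_{X,Y}(1,0)=P_{X,Y}(0,1)=0.05$ (positively correlated, full support); then $\gamma_1=0.8$ and $\gamma_2=(0.74)^2\approx 0.548$, so $\frac{\gamma_1-\gamma_2}{2}\approx 0.126$, yet on any history with $\hat{y}_{t+j-1}=1$ the conditional drift is at most $\gamma_1-\left(2\Pr\{X=1\}-1\right)=0.8-0.74=0.06$.

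The failure is unavoidable at $j=1$: the report $\hat{y}_t$ is already determined by the conditioning $\mathcal{H}_t$ and may be arbitrary (Bob could have deviated at round $t$), so no positive drift can be guaranteed for that round at all; one only has the crude bound $r_{1,t+1}-r_{2,t+1}\geq -2$. This is exactly how the paper proceeds: it conditions on $\mathcal{H}_t$ throughout (so that for $j\geq 2$ Bob's round-$(t+j-1)$ report is still random and the averaged bound of \cref{lemma.delta} applies), sacrifices the first round via the $-2$ bound, and obtains $-2+\lceil\frac{1000}{\gamma_1-\gamma_2}\rceil\cdot\frac{\gamma_1-\gamma_2}{2}\geq 498>100$, rather than your claimed $\left(\lceil\frac{1000}{\gamma_1-\gamma_2}\rceil+1\right)\cdot\frac{\gamma_1-\gamma_2}{2}\geq 500$. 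Your argument is repairable---drop the tower/pointwise device, keep the conditioning at $\mathcal{H}_t$, and give up the $j=1$ term---but as written the round-by-round step is a genuine gap, not a formality.
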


Using Lemma~\ref{lemma.expectr12increase}, we are able to prove that for any $\varepsilon$, there exists a $u$ such that when $\mathcal{E}_t^{1,1}(u)$ or $\mathcal{E}_t^{2,2}(u)$ happens, Alice and Bob will tend to choose $(\opt_1,\opt_1)$ or $(\opt_2,\opt_2)$ with increasingly higher probability. Formally, we propose Lemma~\ref{lemma.convergecondition}.

\begin{lemma}\label{lemma.convergecondition}
Given the game defined in \cref{thm.converge}, for all $\epsilon>0$ there exists $u\in\mathbb{N}^+$ such that given a history $\mathcal{H}_{T}\in \mathcal{E}_T^{1,1}(u)\vee\mathcal{E}_T^{2,2}(u)$, we  have $
    \Pr\left\{\forall i\in\mathbb{N}, \mathcal{E}_{T+\left(\lceil\frac{1000}{\gamma_1-\gamma_2}\rceil+1\right)i}^{1,1}\left(\lfloor\frac{u}{2}\rfloor+i\right)\vee\mathcal{E}_{T+\left(\lceil\frac{1000}{\gamma_1-\gamma_2}\rceil+1\right)i}^{2,2}\left(\lfloor\frac{u}{2}\rfloor+i\right)=1\middle|\mathcal{H}_T\right\}\geq 1-\varepsilon$.
\end{lemma}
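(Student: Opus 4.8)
The plan is to reduce this ``for all $i$'' confinement statement to a one-dimensional random-walk-with-drift escape estimate, and then make the escape probability smaller than $\varepsilon$ by taking the initial threshold $u$ large. First I would invoke the symmetry between $\opt_1$ and $\opt_2$ (flipping) to assume without loss of generality that the starting history lies in $\mathcal{E}_T^{1,1}(u)$; the case $\mathcal{E}_T^{2,2}(u)$ is identical after relabeling. I would also record the two consequences of \cref{lemma.sumofpayoffs} that drive everything: since $R_{1,t}+R_{2,t}=0$ and $S_{1,t}+S_{2,t}=0$ hold for \emph{every} $t$, the per-round payoffs obey $r_{2,t}=-r_{1,t}$ and $s_{2,t}=-s_{1,t}$, so $r_{1,t}-r_{2,t}=2r_{1,t}$, and the event $\mathcal{E}_t^{1,1}(w)$ is exactly $\{R_{1,t}\geq w,\ S_{1,t}\geq w\}$.

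Next, set $L:=\lceil\frac{1000}{\gamma_1-\gamma_2}\rceil+1$ and sample the process at checkpoints $T_i=T+Li$, writing $M_i=R_{1,T_i}$ and $N_i=S_{1,T_i}$. Define the first failure time $\tau=\min\{i:\ M_i<\lfloor u/2\rfloor+i \text{ or } N_i<\lfloor u/2\rfloor+i\}$; the target event is exactly $\{\tau=\infty\}$, so it suffices to prove $\Pr[\tau<\infty\mid\mathcal{H}_T]\leq\varepsilon$. The crucial observation is that for every $i<\tau$ we have $M_i\geq\lfloor u/2\rfloor$ and $N_i\geq\lfloor u/2\rfloor$, i.e. $\mathcal{H}_{T_i}\in\mathcal{E}_{T_i}^{1,1}(\lfloor u/2\rfloor)$; hence, taking $u>2c_0+1$, \cref{lemma.expectr12increase} and its Alice--Bob-symmetric analogue for $s_{1}-s_{2}$ both apply and yield block drifts $\mathbb{E}[M_{i+1}-M_i\mid\mathcal{H}_{T_i}]\geq 50$ and $\mathbb{E}[N_{i+1}-N_i\mid\mathcal{H}_{T_i}]\geq 50$. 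Subtracting the deterministic $+1$ growth of the barrier, the centered gaps $\tilde M_i=M_i-(\lfloor u/2\rfloor+i)$ and $\tilde N_i=N_i-(\lfloor u/2\rfloor+i)$ each have drift at least $49$ while $i<\tau$, and their one-step increments are bounded in absolute value by $L+1$, since each per-round payoff lies in $\{-1,0,1\}$.

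Then I would control the escape probability with an exponential supermartingale. Decompose $\{\tau<\infty\}$ into the event that $\tilde M$ is the coordinate that first drops below $0$ and the symmetric event for $\tilde N$, and bound each. For the $\tilde M$ part, choose $\theta>0$ small enough that bounded increments together with drift $\geq 49$ force $\mathbb{E}[e^{-\theta(\tilde M_{i+1}-\tilde M_i)}\mid\mathcal{H}_{T_i}]\leq 1$ for all $i<\tau$ (possible because $\frac{d}{d\theta}\mathbb{E}[e^{-\theta\Delta}]$ at $\theta=0$ equals $-\mathbb{E}[\Delta]<0$ and $\Delta$ is bounded, with the bound depending only on $49$ and $L+1$, hence uniform in $i$ and $u$). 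Then $e^{-\theta\tilde M_{i\wedge\tau}}$ is a nonnegative supermartingale, and optional stopping plus Fatou, using $\tilde M_\tau<0$ on the relevant event, give $\Pr[\tilde M\text{ fails first}\mid\mathcal{H}_T]\leq e^{-\theta\tilde M_0}$. Since $\tilde M_0=R_{1,T}-\lfloor u/2\rfloor\geq u-\lfloor u/2\rfloor\geq u/2$ on $\mathcal{E}_T^{1,1}(u)$, and symmetrically for $\tilde N_0$, a union bound yields $\Pr[\tau<\infty\mid\mathcal{H}_T]\leq 2e^{-\theta u/2}$; finally pick $u$ with $u>2c_0+1$ and $2e^{-\theta u/2}\leq\varepsilon$.

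The main obstacle is the coupling between the two coordinates: the drift for Alice's reward in \cref{lemma.expectr12increase} is available only while the \emph{joint} good event $\mathcal{E}^{1,1}(\lfloor u/2\rfloor)$ holds, i.e. only while Bob's reward is also large, and vice versa. This is exactly why I use a single joint exit time $\tau$ rather than analyzing $M$ and $N$ in isolation: up to $\tau$ both coordinates sit above $\lfloor u/2\rfloor$, so both block drifts are simultaneously valid, and each coordinate's exponential supermartingale is run only up to the common stopping time $\tau$. The remaining care points are checking that $\theta$ is chosen uniformly (so the final bound is genuinely exponential in $u$) and that the ``which coordinate fails first'' decomposition keeps the relevant drift active on the event being bounded.
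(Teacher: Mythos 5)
Your proof is correct and takes essentially the same route as the paper's: the same WLOG reduction to starting in $\mathcal{E}_T^{1,1}(u)$ with $u>2c_0+1$, the same sampling at checkpoints spaced $\lceil\frac{1000}{\gamma_1-\gamma_2}\rceil+1$ rounds apart, the same joint stopping time that freezes the analysis the moment either coordinate drops below the growing barrier $\lfloor u/2\rfloor+i$ (which is exactly what keeps the drift from \cref{lemma.expectr12increase} and its Bob-symmetric analogue simultaneously valid), and the same which-coordinate-fails-first decomposition with a two-term union bound. The only divergence is the final concentration step — the paper applies Azuma--Hoeffding to the linearly compensated stopped sub-martingale at each fixed step $j$ and then sums the resulting tails over $j$, whereas you bound the ever-crossing probability in one shot via an exponential supermartingale with a $u$-uniform $\theta$ and optional stopping — and both yield the required failure probability that is exponentially small in $u$.
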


We design a sub-martingale $\{D_i\}_{i\in\mathbb{N}}$ that is proportional to $R_{1,T+i\left(\lceil\frac{1000}{\gamma_1-\gamma_2}\rceil+1\right)}-R_{2,T+i\left(\lceil\frac{1000}{\gamma_1-\gamma_2}\rceil+1\right)}$, and use Azuma-Hoeffding inequality to prove \cref{lemma.convergecondition}.  


\section{Simulations}
\label{sec.simu}
We simulate the CA mechanism with various learning algorithms: the Hedge algorithms, follow the perturbed leader, follow the leader, and $\epsilon$-greedy, and repeat the process 400 times with 800 rounds on each algorithm each time.  We define the \emph{converge proportion} in round $t$ as the fraction of the simulations where both agents report truthfully $\opt_1$ (or both use $\opt_2$) in all the subsequent rounds.


In our simulations, we use the following private signal distribution that satisfies \cref{asm.poscorr}: $P_{X,Y}(0,0)=P_{X,Y}(1,1)=0.4,P_{X,Y}(1,0)=P_{X,Y}(0,1)=0.2$.
Moreover, Alice and Bob are using the same learning algorithms in our simulations that are listed below:
First, Follow the Leader algorithm (FTL) chooses $\opt_i$ with probability proportional to $\mathbb{I}[R_i=\max\{R_1,R_2,R_3,R_4\}]$.  Follow the perturbed leader (FPL*, where * can be 1, 4 or 8) adds a uniform random noise between $0$ and $*$ and choose strategy $\opt_i$ with probability $\Pr\left\{R_i+p_i=\max_{j\in[4]}\{R_j+p_j\}\middle|p_j\overset{\mathrm{iid}}{\sim}\mathcal{U}[0,*]\right\}$.  We consider FPL1, FPL4, and FPL8. Hedge algorithm 1 choose $i$ with probability proportional to $3^{R_i/2}$ that is an implementation by choosing $\epsilon=0.5$ for the multiplicative weights algorithm of \citet{arora2012multiplicative}. Hedge algorithm 2 chooses $i$ with probability proportional to $e^{R_i}$ that is an implementation by choosing $\beta=1$ of exponentially weighted averaged forecaster introduced by~\citet{freund1997decision}. Finally, $\epsilon$-greedy algorithm uses time varying $\epsilon = \frac{1}{(t+1)^2}$ at round $t$. Note that $\epsilon$-greedy is not in $\mathcal{A}$ but still achieves truthful convergence. 

In Figure~\ref{Fig.convrate}, all our algorithms converge to truth-telling.  First, all reward-based online learning algorithms (the Hedge algorithms, follow the perturbed leader, and follow the leader) exhibit truthful convergence that aligns with our theoretical result, \cref{thm.converge}.   Moreover, although FTL is generally not no-regret, CA mechanism still works well with it.  Additionally, we observe that when an algorithm explores less (e.g. FPL4 vs FPL8), it converges faster, but very little exploration does not further improve the convergence rate (e.g. FTL vs FPL2).  Finally, we also find that the $\epsilon$-greedy with time decreasing $\epsilon$, which is not a reward-based online learning algorithm, also shows truthful convergence.  This suggests that the CA mechanism may have truthful convergence beyond reward-based online learning algorithms.

\begin{figure}[htbp] 
\centering
\includegraphics[width=0.8\textwidth]{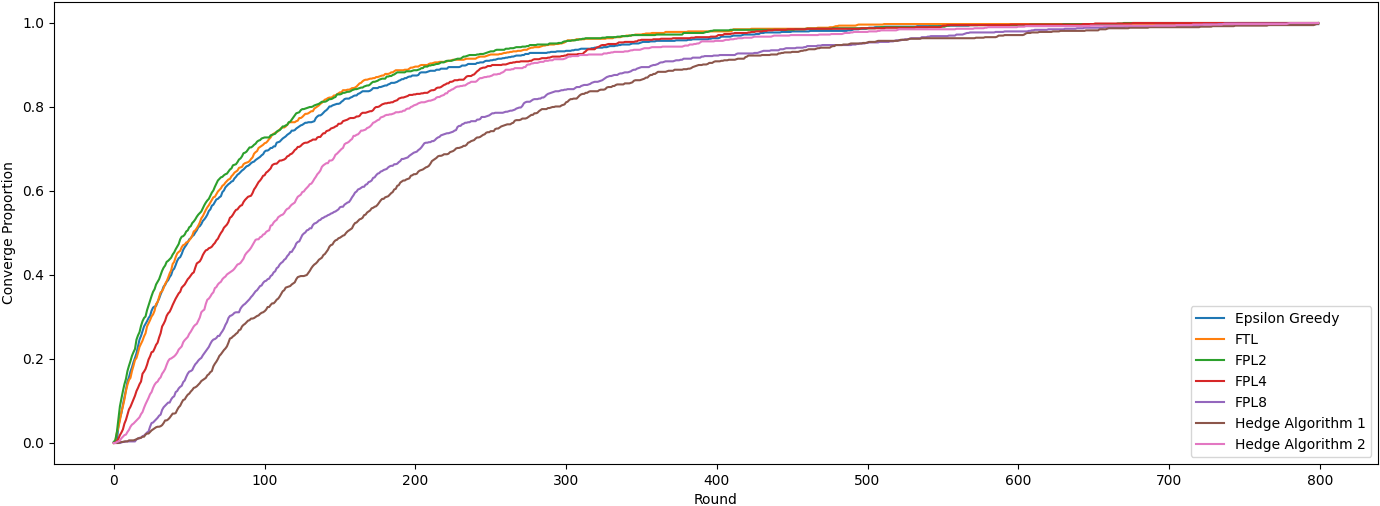}
\caption{Convergence Rates of Learning Algorithms in CA.}
\label{Fig.convrate}
\end{figure}

\section{Conclusions}

In this paper, we study sequential peer prediction with learning agents and prove that the notion of no-regret alone is not sufficient for truthful convergence. We then define a family of reward-based learning algorithms and show that the CA mechanism is able to achieve truthful convergence when agents use algorithms in this family. Finally, we give a discussion on the converge rates of different learning agents based on simulations.

This is the first theoretical study on peer prediction with learning agents. There are many open problems and future directions to extend this work. We believe similar proof techniques can be used to extend our results to settings where agents' private signals are generated by a Markov chain with some assumptions on the transition matrix. 
Moreover, this work is only restricted to binary signals and it is still an open problem whether there exists a mechanism for non-binary settings that can promise truthful convergence. For the learning agents, one could consider a more general family of learning algorithms such as when $f$ is time-varying. 

\begin{ack}
The authors would like to thank the anonymous reviewers for their valuable comments and constructive feedback. This work is partially supported by the National Science Foundation under Grant No. IIS 2007887 and by the National Science Foundation and Amazon under Grant No. FAI 2147187. 
\end{ack}

\bibliography{reference}

\begin{thebibliography}{29}
\providecommand{\natexlab}[1]{#1}
\providecommand{\url}[1]{\texttt{#1}}
\expandafter\ifx\csname urlstyle\endcsname\relax
  \providecommand{\doi}[1]{doi: #1}\else
  \providecommand{\doi}{doi: \begingroup \urlstyle{rm}\Url}\fi

\bibitem[Alon and Spencer(2016)]{alon2016probabilistic}
N.~Alon and J.~H. Spencer.
\newblock \emph{The probabilistic method}.
\newblock John Wiley \& Sons, 2016.

\bibitem[Arora et~al.(2012)Arora, Hazan, and Kale]{arora2012multiplicative}
S.~Arora, E.~Hazan, and S.~Kale.
\newblock The multiplicative weights update method: a meta-algorithm and
  applications.
\newblock \emph{Theory of computing}, 8\penalty0 (1):\penalty0 121--164, 2012.

\bibitem[Braverman et~al.(2017)Braverman, Mao, Schneider, and
  Weinberg]{braverman2017selling}
M.~Braverman, J.~Mao, J.~Schneider, and S.~M. Weinberg.
\newblock Selling to a no-regret buyer.
\newblock \emph{CoRR}, abs/1711.09176, 2017.
\newblock URL \url{http://arxiv.org/abs/1711.09176}.

\bibitem[Camara et~al.(2020)Camara, Hartline, and
  Johnsen]{camara2020mechanisms}
M.~K. Camara, J.~D. Hartline, and A.~Johnsen.
\newblock Mechanisms for a no-regret agent: Beyond the common prior.
\newblock In \emph{2020 IEEE 61st Annual Symposium on Foundations of Computer
  Science (FOCS)}, pages 259--270. IEEE, 2020.

\bibitem[Dasgupta and Ghosh(2013)]{dasgupta2013crowdsourced}
A.~Dasgupta and A.~Ghosh.
\newblock Crowdsourced judgement elicitation with endogenous proficiency.
\newblock In \emph{Proceedings of the 22nd international conference on World
  Wide Web}, pages 319--330, 2013.

\bibitem[Deng et~al.(2022)Deng, Hu, Lin, and Zheng]{deng2022nash}
X.~Deng, X.~Hu, T.~Lin, and W.~Zheng.
\newblock Nash convergence of mean-based learning algorithms in first price
  auctions.
\newblock In \emph{Proceedings of the ACM Web Conference 2022}, pages 141--150,
  2022.

\bibitem[Deng et~al.(2019)Deng, Schneider, and Sivan]{deng2019strategizing}
Y.~Deng, J.~Schneider, and B.~Sivan.
\newblock Strategizing against no-regret learners.
\newblock \emph{Advances in neural information processing systems}, 32, 2019.

\bibitem[Epstein(2012)]{epstein2012theory}
R.~A. Epstein.
\newblock \emph{The theory of gambling and statistical logic}.
\newblock Academic Press, 2012.

\bibitem[Faltings et~al.(2014)Faltings, Jurca, Pu, and
  Tran]{faltings2014incentives}
B.~Faltings, R.~Jurca, P.~Pu, and B.~D. Tran.
\newblock Incentives to counter bias in human computation.
\newblock In \emph{Second AAAI conference on human computation and
  crowdsourcing}, 2014.

\bibitem[Feller(2008)]{feller2008introduction}
W.~Feller.
\newblock \emph{An introduction to probability theory and its applications, vol
  2}.
\newblock John Wiley \& Sons, 2008.

\bibitem[Freund and Schapire(1997)]{freund1997decision}
Y.~Freund and R.~E. Schapire.
\newblock A decision-theoretic generalization of on-line learning and an
  application to boosting.
\newblock \emph{Journal of computer and system sciences}, 55\penalty0
  (1):\penalty0 119--139, 1997.

\bibitem[Hannan(1957)]{hannan1957approximation}
J.~Hannan.
\newblock Approximation to bayes risk in repeated play.
\newblock \emph{Contributions to the Theory of Games}, 3\penalty0 (2):\penalty0
  97--139, 1957.

\bibitem[Kajii and Morris(1997)]{kajii1997robustness}
A.~Kajii and S.~Morris.
\newblock The robustness of equilibria to incomplete information.
\newblock \emph{Econometrica: Journal of the Econometric Society}, pages
  1283--1309, 1997.

\bibitem[Kalai and Vempala(2005)]{kalai2005efficient}
A.~Kalai and S.~Vempala.
\newblock Efficient algorithms for online decision problems.
\newblock \emph{Journal of Computer and System Sciences}, 71\penalty0
  (3):\penalty0 291--307, 2005.

\bibitem[Kamble et~al.(2015)Kamble, Shah, Marn, Parekh, and
  Ramachandran]{kamble2015truth}
V.~Kamble, N.~Shah, D.~Marn, A.~Parekh, and K.~Ramachandran.
\newblock Truth serums for massively crowdsourced evaluation tasks.
\newblock \emph{arXiv preprint arXiv:1507.07045}, page~96, 2015.

\bibitem[Kong(2019)]{kong2020dominantly}
Y.~Kong.
\newblock Dominantly truthful multi-task peer prediction with a constant number
  of tasks.
\newblock \emph{CoRR}, abs/1911.00272, 2019.
\newblock URL \url{http://arxiv.org/abs/1911.00272}.

\bibitem[Kong and Schoenebeck(2016{\natexlab{a}})]{kong2016equilibrium}
Y.~Kong and G.~Schoenebeck.
\newblock Equilibrium selection in information elicitation without verification
  via information monotonicity.
\newblock \emph{arXiv preprint arXiv:1603.07751}, 2016{\natexlab{a}}.

\bibitem[Kong and Schoenebeck(2016{\natexlab{b}})]{kong2016framework}
Y.~Kong and G.~Schoenebeck.
\newblock A framework for designing information elicitation mechanisms that
  reward truth-telling.
\newblock \emph{CoRR}, abs/1605.01021, 2016{\natexlab{b}}.
\newblock URL \url{http://arxiv.org/abs/1605.01021}.

\bibitem[Liu and Chen(2018)]{liu2018surrogate}
Y.~Liu and Y.~Chen.
\newblock Surrogate scoring rules and a dominant truth serum for information
  elicitation.
\newblock \emph{arXiv preprint arXiv:1802.09158}, 2018.

\bibitem[Miller et~al.(2005)Miller, Resnick, and
  Zeckhauser]{miller2005eliciting}
N.~Miller, P.~Resnick, and R.~Zeckhauser.
\newblock Eliciting informative feedback: The peer-prediction method.
\newblock \emph{Management Science}, 51\penalty0 (9):\penalty0 1359--1373,
  2005.

\bibitem[Prelec(2004)]{prelec2004bayesian}
D.~Prelec.
\newblock A bayesian truth serum for subjective data.
\newblock \emph{science}, 306\penalty0 (5695):\penalty0 462--466, 2004.

\bibitem[Prelec et~al.(2017)Prelec, Seung, and McCoy]{prelec2017solution}
D.~Prelec, H.~S. Seung, and J.~McCoy.
\newblock A solution to the single-question crowd wisdom problem.
\newblock \emph{Nature}, 541\penalty0 (7638):\penalty0 532--535, 2017.

\bibitem[Radanovic and Faltings(2014)]{radanovic2014incentives}
G.~Radanovic and B.~Faltings.
\newblock Incentives for truthful information elicitation of continuous
  signals.
\newblock In \emph{Proceedings of the Twenty-Eighth AAAI Conference on
  Artificial Intelligence}, pages 770--776, 2014.

\bibitem[Schoenebeck and Yu(2020)]{schoenebeck2020learning}
G.~Schoenebeck and F.~Yu.
\newblock Learning and strongly truthful multi-task peer prediction: {A}
  variational approach.
\newblock \emph{CoRR}, abs/2009.14730, 2020.
\newblock URL \url{https://arxiv.org/abs/2009.14730}.

\bibitem[Schoenebeck et~al.(2021)Schoenebeck, Yu, and
  Zhang]{schoenebeck2021information}
G.~Schoenebeck, F.-Y. Yu, and Y.~Zhang.
\newblock Information elicitation from rowdy crowds.
\newblock In \emph{Proceedings of the Web Conference 2021}, pages 3974--3986,
  2021.

\bibitem[Shnayder et~al.(2016{\natexlab{a}})Shnayder, Agarwal, Frongillo, and
  Parkes]{shnayder2016informed}
V.~Shnayder, A.~Agarwal, R.~M. Frongillo, and D.~C. Parkes.
\newblock Informed truthfulness in multi-task peer prediction.
\newblock \emph{CoRR}, abs/1603.03151, 2016{\natexlab{a}}.
\newblock URL \url{http://arxiv.org/abs/1603.03151}.

\bibitem[Shnayder et~al.(2016{\natexlab{b}})Shnayder, Frongillo, and
  Parkes]{shnayder2016measuring}
V.~Shnayder, R.~Frongillo, and D.~C. Parkes.
\newblock Measuring performance of peer prediction mechanisms using replicator
  dynamics.
\newblock In \emph{Proceedings of the Twenty-Fifth International Joint
  Conference on Artificial Intelligence}, 2016{\natexlab{b}}.

\bibitem[Witkowski and Parkes(2012)]{witkowski2012robust}
J.~Witkowski and D.~C. Parkes.
\newblock A robust bayesian truth serum for small populations.
\newblock In \emph{Twenty-Sixth AAAI Conference on Artificial Intelligence},
  2012.

\bibitem[Zheng et~al.(2021)Zheng, Yu, and
  Chen]{DBLP:journals/corr/abs-2106-03176}
S.~Zheng, F.~Yu, and Y.~Chen.
\newblock The limits of multi-task peer prediction.
\newblock \emph{CoRR}, abs/2106.03176, 2021.
\newblock URL \url{https://arxiv.org/abs/2106.03176}.

\end{thebibliography}

\OnlyInFull{
\newpage
\section*{Checklist}


\begin{enumerate}

\item For all authors...
\begin{enumerate}
  \item Do the main claims made in the abstract and introduction accurately reflect the paper's contributions and scope?
    \answerYes{}
  \item Did you describe the limitations of your work?
    \answerYes{}
  \item Did you discuss any potential negative societal impacts of your work? 
    \answerNA{}Our work focuses on pure theoretical problems, thus it does not have any negative social impact.
  \item Have you read the ethics review guidelines and ensured that your paper conforms to them?
    \answerYes{}
\end{enumerate}

\item If you are including theoretical results...
\begin{enumerate}
  \item Did you state the full set of assumptions of all theoretical results?
    \answerYes{}
        \item Did you include complete proofs of all theoretical results?
    \answerYes{}
\end{enumerate}

\item If you ran experiments...
\begin{enumerate}
  \item Did you include the code, data, and instructions needed to reproduce the main experimental results (either in the supplemental material or as a URL)?
    \answerYes{}Our code is included in supplementary materials. 
  \item Did you specify all the training details (e.g., data splits, hyperparameters, how they were chosen)? 
    \answerYes{}Training details are specified in \cref{sec.simu}.
        \item Did you report error bars (e.g., with respect to the random seed after running experiments multiple times)?
    \answerYes{} We give error bars in \cref{app.errorbar} instead of in \cref{sec.simu} for clearness of \cref{Fig.convrate}.
        \item Did you include the total amount of compute and the type of resources used (e.g., type of GPUs, internal cluster, or cloud provider)?
    \answerYes{} We illustrate our hardware details in \cref{app.errorbar}. 
\end{enumerate}

\item If you are using existing assets (e.g., code, data, models) or curating/releasing new assets...
\begin{enumerate}
  \item If your work uses existing assets, did you cite the creators?
    \answerNA{}
  \item Did you mention the license of the assets?
    \answerNA{}
  \item Did you include any new assets either in the supplemental material or as a URL?
    \answerNA{}
  \item Did you discuss whether and how consent was obtained from people whose data you're using/curating?
    \answerNA{}
  \item Did you discuss whether the data you are using/curating contains personally identifiable information or offensive content?
    \answerNA{}
\end{enumerate}

\item If you used crowdsourcing or conducted research with human subjects...
\begin{enumerate}
  \item Did you include the full text of instructions given to participants and screenshots, if applicable?
    \answerNA{}
  \item Did you describe any potential participant risks, with links to Institutional Review Board (IRB) approvals, if applicable?
    \answerNA{}
  \item Did you include the estimated hourly wage paid to participants and the total amount spent on participant compensation?
    \answerNA{}
\end{enumerate}

\end{enumerate}
}


\clearpage
\appendix
\section*{Appendix}

\section{Basic Math}

\subsection{Martingale and Concentration}

In this section we will define martingales and some of its properties.
\begin{definition}[Martingale]
Let $\mathcal{F}=\{\mathcal{F}_t\}_{t\in\mathbb{N}}$ be a filtration, which is an increasing sequence of $\sigma$-field.
A \emph{martingale} with respect to $\mathcal{F}$ is a sequence $D_0, D_1, D_2,\cdots$ adapted to $\mathcal{F}$ ($D_t\in\mathcal{F}_t$ for all $t$) that satisfies for any time $t$,\begin{align*}
    &\mathbb{E}\left[|D_t|\right]<\infty,\\
    &\mathbb{E}\left[D_{t+1}|\mathcal{F}_t\right]=D_{t}.
\end{align*}
\end{definition}

There are two extensions of a martingale that replace the equality of conditional probability by upper and lower bounds.
\begin{definition}[Sub-martingale and super-martingale] \yiling{Both definitions are called sub-martingale below}
Let $\mathcal{F}=\{\mathcal{F}_t\}_{t\in\mathbb{N}}$ be a filtration, which is an increasing sequence of $\sigma$-field.
A \emph{sub-martingale} with respect to $\mathcal{F}$ \yiling{Should be $\mathcal{F}$?}\shi{Yes}is a sequence $D_0, D_1, D_2,\cdots$ adapted to $\mathcal{F}$ that satisfies for any time $t$,\begin{align*}
    &\mathbb{E}\left[|D_t|\right]<\infty,\\
    &\mathbb{E}\left[D_{t+1}|\mathcal{F}_t\right]\geq D_{t}.
\end{align*}
A \emph{super-martingale} with respect to $\mathcal{F}$ is a sequence $D_0, D_1, D_2,\cdots$ adapted to $\mathcal{F}$ that satisfies for any time $t$,\begin{align*}
    &\mathbb{E}\left[|D_t|\right]<\infty,\\
    &\mathbb{E}\left[D_{t+1}|\mathcal{F}_t\right]\leq D_{t}.
\end{align*}
\end{definition}

For a sub-martingale (or super-martingale), we have the Azuma–Hoeffding inequality \cite{alon2016probabilistic}, which is a concentration result for the values of martingales.
\begin{theorem}[name = Azuma–Hoeffding inequality, label = thm:asuma]
Suppose $\{D_t\}_{t\in\mathbb{N}}$ is a martingale (or super-martingale) and
$|D_t-D_{t-1}|\leq c_t$ almost surely. Then for any $T\in\mathbb{N}^+$ and $\epsilon\in\mathbb{R}^+$, we have\begin{align*}
    \Pr\left\{D_t-D_0\geq\epsilon\right\}\le\exp\left(\frac{-\epsilon^2}{2\sum_{t=1}^Tc_t^2}\right).
\end{align*}
And symmetrically, if $\{D_t\}_{t\in\mathbb{N}}$ is a sub-martingale, we have\begin{align*}
    \Pr\left\{D_t-D_0\leq-\epsilon\right\}\le\exp\left(\frac{-\epsilon^2}{2\sum_{t=1}^Tc_t^2}\right).
\end{align*}
\end{theorem}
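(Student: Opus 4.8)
The plan is to prove the upper-tail bound for a (super-)martingale by the exponential moment (Chernoff) method, reading the left-hand side as $\Pr\{D_T - D_0 \geq \epsilon\}$, and then to obtain the sub-martingale lower-tail bound by applying the same argument to the sequence $-D_t$. For the upper tail I would fix a free parameter $s>0$ and start from the exponential Markov inequality, $\Pr\{D_T - D_0 \geq \epsilon\} \leq e^{-s\epsilon}\,\mathbb{E}[e^{s(D_T - D_0)}]$, so that everything reduces to controlling the moment generating function of the telescoped sum $D_T - D_0 = \sum_{t=1}^T \xi_t$, where the increments $\xi_t := D_t - D_{t-1}$ satisfy $|\xi_t|\le c_t$ almost surely and $\mathbb{E}[\xi_t \mid \mathcal{F}_{t-1}] \le 0$, with equality in the martingale case.

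Next I would peel off the increments one at a time using the tower property. Since $e^{s(D_{T-1}-D_0)}$ is $\mathcal{F}_{T-1}$-measurable, conditioning on $\mathcal{F}_{T-1}$ gives $\mathbb{E}[e^{s(D_T-D_0)}] = \mathbb{E}\bigl[e^{s(D_{T-1}-D_0)}\,\mathbb{E}[e^{s\xi_T}\mid \mathcal{F}_{T-1}]\bigr]$. The crux is then a conditional Hoeffding bound on the inner factor, $\mathbb{E}[e^{s\xi_T}\mid \mathcal{F}_{T-1}] \le e^{s^2 c_T^2/2}$; feeding this in and iterating the peeling step down to $t=1$ yields $\mathbb{E}[e^{s(D_T-D_0)}] \le \exp\bigl(\tfrac{s^2}{2}\sum_{t=1}^T c_t^2\bigr)$. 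Substituting back into the Markov step and optimizing over the free parameter by taking $s = \epsilon/\sum_{t=1}^T c_t^2$ produces exactly the claimed exponent $-\epsilon^2/(2\sum_{t=1}^T c_t^2)$.

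The part that deserves the most care is Hoeffding's lemma together with the handling of the super-martingale drift. I would establish the lemma in its general conditional form: if $\mathbb{E}[X\mid\mathcal{F}]=0$ and $X\in[a,b]$ almost surely, then $\mathbb{E}[e^{sX}\mid\mathcal{F}]\le e^{s^2(b-a)^2/8}$. This follows from convexity of $x\mapsto e^{sx}$, which bounds $e^{sX}$ by its chord on $[a,b]$; taking the conditional expectation and using $\mathbb{E}[X\mid\mathcal{F}]=0$ leaves a quantity whose logarithm $\psi(s)$ satisfies $\psi(0)=\psi'(0)=0$ and $\psi''(s)\le (b-a)^2/4$, so $\psi(s)\le s^2(b-a)^2/8$ by a second-order Taylor expansion. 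To reduce the super-martingale case to this lemma, I would center each increment: writing $\mu_t=\mathbb{E}[\xi_t\mid\mathcal{F}_{t-1}]\le 0$ and applying the lemma to $\xi_t-\mu_t\in[-c_t-\mu_t,\,c_t-\mu_t]$, an interval of length $2c_t$, gives $\mathbb{E}[e^{s(\xi_t-\mu_t)}\mid\mathcal{F}_{t-1}]\le e^{s^2c_t^2/2}$, and since $s>0$ and $\mu_t\le 0$ the leftover factor $e^{s\mu_t}\le 1$ only helps. This centering is the one genuinely delicate bookkeeping point; once it is in place, the telescoping iteration and the optimization over $s$ are entirely routine, and the lower-tail statement for sub-martingales follows at once by applying the whole argument to $-D_t$.
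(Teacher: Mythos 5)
Your proof is correct and complete: the Chernoff exponential-moment step, the tower-property peeling, the conditional Hoeffding lemma, and the centering that absorbs the nonpositive drift $\mu_t$ via $e^{s\mu_t}\le 1$ (valid precisely because $s>0$) together yield the stated exponent after optimizing $s=\epsilon/\sum_{t=1}^T c_t^2$, and applying the whole argument to $-D_t$ correctly delivers the sub-martingale lower tail. Note that the paper itself gives no proof of this statement---it is quoted as a known result with a citation to Alon and Spencer---so yours is exactly the standard textbook argument one would expect; you also sensibly read the paper's $\Pr\{D_t-D_0\ge\epsilon\}$ as $\Pr\{D_T-D_0\ge\epsilon\}$, silently repairing an evident typo in the statement.
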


Now we define stopping time. This is intuitively a condition such that the "decision" whether to stop in the $t^{th}$ round should be based on information of the first $t$ rounds, instead of any future information.
\begin{definition}[Stopping time]
$\tau$ is called a \emph{stopping time} for a filtration $\mathcal{F}$ if and only if $\{\tau=t\}\in\mathcal{F}_t,\forall t$.
\end{definition}

\subsection{Limit Inferior and Limit Superior}

Limit inferior and limit superior are defined on sequences, representing limit bounds of a sequence. To meet our needs, our definition focus on discrete metric.

\begin{definition}
Let $\{\mathcal{E}_t\}_{t\in\mathbb{N}^+}$ be a sequence of events. Limit inferior and limit superior of this sequence are\begin{align*}
    \liminf_{t\rightarrow+\infty}\mathcal{E}_t=\vee_{t=1}^{+\infty}\left(\wedge_{i=t}^{+\infty}\mathcal{E}_i\right),
\end{align*}
and\begin{align*}
    \limsup_{t\rightarrow+\infty}\mathcal{E}_t=\wedge_{t=1}^{+\infty}\left(\vee_{i=t}^{+\infty}\mathcal{E}_i\right).
\end{align*}
\end{definition}

\subsection{Borel-Cantelli Lemma}

In this section, we formally introduce Borel-Cantelli lemma. Its proof can be found in \cite{feller2008introduction}.
\begin{theorem}[Borel-Cantelli lemma]\label{thm.borel}
Let $\mathcal{E}_1,\mathcal{E}_2,\cdots$ be a sequence of events in some probability space. Borel-Cantelli lemma states that if $\sum_{t=1}^{+\infty}\Pr\{\mathcal{E}_t\}<\infty$, then the probability that infinitely many of $\mathcal{E}_t$'s occur is $0$, or more strictly,\begin{align*}
    \Pr\left\{\limsup_{t\rightarrow+\infty}\mathcal{E}_t\right\}=0.
\end{align*}
\end{theorem}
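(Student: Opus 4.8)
The plan is to exploit the definition of $\limsup$ already recorded in the excerpt, namely $\limsup_{t\to+\infty}\mathcal{E}_t=\wedge_{t=1}^{+\infty}\left(\vee_{i=t}^{+\infty}\mathcal{E}_i\right)$, which is precisely the event that infinitely many of the $\mathcal{E}_t$ occur. The whole argument rests on sandwiching this event inside the tail unions and then using the vanishing tails of a convergent series, so no probabilistic machinery beyond monotonicity and countable subadditivity of $\Pr$ is needed.

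First I would introduce the tail unions $B_t:=\vee_{i=t}^{+\infty}\mathcal{E}_i$ for each $t\in\mathbb{N}^+$. By construction the sequence $(B_t)_{t}$ is non-increasing (each $B_{t+1}$ drops the term $\mathcal{E}_t$ from the union), and $\limsup_{t\to+\infty}\mathcal{E}_t=\wedge_{t=1}^{+\infty}B_t$. In particular, for every fixed $t$ the limsup event is contained in $B_t$, so monotonicity of probability gives $\Pr\left\{\limsup_{t\to+\infty}\mathcal{E}_t\right\}\le \Pr\{B_t\}$ for all $t$.

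Next I would bound each $\Pr\{B_t\}$ by the union bound (countable subadditivity of $\Pr$), obtaining $\Pr\{B_t\}=\Pr\left\{\vee_{i=t}^{+\infty}\mathcal{E}_i\right\}\le \sum_{i=t}^{+\infty}\Pr\{\mathcal{E}_i\}$. The hypothesis $\sum_{t=1}^{+\infty}\Pr\{\mathcal{E}_t\}<\infty$ means the series of probabilities converges, and therefore its tail $\sum_{i=t}^{+\infty}\Pr\{\mathcal{E}_i\}$ tends to $0$ as $t\to+\infty$. Combining the two displays yields $\Pr\left\{\limsup_{t\to+\infty}\mathcal{E}_t\right\}\le \inf_{t}\sum_{i=t}^{+\infty}\Pr\{\mathcal{E}_i\}=0$, and since probabilities are non-negative this forces the quantity to equal $0$, which is exactly the claim.

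There is essentially no deep obstacle here, as this is a classical result; the only points requiring care are that the union in $B_t$ is countable, so one must invoke countable (not merely finite) subadditivity, and that the convergence of $\sum_t \Pr\{\mathcal{E}_t\}$ is what guarantees the tails vanish rather than merely stay bounded. If one prefers, the bound $\Pr\{\limsup\}\le\Pr\{B_t\}$ can instead be justified by continuity of measure from above along the decreasing sequence $(B_t)_t$, giving $\Pr\{\limsup\}=\lim_{t\to+\infty}\Pr\{B_t\}$, after which the same tail estimate finishes the proof.
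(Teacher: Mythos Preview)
Your proof is correct and is the standard textbook argument. The paper does not actually supply its own proof of this theorem; it simply states the result and refers the reader to \cite{feller2008introduction}, so there is nothing to compare against beyond noting that your argument is the classical one found in that reference.
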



\section{Truthfulness of CA mechanism for Bayesian Agents}
\label{app.bayesiantruthful}
In this section, we assume that both agents are using consistent strategies as previous works assume \cite{dasgupta2013crowdsourced,shnayder2016informed,kong2016framework,schoenebeck2020learning,kong2020dominantly,DBLP:journals/corr/abs-2106-03176}. The formal definition of consistent strategy is given in \cref{def.consist}.
\begin{definition}[name = Consistent strategy, label = def.consist]
In a repeated game, a strategy profile $\sigma$ is a \emph{consistent strategy} if and only if for each agent $i$, she adopts $\sigma_i$ identically over each round of the game.
\end{definition}
In a sequential peer prediction game, for agent $i$, we set the average payoffs of agent $i$ in $T$ rounds as her utility.

Then we introduce Bayesian Nash equilibrium. A Bayesian Nash equilibrium (BNE) is strategies of agents that have the maximal expected payoff for each player given their beliefs on environments and others' strategies \cite{kajii1997robustness}. 
\begin{definition}[Bayesian Nash equilibiurm]
A strategy profile $\sigma$ is a \emph{Bayesian Nash equilibrium} if and only if for every agent $i$, the expected payoff of using $\sigma_i$ for agent $i$ is maximal keeping other agents' strategies unchanged. Moreover, A strategy profile $\sigma$ is a \emph{strict Bayesian Nash equilibrium} if and only if for every agent $i$, the expected payoff of using $\sigma_i$ for agent $i$ is strictly larger than any other strategies keeping other agents' strategies unchanged. 
\end{definition}

Moreover, We say a peer prediction mechanism is \emph{strongly truthful} if agents in truthtelling equilibrium get strictly higher payment than any other non-permutation equilibrium \cite{shnayder2016informed}. Here, a permutation equilibrium is the strategy profile that agents report a permutation of the signal. Formally, we have \cref{def.strongtruth}.
\begin{definition}[Strongly truthful]\label{def.strongtruth}
In a peer prediction game, if agents are using consistent strategies, a mechanism is \emph{strongly truthful} if and only if truthtelling is a BNE and also guarantees larger agent welfare than any non-permutation equilibrium. Here, welfare is defined by each agent's expected payoff so that is to say, the expected payoff of each agent using truthtelling strategy profile is strictly higher than the expected payoff using non-permutation equilibrium.
\end{definition}

Though \citet{dasgupta2013crowdsourced,shnayder2016informed} have proved that CA mechanism is strongly truthful in non-sequential settings, our settings are slightly different from theirs and we rewrite the proof for binary sequential peer prediction settings. More specifically, our CA mechanism uses the last round agreement term instead of average agreement term in the payoffs and we are focusing on average payoffs instead of total payoffs. Before the complete proof, we have the following lemma.
\begin{lemma}\label{thm.strictne}
For binary sequential signal peer prediction games under \cref{asm.apriori} and \cref{asm.poscorr}, if both agents are using consistent strategies, CA mechanism renders truthtelling strategy profile a strict Bayesian Nash equilibrium.
\end{lemma}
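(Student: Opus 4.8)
The plan is to fix Bob at truth-telling and show that Alice's expected per-round payoff, viewed as a function of her (consistent) reporting channel, is uniquely maximized by reporting truthfully; symmetry then handles Bob and yields the strict BNE. First I would parametrize any consistent strategy by the two conditional report probabilities $\alpha = \Pr[\hat{x}=1\mid x=1]$ and $\beta = \Pr[\hat{x}=1\mid x=0]$. A mixture with weights $w_1,\dots,w_4$ over the four options $\opt_1,\dots,\opt_4$ induces exactly the channel $(\alpha,\beta)=(w_1+w_3,\;w_2+w_3)$, so $(\alpha,\beta)$ ranges over all of $[0,1]^2$, while the truthful channel $(1,0)$ is realized \emph{only} by the pure strategy $\opt_1$. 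Since the payoff will turn out to depend only on the channel, this uniqueness is precisely what upgrades optimality to strictness.

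Next I would compute the expected payment. With Bob truthful, the round-$t$ payment to Alice (for $t\ge 2$) is $\mathbb{I}[\hat{x}_t=y_t]-\mathbb{I}[\hat{x}_t=y_{t-1}]$. Under \cref{asm.apriori} the pair $(x_{t-1},y_{t-1})$ is an independent copy of $(x_t,y_t)$, so $y_{t-1}$ is independent of $(x_t,y_t)$ yet shares the marginal of $y_t$. Taking expectations therefore gives $\E[\mathbb{I}[\hat{x}_t=y_{t-1}]]=\sum_{r}\Pr[\hat{x}_t=r]\Pr[y_t=r]$, and the whole per-round payoff collapses to the sum of covariance terms $\sum_{r\in\{0,1\}}\big(\Pr[\hat{x}_t=r,\,y_t=r]-\Pr[\hat{x}_t=r]\Pr[y_t=r]\big)$. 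Using that this $2\times 2$ covariance matrix has vanishing row and column sums in the binary case, a short computation yields the closed form $\E[r_t]=2(\alpha-\beta)\Delta$, where $\Delta:=P_{X,Y}(1,1)P_{X,Y}(0,0)-P_{X,Y}(1,0)P_{X,Y}(0,1)$ equals the signal covariance $P_{X,Y}(1,1)-P_X(1)P_Y(1)$.

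Finally I would invoke \cref{asm.poscorr}: from $\min\{P_{X,Y}(1,1),P_{X,Y}(0,0)\}>\max\{P_{X,Y}(1,0),P_{X,Y}(0,1)\}$ we get $\Delta>0$, so $\E[r_t]=2\Delta(\alpha-\beta)$ is strictly increasing in $\alpha$ and strictly decreasing in $\beta$, hence uniquely maximized over $[0,1]^2$ at $(\alpha,\beta)=(1,0)$. Because only $\opt_1$ realizes that channel, truth-telling strictly dominates every other consistent strategy against a truthful opponent, and by symmetry the same holds for Bob, giving a strict Bayesian Nash equilibrium.

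The core of the argument is conceptually clean rather than technically hard: the reduction to a covariance hinges entirely on \cref{asm.apriori} forcing the ``cross-task'' agreement term $\mathbb{I}[\hat{x}_t=y_{t-1}]$ to factor into marginals, and positive correlation then pins down the sign. I expect the only delicate bookkeeping to be the first round, which under the convention $\hat{y}_0=0$ is an exception to the steady-state formula; its contribution to the averaged utility is $O(1/T)$ and vanishes with the horizon, so it is strictness of the steady-state per-round payoff (identical across all $t\ge 2$ by the consistent-strategy and i.i.d.\ assumptions) that drives the result.
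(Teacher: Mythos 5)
Your proposal is correct, and it reaches the paper's conclusion by a cleaner route than the paper's own proof. The paper works in the channel parametrization directly (probabilities $p_1,p_0$ of reporting the signal truthfully given signal $1$ resp.\ $0$), expands both expectation terms $\E[\mathbb{I}[\hat{x}_t=\hat{y}_t]]$ and $\E[\mathbb{I}[\hat{x}_t=\hat{y}_{t-1}]]$ explicitly as bilinear functions of $(p_0,p_1,q_0,q_1)$, fixes Bob at $q_0=q_1=1$, and shows both partial derivatives $\partial/\partial p_0$ and $\partial/\partial p_1$ equal $2\bigl(P_{X,Y}(1,1)P_{X,Y}(0,0)-P_{X,Y}(1,0)P_{X,Y}(0,1)\bigr)>0$, so the payoff strictly increases toward the truthful corner $(1,1)$. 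Your covariance collapse, $\E[r_t]=2\,\Cov(\hat{x}_t,y_t)=2\Delta(\alpha-\beta)$, is the same linear function in a rotated parametrization, but it isolates \emph{why} the cross-task term factors (the i.i.d.\ assumption \cref{asm.apriori} plus consistency make $\hat{x}_t$ independent of $y_{t-1}$, which shares $y_t$'s marginal) rather than verifying it by brute-force expansion, and it makes the role of \cref{asm.poscorr} transparent via the single determinant $\Delta=ad-bc>0$. You also supply two details the paper leaves implicit: the observation that the truthful channel $(1,0)$ is realized by \emph{only} the pure strategy $\opt_1$ among mixtures over $\{\opt_1,\dots,\opt_4\}$, which is exactly what upgrades optimality over channels to strictness over strategies (the paper sidesteps this by quantifying over channels $(p_0,p_1)$ rather than mixtures, which is equivalent but unstated); and the first-round boundary term under the convention $\hat{y}_0=0$, which indeed breaks per-round strictness at $t=1$ (e.g.\ $\opt_3$ beats $\opt_1$ there) but contributes only $O(1/T)$ to the average utility — the paper's proof silently computes the steady-state round and ignores this. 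Both arguments are sound and hinge on the same quantity $\Delta>0$; yours is somewhat more robust in its bookkeeping.
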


\begin{proof}
We know that Alice and Bob are Bayesian agents using consistent strategies. Then we can suppose when Alice gets signal $1$, she reports $1$ with probability $p_1$; when she gets signal $0$, she reports $0$ with probability $p_0$. Similarly, Bob is using a fixed strategy that when he gets signal $1$, he reports $1$ with probability $q_1$; when he gets signal $0$, he reports $0$ with probability $q_0$. Therefore, we can compute the first term in payoff of Alice and Bob (see \cref{eq:ca}), $\mathbb{E}[\mathbb{I}[\hat{x}_t=\hat{y}_t]]$, by \begin{align*}
    \mathbb{E}[\mathbb{I}[\hat{x}_t=\hat{y}_t]]&=P_{X,Y}(1,1)(p_1q_1+(1-p_1)(1-q_1))+P_{X,Y}(0,0)(p_0q_0+(1-p_0)(1-q_0))\\
    &+P_{X,Y}(1,0)(p_1(1-q_0)+(1-p_1)q_0)+P_{X,Y}(0,1)(q_1(1-p_0)+(1-q_1)p_0).
\end{align*} 
Also, we can compute the second term\begin{small}\begin{align*}
    \mathbb{E}[\mathbb{I}[\hat{x}_t=\hat{y}_{t-1}]]&=(P_{X,Y}(1,1)+P_{X,Y}(1,0))(P_{X,Y}(0,1)+P_{X,Y}(1,1))(p_1q_1+(1-p_1)(1-q_1))\\
    &+(P_{X,Y}(0,0)+P_{X,Y}(0,1))(P_{X,Y}(1,0)+P_{X,Y}(0,0))(p_0q_0+(1-p_0)(1-q_0))\\
    &+(P_{X,Y}(1,1)+P_{X,Y}(1,0))(P_{X,Y}(0,0)+P_{X,Y}(1,0))(p_1(1-q_0)+(1-p_1)q_0)\\
    &+(P_{X,Y}(0,1)+P_{X,Y}(0,0))(P_{X,Y}(0,1)+P_{X,Y}(1,1))((1-p_0)q_1+p_0(1-q_1)).
\end{align*}\end{small}
Similarly, $\mathbb{E}[\mathbb{I}[\hat{y}_t=\hat{x}_{t-1}]]$ also equals to this expression.

To prove BNE, we only need to prove that when $q_0=q_1=1$, we have\begin{align}
    \argmax_{(p_0,p_1)}\left\{\mathbb{E}[\mathbb{I}[\hat{x}_t=\hat{y}_t]]-\mathbb{E}[\mathbb{I}[\hat{x}_t=\hat{y}_{t-1}]]\right\}=(1,1).\label{eq.strictne}
\end{align}
The other side when $p_0=p_1=1$ is symmetrical.

Actually, we have\begin{align*}
    &\frac{\partial (\mathbb{E}[\mathbb{I}[\hat{x}_t=\hat{y}_t]|q_0=q_1=1]-\mathbb{E}[\mathbb{I}[\hat{x}_t=\hat{y}_{t-1}]|q_0=q_1=1])}{\partial p_0}\\
    &=P_{X,Y}(0,0)-P_{X,Y}(0,1)\\
    &+(P_{X,Y}(0,0)+P_{X,Y}(0,1))(P_{X,Y}(0,1)+P_{X,Y}(1,1)-P_{X,Y}(0,0)-P_{X,Y}(1,0))\\
    &=2P_{X,Y}(0,0)-2(P_{X,Y}(0,0)+P_{X,Y}(0,1))(P_{X,Y}(0,0)+P_{X,Y}(1,0))\\
    &=2(P_{X,Y}(0,0)P_{X,Y}(1,1)-P_{X,Y}(0,1)P_{X,Y}(1,0))\\
    &>0,
\end{align*}
and\begin{align*}
    &\frac{\partial (\mathbb{E}[\mathbb{I}[\hat{x}_t=\hat{y}_t]|q_0=q_1=1]-\mathbb{E}[\mathbb{I}[\hat{x}_t=\hat{y}_{t-1}]|q_0=q_1=1])}{\partial p_1}\\
    &=P_{X,Y}(1,1)-P_{X,Y}(1,0)\\
    &+(P_{X,Y}(1,1)+P_{X,Y}(1,0))(P_{X,Y}(0,0)+P_{X,Y}(1,0)-P_{X,Y}(1,1)-P_{X,Y}(0,1))\\
    &=2P_{X,Y}(1,1)-2(P_{X,Y}(1,1)+P_{X,Y}(1,0))(P_{X,Y}(1,1)+P_{X,Y}(0,1))\\
    &=2(P_{X,Y}(1,1)P_{X,Y}(0,0)-P_{X,Y}(0,1)P_{X,Y}(1,0))\\
    &>0,
\end{align*}
which indicate \cref{eq.strictne}.
\end{proof}

Furthermore, we can find all the Nash equilibria for Bayesian agents under our CA mechanism in sequential peer prediction setting. We have \cref{thm.bayesianne}.
\begin{theorem}\label{thm.bayesianne}
For binary sequential signal peer prediction games under \cref{asm.apriori} and \cref{asm.poscorr}, if CA mechanism is used and agents adopt consistent strategies, there are three types of Nash equilibria for agents, which are\begin{enumerate}
    \item truth-telling $\opt_1$,
    \item flip the signal $\opt_2$,
    \item report regardless of private signals (uninformative reports). 
\end{enumerate} 
\end{theorem}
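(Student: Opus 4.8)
The plan is to reduce the classification to a pair of linear best-response computations. Following the proof of \cref{thm.strictne}, I would parametrize Alice's consistent strategy by $(p_0,p_1)\in[0,1]^2$, where she reports her signal correctly with probability $p_1$ on signal $1$ and probability $p_0$ on signal $0$, and Bob's by $(q_0,q_1)$ symmetrically. Under \cref{asm.apriori} the reports on distinct tasks are i.i.d., so for every round $t\ge 2$ the pair $(\hat x_t,\hat y_t)$ is independent of $(\hat x_{t-1},\hat y_{t-1})$, and Alice's average (equivalently, per-round) expected payoff is exactly the quantity $U_A(p_0,p_1;q_0,q_1):=\mathbb{E}[\mathbb{I}[\hat x_t=\hat y_t]]-\mathbb{E}[\mathbb{I}[\hat x_t=\hat y_{t-1}]]$ already written out in that proof (the $t=1$ boundary term contributes $O(1/T)$ and is irrelevant to the limit). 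The structural observation that drives everything is that $U_A$ is affine in $(p_0,p_1)$ for fixed $(q_0,q_1)$; hence Alice's best response is bang-bang, determined entirely by the signs of $\partial U_A/\partial p_0$ and $\partial U_A/\partial p_1$, and symmetrically for Bob.

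First I would carry out the derivative computation in full generality, since the proof of \cref{thm.strictne} only treated the special case $q_0=q_1=1$. Writing $\Delta:=P_{X,Y}(1,1)P_{X,Y}(0,0)-P_{X,Y}(1,0)P_{X,Y}(0,1)$, a direct expansion — using that the second term factorizes through the report marginals because different tasks are independent — collapses to the clean identity
\[
\frac{\partial U_A}{\partial p_0}=\frac{\partial U_A}{\partial p_1}=2\Delta\,(q_0+q_1-1),
\]
and, writing $U_B$ for Bob's analogous payoff, $\partial U_B/\partial q_0=\partial U_B/\partial q_1=2\Delta(p_0+p_1-1)$. \cref{asm.poscorr} gives $\Delta>0$ (the same positivity already exploited in \cref{thm.strictne}), so the two partials share a single sign governed by $q_0+q_1-1$. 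Consequently Alice's unique best response is truth-telling $(p_0,p_1)=(1,1)$ when $q_0+q_1>1$, flipping $(p_0,p_1)=(0,0)$ when $q_0+q_1<1$, and an arbitrary $(p_0,p_1)$ (indifference) when $q_0+q_1=1$; Bob's best response mirrors this through $p_0+p_1-1$.

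It then remains to intersect the two best-response correspondences. If $q_0+q_1>1$ then Alice plays $(1,1)$, forcing $p_0+p_1=2>1$, so Bob plays $(1,1)$, which is consistent: the truth-telling equilibrium $\opt_1$. Symmetrically $q_0+q_1<1$ yields the flipping equilibrium $\opt_2$. In the indifference regime both first-order conditions must vanish simultaneously, forcing $p_0+p_1=1$ and $q_0+q_1=1$, since any corner response by one agent would contradict the other's indifference constraint. But $p_0+p_1=1$ means $\Pr[\hat x_t=1\mid x_t=1]=p_1=1-p_0=\Pr[\hat x_t=1\mid x_t=0]$, i.e. Alice's report is independent of her signal — precisely an uninformative report, a mixture of $\opt_3$ and $\opt_4$ — and likewise for Bob. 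These three families exhaust the equilibria, which is the claim.

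I expect the main obstacle to be the algebraic reduction of the two partial derivatives to the single expression $2\Delta(q_0+q_1-1)$: it requires differentiating the different-task agreement term, whose coefficients are products of report marginals, and verifying that the off-diagonal contributions cancel to leave only $\Delta$. Once this identity is established, the positivity $\Delta>0$ together with the affine/bang-bang structure makes the equilibrium enumeration essentially mechanical; the only remaining care is interpreting the degenerate line $p_0+p_1=1$ as the uninformative strategies and checking that an indifferent profile for one agent cannot be paired with a corner best response of the other.
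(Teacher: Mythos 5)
Your proposal is correct and takes essentially the same route as the paper's own proof: the identical $(p_0,p_1,q_0,q_1)$ parametrization of consistent strategies, the same computation that both partial derivatives of Alice's per-round payoff equal $2\Delta(q_0+q_1-1)$ with $\Delta = P_{X,Y}(1,1)P_{X,Y}(0,0)-P_{X,Y}(1,0)P_{X,Y}(0,1)>0$, and the same bang-bang best-response intersection yielding exactly the truth-telling, flipping, and uninformative ($p_0+p_1=q_0+q_1=1$) equilibria. (Incidentally, your identity is the corrected form of the paper's displayed $\partial/\partial p_1$, whose constant term reads $-\Delta$ rather than $-2\Delta$ — a typo that does not affect the paper's case analysis or conclusion.)
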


\begin{proof}
Using the same notations in the proof of \cref{thm.strictne}, we deduce that \begin{align*}
    &\frac{\partial (\mathbb{E}[\mathbb{I}[\hat{x}_t=\hat{y}_t]]-\mathbb{E}[\mathbb{I}[\hat{x}_t=\hat{y}_{t-1}]])}{\partial p_0}\\
    &=(P_{X,Y}(1,1)P_{X,Y}(0,0)-P_{X,Y}(1,0)P_{X,Y}(0,1))(2q_0+2q_1)\\
    &-2(P_{X,Y}(1,1)P_{X,Y}(0,0)-P_{X,Y}(1,0)P_{X,Y}(0,1)),
\end{align*}
and similarly,\begin{align*}
    &\frac{\partial (\mathbb{E}[\mathbb{I}[\hat{x}_t=\hat{y}_t]]-\mathbb{E}[\mathbb{I}[\hat{x}_t=\hat{y}_{t-1}]])}{\partial p_1}\\
    &=(P_{X,Y}(1,1)P_{X,Y}(0,0)-P_{X,Y}(1,0)P_{X,Y}(0,1))(2q_0+2q_1)\\&-(P_{X,Y}(1,1)P_{X,Y}(0,0)-P_{X,Y}(1,0)P_{X,Y}(0,1)).
\end{align*}
Therefore, we know that when $q_0+q_1>1$, the best response of Alice is $p_0=p_1=1$; when $q_0+q_1<1$, the best response of Alice is $p_0=p_1=0$. Symmetrically, we can deduce that when $p_0+p_1>1$, the best response of Alice is $q_0=q_1=1$; when $p_0+p_1<1$, the best response of Alice is $q_0=q_1=0$.

Let $p_0^*,p_1^*,q_0^*,q_1^*$ be a Nash equilibrium for this peer prediction game under CA mechanism. Then given $p_0^*,p_1^*$, $q_0^*,q_1^*$ should be one of the best responses of Bob; given $q_0^*,q_1^*$, $p_0^*,p_1^*$ should be one of the best responses of Alice.

If $p_0^*+p_1^*>1$, then the only best response of Bob is $q_0=q_1=1$, so $q^*_0=q^*_1=1$. When $q^*_0=q^*_1=1$, the only best response of Alice is $p_0=p_1=1$, so we can deduce that $p_0^*=p_1^*=q_0^*=q_1^*=1$ is the unique Nash equilibrium for this case.

If $p_0^*+p_1^*<1$, then the only best response of Bob is $q_0=q_1=0$, so $q^*_0=q^*_1=0$. When $q^*_0=q^*_1=0$, the only best response of Alice is $p_0=p_1=0$, so we can deduce that $p_0^*=p_1^*=q_0^*=q_1^*=0$ is the unique Nash equilibrium for this case.

Symmetrically, when $q_0^*+q_1^*\neq 1$, there are still only these two Nash equilibria. Moreover, it is easy to verify that $p_0^*,p_1^*,q_0^*,q_1^*$ that satisfies $p_0^*+p_1^*=q_0^*+q_1^*=1$ is Nash equilibrium. Therefore, all the uninformative reports are Nash equilibria. These are exactly what we want to prove.
\end{proof}

From the proof of \cref{thm.strictne}, we can observe that when both agents use $\opt_1$, their expected payoffs are both \begin{align*}
    &P_{X,Y}(1,1)+P_{X,Y}(0,0)-(P_{X,Y}(1,1)+P_{X,Y}(1,0))(P_{X,Y}(0,1)+P_{X,Y}(1,1))\\
    &-(P_{X,Y}(0,1)+P_{X,Y}(0,0))(P_{X,Y}(1,0)+P_{X,Y}(0,0))\\
    &=2P_{X,Y}(1,1)P_{X,Y}(0,0)-2P_{X,Y}(1,0)P_{X,Y}(0,1)\\
    &>0.
\end{align*}
Also, we can observe that when both agents use uninformative reports, their expected payoffs are both $0$. According to \cref{thm.bayesianne}, we know that the only non-permutation equilibrium is uninformative reports. Therefore, we deduce that CA mechanism is strongly truthful.
\begin{theorem}
For binary sequential signal peer prediction games under \cref{asm.apriori} and \cref{asm.poscorr}, if agents adopt consistent strategies, CA mechanism is strongly truthful.
\end{theorem}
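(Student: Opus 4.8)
The plan is to unpack \cref{def.strongtruth} and discharge its two requirements using the results already assembled above. Strong truthfulness asks that (i) truth-telling be a BNE and (ii) the agent welfare at the truthful profile strictly exceed the welfare at every non-permutation equilibrium. Requirement (i) is immediate: \cref{thm.strictne} shows that under \cref{asm.apriori,asm.poscorr} the profile $(\opt_1,\opt_1)$ is in fact a \emph{strict} BNE, which in particular makes it a BNE.

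For requirement (ii), I would first use \cref{thm.bayesianne} to enumerate the candidate equilibria to compare against. That theorem shows the only BNEs are truth-telling $\opt_1$, signal-flipping $\opt_2$, and the uninformative profiles in which each agent's report is independent of her signal. The first two are permutation equilibria (each agent's report is a bijection of her signal), so the only \emph{non-permutation} equilibria are the uninformative ones. Hence it suffices to show that each agent's expected payoff at $(\opt_1,\opt_1)$ strictly exceeds her expected payoff at any uninformative profile.

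The two payoff computations are already in hand. From the derivation in \cref{thm.strictne}, at the truthful profile each agent's expected payoff equals $2\bigl(P_{X,Y}(1,1)P_{X,Y}(0,0)-P_{X,Y}(1,0)P_{X,Y}(0,1)\bigr)$, which is strictly positive by the positive-correlation \cref{asm.poscorr}. At an uninformative profile, each report $\hat{x}_t,\hat{y}_t,\hat{y}_{t-1}$ is an independent draw from a fixed signal-independent distribution, so $\hat{y}_t$ and $\hat{y}_{t-1}$ are identically distributed and independent of $\hat{x}_t$; consequently $\mathbb{E}[\mathbb{I}[\hat{x}_t=\hat{y}_t]]=\mathbb{E}[\mathbb{I}[\hat{x}_t=\hat{y}_{t-1}]]$ and the two terms of the CA payment in \cref{eq:ca} cancel, leaving expected payoff $0$. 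Comparing $2\bigl(P_{X,Y}(1,1)P_{X,Y}(0,0)-P_{X,Y}(1,0)P_{X,Y}(0,1)\bigr)>0$ to $0$ yields the strict welfare gap, completing requirement (ii).

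Most of the work has effectively been done by \cref{thm.strictne,thm.bayesianne}, so I do not expect a serious obstacle; the one point that warrants care is the welfare comparison itself, namely confirming that \emph{every} non-permutation equilibrium is uninformative (so that $0$ is the correct benchmark) and that the cancellation of the two CA agreement terms holds uniformly across all such profiles rather than for a single representative one.
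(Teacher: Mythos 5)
Your proposal is correct and follows essentially the same route as the paper: it invokes \cref{thm.strictne} for the truthful BNE, \cref{thm.bayesianne} to identify uninformative profiles as the only non-permutation equilibria, and compares the truthful payoff $2\bigl(P_{X,Y}(1,1)P_{X,Y}(0,0)-P_{X,Y}(1,0)P_{X,Y}(0,1)\bigr)>0$ against the zero payoff at uninformative profiles. If anything, your explicit cancellation argument for why uninformative reports yield expected payoff $0$ (that $\hat{y}_t$ and $\hat{y}_{t-1}$ are i.i.d.\ and independent of $\hat{x}_t$, so the two agreement terms of \cref{eq:ca} cancel) spells out a step the paper merely asserts.
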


\section{Proofs and Details of Section~\ref{sec:noregret}}
\subsection{Impossibility of Truthful Convergence for No Regret Agents}\label{app:impossible}
\yiling{I'm a bit confused by the notation used in this proof. Note that the big $R_t$ and $R_{i,t}$ were not defined yet in the main text when this theorem is introduced. Also, $R$'s are the cumulative rewards. Why are we still summing $R_t$ in this proof?}\fang{My bad.  The uppercase was for the random variable.  It's updated}
\begin{proof}[Proof of \cref{thm:impossible}]
First, for any sequential information elicitation mechanism because Alice and Bob can use arbitrary no regret algorithm, there exist no-regret for Alice and Bob.  If the truth-telling has regret on $\mathcal{M}$, the statement trivially holds.  Otherwise, suppose truth-telling is no regret on $\mathcal{M}$.  We have for all $T$ the expectations $\mu^X_T = \max_i \sum_{t\le T} \E r_{i,t}-\sum_{t\le T} \E r_{t}$ and $\mu^Y_T = \max_i \sum_{t\le T} \E s_{i,t}-\sum_{t\le T} \E s_{t}$ satisfy
\begin{equation}\label{eq:expect}
    \mu_T^X\text{ and }\mu_T^Y = o(T).
\end{equation}  We will use probabilistic method to show the existence of a deterministic and no regret sequence of strategies $(\opt^X_t, \opt^Y_t)_{t = 1,\dots}$ that consists of reporting $1$ and $0$ regardless of private signal $\opt_3, \opt_4$.  As a result, when Alice and Bob use the sequence, the algorithm is no regret but $\mathcal{M}$ does not achieve truthful convergence on such algorithm. 
To find such sequence, it is sufficient for us to find a deterministic sequence $(\hat{x}^*_t, \hat{y}^*_t)_{t\ge 1}$ so that for all $T$
\begin{equation}\label{eq:derandom}
    \max_i \sum r^*_{i,t}-\sum r^*_t\text{ and }\max_i \sum s^*_{i,t}-\sum s^*_t = o(T),
\end{equation}
because we can define an online learning algorithms so that Alice play $\opt^X_t = \opt_{2+\hat{x}_t^*}$ and Bob play $\opt^Y_t = \opt_{2+\hat{y}^*_t}$ for all $t$.
Additionally, if we can find $(\hat{\vx}^{*}_t, \hat{\vy}^{*}_t)_{1\le t}$ and $T_0$ that for all $T\ge T_0$, 
\begin{equation}\label{eq:concen}
    |\max_i \sum r_{i,t}-\sum r_t-\mu^X_T|\text{ and }|\max_i \sum s_{i,t}-\sum s_t-\mu^Y_T|\le T^{2/3},
\end{equation}
by \cref{eq:expect} $(\hat{\vx}^{*}_t, \hat{\vy}^{*}_t)_{1\le t}$ satisfies \cref{eq:derandom}.  

Because their signal mutually independent across different rounds and each round's signals can only affect at most $2k-1$ rounds of payoff, each pair of signals only changes $\max_i \sum_{t\le T} r_{i,t}-\sum_{t\le T} r_t$ by $2(2k-1)\le 4k$.  Therefore, by Chernoff bound using method of bounded difference, we have for all $T$ and  $\epsilon>0$
\begin{align*}
    &\Pr\left[\left|\max_i \sum _{t\le T}r_{i,t}-\sum_{t\le T} r_t- \mu_T^X\right|\le \epsilon T\right]\le 2\exp\left(-\frac{2\epsilon^2}{16k^2}T\right)\\
    &\Pr\left[\left|\max_i \sum_{t\le T} s_{i,t}-\sum_{t\le T} s_t- \mu_T^Y\right|\le \epsilon T\right]\le 2\exp\left(-\frac{2\epsilon^2}{16k^2}T\right)
\end{align*}
Thus, we can take $\epsilon = T^{-1/3}$ and $T_0$ large enough and prove the random payoffs of truth-telling satisfy \cref{eq:concen} for all $T\ge T_0$ with high probability by union bound.  Therefore, by probabilistic method there exists a (determistic) sequence $(\hat{\vx}^{*}_t, \hat{\vy}^{*}_t)_{1\le t}$ so that \cref{eq:concen} holds for all $T\ge T_0$.  
\end{proof}

\subsection{Truthful Convergence Implies No regret}\label{app:converge2noregret}

\begin{proof}[Proof of \cref{thm:converge2noregret}]
Given reports $\hat{\vx}$ and signals $\vx$, let $\hat{\vx}_{t-k:t} = (\hat{x}_{t-k+1}, \dots, \hat{x}_{t})$ be a slice of reports $\hat{\vx}$, and $\opt_i({\vx}_{t-k:t}) = (\opt_i(x_{t-k+1}),\dots, \opt_i(x_{t}))$ be a slice reports under consistent strategy $\opt_i$.  Given $T$, we define four functions on the signals 
$$F^i_T(\vx, \vy):= \sum_{t = 0}^T\bar{M}^X(\opt_i(\vx_{t-k:t}), {\vy}_{t-k:t})-\bar{M}^X({\vx}_{t-k:t}, {\vy}_{t-k:t})$$
for $i = 1,2,3,4$.  We want to show the value of $F^i_T$ is small.  Specifically, we bound the probability of the following good event
$$\mathcal{G}:=\{\max_i F^i_T(\vx, \vy)\le T^{2/3}\}.$$
First because $\bar{M}$ is strongly truthful, $\E_{\sigma^X}[\bar{M}^X(\hat{\vx}, \vy)]\le\E[\bar{M}^X({\vx}, \vy)]$ and $\E_{\sigma^Y}[\bar{M}^Y({\vx}, \hat{\vy})]\le \E[\bar{M}^Y({\vx}, \vy)]$, the expectation is non-positive
$$\E[F^i_T(\vx, \vy)]= \sum_{t = 0}^T\E_{\sigma^X = \opt_i}\left[ \bar{M}^X(\hat{\vx}_{t-k:t}, {\vy}_{t-k:t})-\bar{M}^X({\vx}_{t-k:t}, {\vy}_{t-k:t})\right]\le 0$$ for all $i$. 
Second, because each round's signals are mutually independent and can only affect at most $2k-1$ round of payoff, by Chernoff bound on $F^i_T$, we have for all $i$, $\Pr\left[F^i_T(\vx, \vy)\ge T^{2/3}\right]\le \Pr\left[F^i_T(\vx, \vy)\ge T^{2/3}+\E F^i_T\right]\le \exp\left(-\frac{2}{16k^2}T^{1/3}\right)$.  By union bound, we have
\begin{equation}\label{eq:converge2noregret1}
    \Pr\left[\mathcal{G}\right]\ge 1-4\exp\left(-\frac{2}{16k^2}T^{1/3}\right).
\end{equation}

On the other hand, the truthful convergence consists of two disjoint events: both converging to truth telling $\opt_1$, and both converging to the flipping strategy $\opt_2$.  By symmetric suppose the first event happens with a nonzero probability
$$\mathcal{E}:=\{\lim_{t\rightarrow +\infty}\opt^X_t=\lim_{t\rightarrow +\infty}\opt^Y_t=\opt_1\}.$$
Then there exists a random round $t^*$ so that $\opt^X_t=\opt^Y_t=\opt_1$ for all $t\ge t^*$ given $\mathcal{E}$. 
To bound the expected regret conditional on $\mathcal{E}$, we consider two cases: If the converge time $t^*$ is greater than $T^{2/3}$, we use the truthful convergence to show the probability is small. Otherwise if the converge time $t^*$ is smaller than $T^{2/3}$, we can ignore the first term.

Formally, Alice's expected regret is 
$$\E\left[Reg^X(T)\mid \mathcal{E}\right] =\E\left[\mathbf{1}[t^*> T^{2/3}]Reg^X(T)\mid \mathcal{E}\right]+ \E\left[\mathbf{1}[t^*\le T^{2/3}]Reg^X(T)\mid \mathcal{E}\right].$$
For the first term, $\E\left[\mathbf{1}[t^*> T^{2/3}]Reg^X(T)\mid \mathcal{E}\right]\le T\E\left[\mathbf{1}[t^*> T^{2/3}]\mid \mathcal{E}\right] = T\Pr[t^*> T^{2/3}\mid \mathcal{E}]$.  Because $\Pr[t^*> T^{2/3}] = o(1)$ as $T$ increases due to truthful convergence, and $\mathcal{E}$ happens with nonzero probability, we have
\begin{equation}\label{eq:converge2noregret2}
    \E\left[\mathbf{1}[t^*> T^{2/3}]Reg^X(T)\mid \mathcal{E}\right] = o(T).
\end{equation}
On the other hand, when $t^*\le T^{2/3}$ happens,
\begin{align*}
    &Reg^X(T) = \max_i\sum_{t\le T}r_{i,t}-\sum_{t\le T}r_t\\
    \le& (t^*+k)+\max_i\sum_{t = t^*+k}^T r_{i,t}- r_t\tag{$r_t$ and $r_{i,t}$ are in $[0,1]$}\\
    =& (t^*+k)+\max_i\sum_{t = t^*+k}^T\bar{M}^X(\opt_i(\vx_{t-k:t}), {\vy}_{t-k:t})-\bar{M}^X({\vx}_{t-k:t}, {\vy}_{t-k:t})\tag{$\mathcal{E}$ happens}\\
    \le& 2(t^*+k)+\max_i F^i_T(\vx,\vy)\tag{adding addition $t^*+k$ terms}\\
    \le& 2(T^{2/3}+k)+\max_i F^i_T(\vx,\vy)\tag{$t^*\le T^{2/3}$}
\end{align*}
Therefore, 
\begin{equation}\label{eq:converge2noregret3}
    \E\left[\mathbf{1}[t^*\le T^{2/3}]Reg^X(T)\mid \mathcal{E}\right]\le 2(T^{2/3}+k)+\E\left[\max_i F^i_T(\vx,\vy)\mid \mathcal{E}\right].
\end{equation}
To bound the second term, we partition the expectation by whether $\mathcal{G}$ happens or not
\begin{align*}
    &\E\left[\max_i F^i_T(\vx,\vy)\mid \mathcal{E}\right]\\
    =& \E\left[\max_i F^i_T(\vx,\vy)\mid \mathcal{E}, \mathcal{G}\right]\Pr[\mathcal{G}\mid \mathcal{E}]+\E\left[\max_i F^i_T(\vx,\vy)\mid \mathcal{E}, \neg\mathcal{G}\right]\Pr[\neg\mathcal{G}\mid \mathcal{E}]\\
    \le& T^{2/3}\Pr[\mathcal{G}\mid \mathcal{E}]+T\Pr[\neg\mathcal{G}\mid \mathcal{E}]\tag{definition of $\mathcal{G}$}\\
    \le& T^{2/3}+T\frac{\Pr[\neg\mathcal{G}]}{\Pr[\mathcal{E}]} = O(T^{2/3})\tag{by \cref{eq:converge2noregret1}}
\end{align*}
Therefore, with \cref{eq:converge2noregret2,eq:converge2noregret3} we show $\E\left[Reg^X(T)\mid \mathcal{E}\right] = o(T)+2(T^{2/3}+k)+O(T^{2/3}) = o(T)$ that completes the proof.
\end{proof}
\section{Justifications of Reward-Based Online Learning Algorithm Family \texorpdfstring{$\mathcal{A}$}{Lg}}

In this section, we have two subsections to give justifications for learning algorithm family $\mathcal{A}$. In the first part, we introduce two common used learning algorithms and show that they are both reward-based online learning algorithms. The second part gives justifications for \cref{asm.fullsupf}.

\subsection{Learning Algorithms in \texorpdfstring{$\mathcal{A}$}{Lg}}
\label{app.algina}

In this section, we do not focus on peer prediction problems but consider learning algorithms used on general online decision problems.

\subsubsection{Follow the Perturbed Leader}

FPL algorithm is designed by \cite{kalai2005efficient}. In their work, they have proved that FPL algorithm achieves no best-in-hindsight regret in full-information online decision problems. For simplicity, we consider FPL using on online decision problem with four options $\opt_i,i\in[4]$ in total. The algorithm let the agent choose an arbitrary option among $\argmax_{\opt_i\in\{\opt_1,\opt_2,\opt_3,\opt_4\}}(R_{i,t-1}+p_{i,t})$ in the $t^{th}$ round of the game. Here, $p_{i,t}$'s are i.i.d. sampled from a particular noise distribution $\mathcal{N}$. Because of variety of the noise distribution, FPL algorithm actually contains a large family of learning algorithms, i.e., Hannan's algorithm \cite{hannan1957approximation} and Follow the Leader algorithm (FTL).

\begin{algorithm}
    \caption{FPL algorithm.}\label{alg:fpl}
    \DontPrintSemicolon
    \KwIn{Noise distribution $\mathcal{N}$.}
    \For{$t=1,2,\cdots$}{
        For each option $\opt_i,i\in[4]$, sample $p_{i,t}$ independently from noise distribution $\mathcal{N}$.\\
        Arbitrarily choose an option among $\argmax_{\opt_i\in\{\opt_1,\opt_2,\opt_3,\opt_4\}}(R_{i,t-1}+p_{i,t})$ in the $t^{th}$ round.
    }
\end{algorithm}

In formal, we have the following theorem.
\begin{theorem}
Algorithm~\ref{alg:fpl} is a reward-based online learning algorithm included in $\mathcal{A}$.
\end{theorem}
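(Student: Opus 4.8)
The plan is to check that the FPL choice rule is genuinely reward-based with update function $f^{\text{FPL}}$ from \cref{thm:specialcases}, and then verify \cref{asm.symf,asm.consistf,asm.fullsupf} one at a time. That the rule is reward-based is immediate: since the perturbations $p_{i,t}$ are drawn i.i.d.\ from a single fixed distribution $\mathcal{N}$ every round, the probability that $\opt_i$ attains $\max_{j\in[4]}\{R_{j,t-1}+p_j\}$ is a fixed function of the accumulated reward vector $(R_{1,t-1},\dots,R_{4,t-1})$ alone, namely $f_i^{\text{FPL}}$. Taking $\mathcal{N}$ atomless makes ties null, so the four probabilities sum to $1$ and $f^{\text{FPL}}$ maps into $\triangle^3$ as required.

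For exchangeability (\cref{asm.symf}) I would invoke the fact that the law of $(p_1,\dots,p_4)$ is invariant under relabeling coordinates, because the $p_i$ are i.i.d. Permuting the rewards and their attached noises simultaneously carries the event ``$\opt_i$ is the perturbed leader'' to the event ``$\opt_j$ is the perturbed leader'' under the permuted rewards, which is precisely the identity $f_{i_j}(R_1,R_2,R_3,R_4)=f_j(R_{i_1},R_{i_2},R_{i_3},R_{i_4})$ demanded by \cref{asm.symf}.

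Order preservation (\cref{asm.consistf}) is the crux, and I would prove it by a measure-preserving swap. Fix two options $a,b$ with $R_a\ge R_b$; it suffices to show $f_a^{\text{FPL}}\ge f_b^{\text{FPL}}$. Let $\tau$ be the involution on noise vectors that exchanges coordinates $a$ and $b$; since the $p_i$ are i.i.d., $\tau$ is measure-preserving. The key claim is that whenever $b$ is the perturbed leader under a noise vector $p=(p_1,\dots,p_4)$, option $a$ is the perturbed leader under $\tau(p)$. Indeed, $b$ winning forces $R_b+p_b\ge R_a+p_a$, whence $p_b\ge p_a$ because $R_a\ge R_b$; then under $\tau(p)$ the value of $a$ satisfies $R_a+p_b\ge R_b+p_b\ge R_j+p_j$ for every $j\neq a,b$, and also $R_a+p_b\ge R_a+p_a\ge R_b+p_a$, so $a$ indeed wins. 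Thus $\tau$ injects the ``$b$ wins'' event into the ``$a$ wins'' event in a measure-preserving way, giving $f_b^{\text{FPL}}\le f_a^{\text{FPL}}$. (If $\mathcal{N}$ has atoms, ties must be broken consistently with whatever tie-breaking the algorithm uses, but the atomless case is the clean one.)

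Finally, for full exploitation (\cref{asm.fullsupf}), set $\Delta=R_1-\max\{R_2,R_3,R_4\}$ and bound the failure probability directly. Option $\opt_1$ fails to lead only if $R_j+p_j>R_1+p_1$ for some $j\ge2$, i.e.\ $p_j-p_1>R_1-R_j\ge\Delta$, so a union bound gives $1-f_1^{\text{FPL}}\le\sum_{j=2}^{4}\Pr[p_j-p_1>\Delta]$. Each $p_j-p_1$ is an almost-surely finite random variable (a difference of i.i.d.\ draws from $\mathcal{N}$), so every term vanishes as $\Delta\to+\infty$, giving $f_1^{\text{FPL}}\to1$ for \emph{any} noise distribution. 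I expect the coupling step for order preservation to be the main obstacle, since it requires the precise swap together with the observation $p_b\ge p_a$; exchangeability and full exploitation are then routine.
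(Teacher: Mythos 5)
Your proposal is correct, and its skeleton matches the paper's proof exactly: identify the update function $f^{\text{FPL}}_i(R_1,R_2,R_3,R_4)=\Pr\bigl\{R_i+p_i=\max_{j\in[4]}\{R_j+p_j\}\bigr\}$ so that $A_{f^{\text{FPL}}}$ reproduces \cref{alg:fpl}, get \cref{asm.symf} from permutation invariance of the i.i.d.\ noise law, and prove \cref{asm.fullsupf} by the same union bound the paper uses (the paper phrases it as: pick $n$ with $\Pr\{p-p'\ge n\}<\epsilon$ and bound $1-f_1\le\sum_{i=2,3,4}\Pr\{p_1+n<p_i\}\le 3\epsilon$, which is your tightness-of-$p_j-p_1$ argument in $\epsilon$--$n$ form). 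Where you genuinely add something is \cref{asm.consistf}: the paper dismisses order preservation with one sentence ("obvious by the definition of $f^{\text{FPL}}$ because each of the $p_i$'s are sampled from an identical noise distribution"), whereas you supply an actual argument --- the measure-preserving involution $\tau$ swapping the noise coordinates of $a$ and $b$, the observation that $R_a\ge R_b$ together with $b$ leading forces $p_b\ge p_a$, and the verification that $a$ leads under $\tau(p)$ against both $b$ and every $j\neq a,b$. That coupling is correct and is precisely what makes the paper's assertion rigorous, so your instinct that this step is the crux is well placed even though the paper treats it as trivial. Your side remark about taking $\mathcal{N}$ atomless (or fixing a tie-breaking convention) so that the four probabilities sum to $1$ and $f^{\text{FPL}}$ genuinely maps into $\triangle^3$ also patches a detail the paper silently ignores, since \cref{alg:fpl} breaks ties arbitrarily. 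In short: same decomposition and same full-exploitation bound, but your proof is strictly more complete on order preservation and on well-definedness of the update function.
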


\begin{proof}
To prove that FPL algorithm is in our algorithm family $\mathcal{A}$, we need to design a function $f^{\text{FPL}}$ such that \begin{align*}
f^{\text{FPL}}_i(R_1,R_2,R_3,R_4)=\Pr\left\{R_i+p_i=\max_{j\in[4]}\{R_j+p_j\}\middle|p_j\overset{\mathrm{iid}}{\sim} \mathcal{N},j\in[4]\right\}
\end{align*}
for $i\in [4]$. Using this function based on cumulative payoffs, $A_{f^{\text{FPL}}}$ is equivalent to \cref{alg:fpl}. In detail, the probability of choosing $\opt_i$ in the $t^{th}$ round in \cref{alg:fpl} is exactly \begin{align*}
    \Pr\left\{R_{i,t-1}+p_{i,t}=\max_{j\in[4]}\{R_{j,t-1}+p_{j,t}\}\middle|p_{j,t}\overset{\mathrm{iid}}{\sim} \mathcal{N},j\in[4]\right\}.
\end{align*} 
Now we only need to verify that $f^{\text{FPL}}$ satisfies \cref{asm.symf}, \cref{asm.fullsupf} and \cref{asm.consistf}.

Assumption~\ref{asm.symf} holds because the expression of probability $$\Pr\left\{R_i+p_i=\max_{j\in[4]}\{R_j+p_j\}\middle|p_j\overset{\mathrm{iid}}{\sim} \mathcal{N},j\in[4]\right\}$$ is symmetrical with respect to $i$.

We know that for $\forall \epsilon>0$, we can find a large enough positive constant $n$ such that $\Pr\{p-p'\geq n|p,p'\overset{\mathrm{iid}}{\sim}\mathcal{N}\}<\epsilon$. Therefore, when $R_1>\max\{R_2,R_3,R_4\}+n$, we can deduce that\begin{align*}
    &f^{\text{FPL}}_1(R_1,R_2,R_3,R_4)\\
    &=\Pr\left\{R_1+p_1=\max_{i\in[4]}\{R_i+p_i\}\middle|p_i\overset{\mathrm{iid}}{\sim} \mathcal{N},i\in[4]\right\}\\
    &\geq 1-\sum_{i=2,3,4}\Pr\left\{R_1+p_1<R_i+p_i\middle|p_i\overset{\mathrm{iid}}{\sim} \mathcal{N},i\in[4]\right\}\\
    &\geq 1-\sum_{i=2,3,4}\Pr\left\{p_1+n<p_i\middle|p_i\overset{\mathrm{iid}}{\sim} \mathcal{N},i\in[4]\right\}\\
    &\geq 1-3\epsilon.
\end{align*}
Therefore, Assumption~\ref{asm.fullsupf} holds.
About Assumption~\ref{asm.consistf}, it is obvious by the definition of $f^{\text{FPL}}$ because each of $p_i$'s are sampled from an identical noise distribution. 
\end{proof}

\subsubsection{Multiplicative Weight Algorithm}
Multiplicative weights algorithm (or hedge algorithm) is first introduced in~\cite{freund1997decision}, which is called {\it exponentially weighted averaged forecaster} by them. It is also a no-regret algorithm in full-information online decision problem. It can be written as Algorithm~\ref{alg:hedge2} for an online decision problem with four options $\opt_i,i\in[4]$.
\begin{algorithm}
    \caption{Multiplicative Weights algorithm in \cite{freund1997decision}.}\label{alg:hedge2}
    \DontPrintSemicolon
    \KwIn{A positive constant $\beta$.}
    Initialize $w_{i,1}=1$ for $i\in[4]$.\\
    \For{$t=1,2,\cdots$}{
        Choose option $\opt_i$ with probability $q_{i,t}=\frac{w_{i,t}}{\sum_{j\in[4]}w_{j,t}}$ in the $t^{th}$ round.\\
        Update $w_{i,t+1}=w_{i,t}\exp(\beta r_{i,t})$.
    }
\end{algorithm}

In formal, we have the following theorem.
\begin{theorem}
Algorithm~\ref{alg:hedge2} is a reward-based online learning algorithm included in $\mathcal{A}$.
\end{theorem}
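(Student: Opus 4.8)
The plan is to exhibit the update function $f^{\text{hedge 1}}$ from \cref{thm:specialcases} and verify two things: that running the reward-based algorithm $A_{f^{\text{hedge 1}}}$ reproduces \cref{alg:hedge2} exactly, and that $f^{\text{hedge 1}}$ satisfies the three structural assumptions. First I would unfold the weight recursion. Since $w_{i,1}=1$ and $w_{i,t+1}=w_{i,t}\exp(\beta r_{i,t})$, an easy induction gives $w_{i,t}=\exp\!\big(\beta\sum_{s=1}^{t-1} r_{i,s}\big)=\exp(\beta R_{i,t-1})$. Substituting into the sampling rule $q_{i,t}=w_{i,t}/\sum_{j\in[4]}w_{j,t}$ yields $q_{i,t}=e^{\beta R_{i,t-1}}/\sum_{j\in[4]}e^{\beta R_{j,t-1}}=f^{\text{hedge 1}}_i(R_{1,t-1},\dots,R_{4,t-1})$, so the per-round distribution depends on the history only through the accumulated payoff vector. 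This identifies \cref{alg:hedge2} with $A_{f^{\text{hedge 1}}}$.

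It then remains to check \cref{asm.symf,asm.consistf,asm.fullsupf} for $f^{\text{hedge 1}}$, mirroring the FPL argument. Exchangeability (\cref{asm.symf}) is immediate because the numerator $e^{\beta R_i}$ is indexed by $i$ while the denominator $\sum_{j} e^{\beta R_j}$ is permutation-invariant, so permuting the inputs permutes the coordinates of $f$ identically. Order preservation (\cref{asm.consistf}) follows since $\beta>0$ makes $t\mapsto e^{\beta t}$ strictly increasing; hence $R_{i_1}\ge\cdots\ge R_{i_4}$ implies $e^{\beta R_{i_1}}\ge\cdots\ge e^{\beta R_{i_4}}$, and dividing each by the common positive denominator preserves the ordering of the $f_i$'s. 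For full exploitation (\cref{asm.fullsupf}) I would rewrite $f_1=\big(1+\sum_{j=2}^{4} e^{-\beta(R_1-R_j)}\big)^{-1}$ and observe that when $R_1-\max\{R_2,R_3,R_4\}\to+\infty$ each exponent $-\beta(R_1-R_j)\to-\infty$, so the sum vanishes and $f_1\to 1$.

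Since this is a direct verification, I do not expect a genuine obstacle; the only step requiring care is the telescoping of the multiplicative weights, which is precisely what converts the product recursion into a function of the cumulative reward vector and thereby places Hedge inside $\mathcal{A}$.
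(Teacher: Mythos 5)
Your proposal is correct and follows essentially the same route as the paper's proof: unfold the multiplicative recursion to get $w_{i,t}=e^{\beta R_{i,t-1}}$, identify the algorithm with $A_f$ for the softmax update function, and verify \cref{asm.symf,asm.consistf,asm.fullsupf} by symmetry of the expression, monotonicity of $e^{\beta t}$ for $\beta>0$, and the rewriting $f_1=\bigl(1+\sum_{j\ge 2}e^{-\beta(R_1-R_j)}\bigr)^{-1}\to 1$, which matches the paper's bound $f_1>\frac{1}{1+3e^{-\beta n}}$. The only cosmetic discrepancy is naming: you call the update function $f^{\text{hedge 1}}$ following \cref{thm:specialcases}, whereas the paper's appendix proof denotes the identical formula by $f^{\text{hedge 2}}$.
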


\begin{proof}
To prove that \cref{alg:hedge2} is in our algorithm family $\mathcal{A}$, we can use a function $f^{\text{hedge 2}}$ such that \begin{align*}
f^{\text{hedge 2}}_i(R_1,R_2,R_3,R_4)=\frac{e^{\beta R_i}}{e^{\beta R_1}+e^{\beta R_2}+e^{\beta R_3}+e^{\beta R_4}},
\end{align*}
for $i\in [4]$. Using this function based on cumulative payoffs, $A_{f^{\text{hedge 2}}}$ is equivalent to \cref{alg:hedge2}. In detail, the probability of choosing $\opt_i$ in the $t^{th}$ round in \cref{alg:hedge2} is exactly \begin{align*}
    \frac{w_i}{w_1+w_2+w_3+w_4}&=\frac{\prod_{j=1}^{t-1}\exp(\beta r_{i,t})}{\sum_{k\in[4]}\prod_{j=1}^{t-1}\exp(\beta r_{k,t})}\\
    &=\frac{e^{\beta R_{i,t-1}}}{e^{\beta R_{1,t-1}}+e^{\beta R_{2,t-1}}+e^{\beta R_{3,t-1}}+e^{\beta R_{4,t-1}}}.
\end{align*} 
Now we only need to verify that $f^{\text{hedge 2}}$ satisfies \cref{asm.symf}, \cref{asm.fullsupf} and \cref{asm.consistf}.

Assumption~\ref{asm.symf} holds because $\frac{e^{\beta R_i}}{e^{\beta R_1}+e^{\beta R_2}+e^{\beta R_3}+e^{\beta R_4}}$ is symmetrical with respect to $R_1,R_2,R_3,R_4$. Assumption~\ref{asm.consistf} holds because $\beta>0$ and exponential function is monotonic. When $R_1-\max\{R_2,R_3,R_4\}>n>0$, we have\begin{align*}
    f_1^{\text{hedge 2}}(R_1,R_2,R_3,R_4)&=\frac{e^{\beta R_1}}{e^{\beta R_1}+e^{\beta R_2}+e^{\beta R_3}+e^{\beta R_4}}\\
    &=\frac{1}{1+e^{\beta (R_2-R_1)}+e^{\beta (R_3-R_1)}+e^{\beta (R_4-R_1)}}\\
    &>\frac{1}{1+3e^{-\beta n}}.
\end{align*}
Therefore, we deduce that \begin{align*}
    \lim_{R_1-\max\{R_2,R_3,R_4\}\rightarrow+\infty}f_1(R_1,R_2,R_3,R_4)\geq \lim_{R_1-\max\{R_2,R_3,R_4\}\rightarrow+\infty}\frac{1}{1+3e^{-\beta n}}=1.
\end{align*} 
Therefore, \cref{asm.fullsupf} holds.
\end{proof}

\subsection{Justifications for Assumption~\ref{asm.fullsupf}}
\label{app.justasm}

In this section, we prove that \cref{asm.fullsupf} is a necessary condition for a reward-based online learning algorithm to be no-regret.

\begin{theorem}
For a reward-based function $f:\mathbb{R}^4\rightarrow\triangle^3$, if $$\lim_{R_1-\max\{R_2,R_3,R_4\}}f_1(R_1,R_2,R_3,R_4)=1-c<1,$$
mechanism $A_f$ cannot be a no-regret algorithm for general online decision problem. 
\end{theorem}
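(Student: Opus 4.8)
The plan is to prove the claim directly by exhibiting a single online decision problem on which $A_f$ suffers linear regret; since a no-regret algorithm must control regret on \emph{every} instance, one bad instance suffices. I would pick the simplest possible full-feedback instance: let option $\opt_1$ pay reward $r_{1,t}=1$ in every round, and let the three remaining options pay $r_{i,t}=0$ for $i\in\{2,3,4\}$ in every round.

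First I would track the accumulated rewards. Because the feedback is full, the vectors $R_{i,t}=\sum_{j\le t} r_{i,j}$ do not depend on the algorithm's (random) realized choices, so deterministically $R_{1,t}=t$ while $R_{2,t}=R_{3,t}=R_{4,t}=0$ for all $t$. In particular the lead $R_{1,t}-\max\{R_{2,t},R_{3,t},R_{4,t}\}=t\to+\infty$, so the hypothesis $\lim f_1=1-c$ applies along exactly this trajectory. Taking $\varepsilon=c/2$ in the definition of the limit, there is a round $T_0$ such that $f_1(R_{1,t-1},0,0,0)<1-c/2$ for every $t>T_0$; that is, the update function keeps a probability mass of at least $c/2$ on the three worthless options no matter how dominant $\opt_1$ becomes.

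Next I would lower-bound the expected regret. The best fixed option in hindsight is $\opt_1$, earning total reward $T$. Since only $\opt_1$ pays anything, the algorithm's expected reward equals $\sum_{t=1}^T f_1(R_{1,t-1},0,0,0)$. Bounding the first $T_0$ terms by $1$ and each remaining term by $1-c/2$ gives expected reward at most $T_0+(T-T_0)(1-c/2)$, whence $\E[Reg(T)]\ge (c/2)(T-T_0)=\Omega(T)$. This violates the no-regret requirement $\E[Reg(T)]=o(T)$, so $A_f$ cannot be no-regret for general online decision problems.

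I do not anticipate a serious obstacle: the argument is a clean ``mass-$c$ leakage'' calculation. The only point requiring care is that the accumulated-reward trajectory actually drive the lead of $\opt_1$ to $+\infty$, so that the limiting hypothesis can be invoked; the constant-reward, full-feedback instance guarantees this deterministically and independently of the algorithm's randomness, which is precisely why it is the right instance to choose. Note also that the proof uses only the failure of \cref{asm.fullsupf} and does not rely on \cref{asm.symf,asm.consistf}, confirming that full exploitation alone is the necessary ingredient for no regret.
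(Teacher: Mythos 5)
Your proof is correct and takes essentially the same approach as the paper's: both exhibit a deterministic full-feedback instance in which one option strictly dominates every round (the paper uses per-round rewards $(2,1,1,1)$, you use $(1,0,0,0)$), note that the lead of $\opt_1$ grows linearly so the limit hypothesis forces probability mass at least $c/2$ onto suboptimal options after a finite time, and conclude $\E[Reg(T)]\ge \frac{c}{2}(T-T_0)=\Omega(T)$. The choice of reward vector is cosmetic; the two arguments are otherwise identical.
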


\begin{proof}
We design an online decision problem such that $r_{1,t}=2,r_{2,t}=r_{3,t}=r_{4,t}=1$ for $t\in\mathbb{N}^+$. According to the description of $f$, there exists a $n$ such that when $R_1-\max\{R_2,R_3,R_4\}\geq n$, $f_1(R_1,R_2,R_3,R_4)<1-\frac{c}{2}$. Notice that when $t>n+1$, we have $R_{1,t-1}=2t>t+n=\max\{R_{2,t-1},R_{3,t-1},R_{4,t-1}\}+n$, therefore, when $t>n+1$, the agent chooses $\opt_1$ with probability at most $1-\frac{c}{2}$ in the $t^{th}$ round. 

Therefore, we can deduce that when $T>n+2$, $\mathbb{E}[Reg(T)]\geq \frac{c}{2}(T-n-1)$, which is a linear function of $T$. Therefore, $A_f$ is not no-regret.
\end{proof}
 



\section{More Algorithms in \texorpdfstring{$\mathcal{A}$}{Lg} Applying on CA Mechanism Binary Sequential Peer Prediction}
\label{app.morealgina}

In \cref{app.algina}, we have introduced two widely used algorithms that are reward-based online learning algorithms. In this section, we show that there are even more existing learning algorithms contained by $\mathcal{A}$ when the game is exactly binary sequential peer prediction using CA mechanism. 

\subsection{Replicator Dynamics}
Replicator dynamics track a set of agents in a repeating game and each agent chooses a pure strategy with a probability proportional to expected payoffs deviating to higher-payoff options. We use a similar implementation as \cite{shnayder2016measuring} in a general discrete form here to show that replicator dynamics are also in $\mathcal{A}$ for the binary signal peer prediction problem. To be more specific, during the repeating game, the agent maintains four probabilities $q_{i,t},i\in[4]$ and choose $\opt_i$ with probability $q_{i,t}$, and then update them in the end of the $t^{th}$ round. The updating rule is set as below:\begin{align}
    q_{i,t+1}=\frac{h(r_{i,t})q_{i,t}}{\sum_{j\in[4]}h(r_{j,t})q_{j,t}},i\in[4],\label{eq.updruleofrd}
\end{align}
where $h:\mathbb{R}\rightarrow \mathbb{R}^+$ is a monotonic function. Due to the variety of $h$, our discretized replicator dynamics contain extensive learning algorithms. Common discretization of replicator dynamics set $h$ as exponential function or linear function, but we consider general $h$ functions here.  

\begin{algorithm}
    \caption{Replicator dynamics.}\label{alg:replicatordynamics}
    \DontPrintSemicolon
    \KwIn{Monotonic function $h:\mathbb{R}\rightarrow \mathbb{R}^+$.}
    \For{$t=1,2,\cdots$}{
        Choose option $\opt_i$ with probability $q_{i,t}$ in the $t^{th}$ round.\\
        Set $q_{i,t+1}=\frac{h(r_{i,t})q_{i,t}}{\sum_{j\in[4]}h(r_{j,t})q_{j,t}},i\in[4]$.
    }
\end{algorithm}

We have the following theorem.
\begin{theorem}\label{thm.replicatordynamicsina}
When applying on binary sequential peer prediction using CA mechanism, \cref{alg:replicatordynamics} is included in $\mathcal{A}$.
\end{theorem}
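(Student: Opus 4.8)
The plan is to unroll the multiplicative replicator recursion and to show that, thanks to the special reward structure of the binary CA mechanism, the resulting choice probabilities collapse exactly onto the Hedge update function $f^{\text{hedge 2}}$ that \cref{alg:hedge2} already placed in $\mathcal{A}$. Assuming the uniform initialization $q_{i,1}=1/4$, iterating \cref{eq.updruleofrd} gives $q_{i,t+1} = \prod_{s=1}^t h(r_{i,s}) \big/ \sum_{j\in[4]}\prod_{s=1}^t h(r_{j,s})$. Since every per-round payoff $r_{i,s}$ under \cref{eq:ca} lies in $\{-1,0,1\}$, the product factors as $\prod_s h(r_{i,s}) = h(1)^{a_i}h(-1)^{b_i}h(0)^{c_i}$, where $a_i,b_i,c_i$ count the rounds in which option $i$ earns $+1,-1,0$ respectively.

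The first step is to record three structural identities specific to the binary CA game. First, $r_{1,s}=-r_{2,s}$ and $r_{3,s}=-r_{4,s}$ for every $s$ (the flip strategy earns the negative of truth-telling, consistent with \cref{lemma.sumofpayoffs}), so $(a_2,b_2)=(b_1,a_1)$ and $(a_4,b_4)=(b_3,a_3)$. Second — and this is the crucial point — a round contributes $0$ to option $i$ exactly when $\hat y_s=\hat y_{s-1}$, a condition that does not depend on $i$; hence the zero-count is common to all four options, $c_1=c_2=c_3=c_4=:c$, and likewise the nonzero-count $m:=t-c$ is common. This is precisely what lets the factor $h(0)^c$ cancel from numerator and denominator.

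The second step is the algebraic collapse. Writing $a_i-b_i=R_{i,t}$ and $a_i+b_i=m$, we get $h(1)^{a_i}h(-1)^{b_i}=(h(1)h(-1))^{m/2}\,(h(1)/h(-1))^{R_{i,t}/2}$, and the common factor $(h(1)h(-1))^{m/2}$ cancels across all four options. Setting $\beta:=\tfrac12\ln(h(1)/h(-1))$ and using $R_{2,t}=-R_{1,t}$, $R_{4,t}=-R_{3,t}$, the surviving expression is exactly $q_{i,t+1}=e^{\beta R_{i,t}}\big/\sum_{j\in[4]}e^{\beta R_{j,t}}=f^{\text{hedge 2}}_i(R_{1,t},R_{2,t},R_{3,t},R_{4,t})$. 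Thus $A_{f^{\text{hedge 2}}}$ with this $\beta$ reproduces \cref{alg:replicatordynamics} on every trajectory.

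Finally, since $h$ is (strictly) increasing we have $h(1)>h(-1)$, so $\beta>0$; the analysis of \cref{alg:hedge2} already verified that $f^{\text{hedge 2}}$ with positive inverse-temperature satisfies \cref{asm.symf,asm.consistf,asm.fullsupf}, which places replicator dynamics in $\mathcal{A}$ and completes the argument. I expect the main obstacle to be establishing the coincidence-of-zero-rounds identity $c_1=\dots=c_4$ cleanly, since the entire reduction to a function of the cumulative rewards hinges on it: without it, $\prod_s h(r_{i,s})$ would genuinely depend on the composition of per-round payoffs rather than only on $R_{i,t}$, and the collapse to $f^{\text{hedge 2}}$ would fail. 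A second, minor subtlety is the initialization — uniform $q_{i,1}$ is needed so that the initial weights cancel and exchangeability (\cref{asm.symf}) is preserved.
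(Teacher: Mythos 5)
Your proposal is correct and follows essentially the same route as the paper: the paper likewise exploits that the reward vector $(r_{1,t},r_{2,t},r_{3,t},r_{4,t})$ has only the five realizations $(0,0,0,0)$, $(\pm1,\mp1,\pm1,\mp1)$ (your coincidence-of-zero-rounds identity plus the antisymmetry $r_{2,s}=-r_{1,s}$, $r_{4,s}=-r_{3,s}$), telescopes the update to get $q_{i,t+1}\propto\left(h(1)/h(-1)\right)^{R_{i,t}/2}$, and then checks \cref{asm.symf,asm.consistf,asm.fullsupf} — which is the same verification you inherit by identifying this with $f^{\text{hedge 2}}$ at $\beta=\tfrac12\ln\left(h(1)/h(-1)\right)>0$. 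Your explicit flagging of the uniform-initialization requirement is a point the paper leaves implicit (its ratio identities silently assume $q_{i,1}$ equal), but it does not change the substance of the argument.
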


\begin{proof}
According to \cref{eq.updruleofrd}, we know that\begin{align}
    \frac{q_{1,t+1}}{q_{2,t+1}}&=\prod_{i=1}^t\frac{h(r_{1,i})}{h(r_{2,i})}\nonumber\\
    &=\left(\frac{h(1)}{h(-1)}\right)^{\sum_{i\in[t]}\mathbb{I}[r_{1,i}=1\wedge r_{2,i}=-1]}\times \left(\frac{h(-1)}{h(1)}\right)^{\sum_{i\in[t]}\mathbb{I}[r_{1,i}=-1\wedge r_{2,i}=1]}\label{eq.expansionoffrac}\\
    &=\left(\frac{h(1)}{h(-1)}\right)^{\frac{R_{1,t}-R_{2,t}}{2}}.\label{eq.resultoffrac}
\end{align}
Here, \cref{eq.expansionoffrac} and \cref{eq.resultoffrac} are because there are only three realizations of $(r_{1,i},r_{2,i})$, which are $(1,-1),(0,0)$ and $(-1,1)$ by the definition of CA mechanism and binary sequential peer prediction. Similarly for $\opt_3,\opt_4$, we can deduce that $\frac{q_{3,t+1}}{q_{4,t+1}}=\left(\frac{h(1)}{h(-1)}\right)^{\frac{R_{3,t}-R_{4,t}}{2}}$. Moreover, we have\begin{align*}
    \frac{q_{1,t+1}q_{2,t+1}}{q_{3,t+1}q_{4,t+1}}&=\frac{h(r_{1,t})h(r_{2,t})}{h(r_{3,t})h(r_{4,t})}\frac{q_{1,t}q_{2,t}}{q_{3,t}q_{4,t}}\\
    &=\frac{q_{1,t}q_{2,t}}{q_{3,t}q_{4,t}}=\frac{q_{1,t-1}q_{2,t-1}}{q_{3,t-1}q_{4,t-1}}=\cdots=1.
\end{align*}
This is because $(r_{1,i},r_{2,i},r_{3,i},r_{4,i})$ has only five possible realizations, which are $(0,0,0,0)$, $(1,-1,1,-1)$, $(1,-1,-1,1)$, $(-1,1,1,-1)$ and $(-1,1,-1,1)$ according to the definition of CA mechanism and binary sequential peer prediction for any $i\in\mathbb{N}^+$.

According to \cref{lemma.sumofpayoffs}, we know that $R_{1,t}+R_{2,t}=R_{3,t}+R_{4,t}$. Therefore, we can deduce that for replicator dynamics applying on binary sequential peer prediction using CA mechanism, we have\begin{align*}
    &q_{1,t+1}=\frac{\left(\frac{h(1)}{h(-1)}\right)^{\frac{R_{1,t}}{2}}}{\left(\frac{h(1)}{h(-1)}\right)^{\frac{R_{1,t}}{2}}+\left(\frac{h(1)}{h(-1)}\right)^{\frac{R_{2,t}}{2}}+\left(\frac{h(1)}{h(-1)}\right)^{\frac{R_{3,t}}{2}}+\left(\frac{h(1)}{h(-1)}\right)^{\frac{R_{4,t}}{2}}},\\
    &q_{2,t+1}=\frac{\left(\frac{h(1)}{h(-1)}\right)^{\frac{R_{2,t}}{2}}}{\left(\frac{h(1)}{h(-1)}\right)^{\frac{R_{1,t}}{2}}+\left(\frac{h(1)}{h(-1)}\right)^{\frac{R_{2,t}}{2}}+\left(\frac{h(1)}{h(-1)}\right)^{\frac{R_{3,t}}{2}}+\left(\frac{h(1)}{h(-1)}\right)^{\frac{R_{4,t}}{2}}},\\
    &q_{3,t+1}=\frac{\left(\frac{h(1)}{h(-1)}\right)^{\frac{R_{3,t}}{2}}}{\left(\frac{h(1)}{h(-1)}\right)^{\frac{R_{1,t}}{2}}+\left(\frac{h(1)}{h(-1)}\right)^{\frac{R_{2,t}}{2}}+\left(\frac{h(1)}{h(-1)}\right)^{\frac{R_{3,t}}{2}}+\left(\frac{h(1)}{h(-1)}\right)^{\frac{R_{4,t}}{2}}},\\
    &q_{4,t+1}=\frac{\left(\frac{h(1)}{h(-1)}\right)^{\frac{R_{4,t}}{2}}}{\left(\frac{h(1)}{h(-1)}\right)^{\frac{R_{1,t}}{2}}+\left(\frac{h(1)}{h(-1)}\right)^{\frac{R_{2,t}}{2}}+\left(\frac{h(1)}{h(-1)}\right)^{\frac{R_{3,t}}{2}}+\left(\frac{h(1)}{h(-1)}\right)^{\frac{R_{4,t}}{2}}}.
\end{align*}
Hence, now replicator dynamics behave the same as mechanism $A_{f^{\text{replicator}}}$ such that\begin{align*}
    &f^{{\text{replicator}}}_1(R_1,R_2,R_3,R_4)=\frac{\left(\frac{h(1)}{h(-1)}\right)^{\frac{R_{1}}{2}}}{\left(\frac{h(1)}{h(-1)}\right)^{\frac{R_{1}}{2}}+\left(\frac{h(1)}{h(-1)}\right)^{\frac{R_{2}}{2}}+\left(\frac{h(1)}{h(-1)}\right)^{\frac{R_{3}}{2}}+\left(\frac{h(1)}{h(-1)}\right)^{\frac{R_{4}}{2}}},\\
    &f^{{\text{replicator}}}_2(R_1,R_2,R_3,R_4)=\frac{\left(\frac{h(1)}{h(-1)}\right)^{\frac{R_{2}}{2}}}{\left(\frac{h(1)}{h(-1)}\right)^{\frac{R_{1}}{2}}+\left(\frac{h(1)}{h(-1)}\right)^{\frac{R_{2}}{2}}+\left(\frac{h(1)}{h(-1)}\right)^{\frac{R_{3}}{2}}+\left(\frac{h(1)}{h(-1)}\right)^{\frac{R_{4}}{2}}},\\
    &f^{{\text{replicator}}}_3(R_1,R_2,R_3,R_4)=\frac{\left(\frac{h(1)}{h(-1)}\right)^{\frac{R_{3}}{2}}}{\left(\frac{h(1)}{h(-1)}\right)^{\frac{R_{1}}{2}}+\left(\frac{h(1)}{h(-1)}\right)^{\frac{R_{2}}{2}}+\left(\frac{h(1)}{h(-1)}\right)^{\frac{R_{3}}{2}}+\left(\frac{h(1)}{h(-1)}\right)^{\frac{R_{4}}{2}}},\\
    &f^{{\text{replicator}}}_4(R_1,R_2,R_3,R_4)=\frac{\left(\frac{h(1)}{h(-1)}\right)^{\frac{R_{4}}{2}}}{\left(\frac{h(1)}{h(-1)}\right)^{\frac{R_{1}}{2}}+\left(\frac{h(1)}{h(-1)}\right)^{\frac{R_{2}}{2}}+\left(\frac{h(1)}{h(-1)}\right)^{\frac{R_{3}}{2}}+\left(\frac{h(1)}{h(-1)}\right)^{\frac{R_{4}}{2}}}.
\end{align*} 
It is easy to verify that $f$ satisfies \cref{asm.symf}, \cref{asm.fullsupf} and \cref{asm.consistf}.

In detail, \cref{asm.symf} holds because $f^{{\text{replicator}}}$ is symmetrical obviously with respect to $R_1,R_2,R_3,R_4$. Moreover, we know that\begin{align*}
    f^{{\text{replicator}}}_1(R_1,R_2,R_3,R_4)&\geq \frac{\left(\frac{h(1)}{h(-1)}\right)^{\frac{R_{1}}{2}}}{\left(\frac{h(1)}{h(-1)}\right)^{\frac{R_{1}}{2}}+3\left(\frac{h(1)}{h(-1)}\right)^{\frac{\max\left\{R_2,R_3,R_4\right\}}{2}}}\\
    &=\frac{1}{1+\left(\frac{h(1)}{h(-1)}\right)^{\frac{\max\left\{R_2,R_3,R_4\right\}-R_{1}}{2}}}.
\end{align*}
Therefore, when $R_1-\max\left\{R_2,R_3,R_4\right\}\rightarrow+\infty$, we have $\frac{1}{1+\left(\frac{h(1)}{h(-1)}\right)^{\frac{\max\left\{R_2,R_3,R_4\right\}-R_{1}}{2}}}\rightarrow 1$. Thus $f^{{\text{replicator dynamics}}}_1(R_1,R_2,R_3,R_4)\rightarrow 1$. Hence, we have proved that \cref{asm.fullsupf} holds. Finally, \cref{asm.consistf} also holds obviously according to the definition of $f^{{\text{replicator}}}$ and monotonicity of exponential function. 
\end{proof}

\subsection{An Alternating Version of Multiplicative Weights Algorithm}

In \cref{app.algina}, we have already introduced an exponential updating function form of multiplicative weights algorithm \cite{freund1997decision}. There is another version multiplicative weights algorithm introduced in the survey \cite{arora2012multiplicative}, which can be written as \cref{alg:hedge1} for our particular binary sequential peer prediction problem using CA mechanism. 

\begin{algorithm}
    \caption{Multiplicative Weights algorithm in \cite{arora2012multiplicative}.}\label{alg:hedge1}
    \DontPrintSemicolon
    \KwIn{A positive constant $\beta$.}
    Initialize $w_{i,1}=1$ for $i\in[4]$.\\
    \For{$t=1,2,\cdots$}{
        Choose option $\opt_i$ with probability $q_{i,t}=\frac{w_{i,t}}{\sum_{j\in[4]}w_{j,t}}$ in the $t^{th}$ round.\\
        Update $w_{i,t+1}=\begin{cases}(1+\beta)w_{i,t}&r_{i,t}=1\\w_{i,t}&r_{i,t}=0\\(1-\beta)w_{i,t}&r_{i,t}=-1\end{cases}$.
    }
\end{algorithm}
We only need to set $h:\mathbb{R}\rightarrow\mathbb{R}^+$ to satisfy that $h(1)=1+\beta$, $h(0)=1$ and $h(-1)=1-\beta$. Then \cref{alg:replicatordynamics} behaves completely the same as \cref{alg:hedge1}. Therefore, according to \cref{thm.replicatordynamicsina}, we have \cref{thm.hedge1ina} for this alternating version of multiplicative weights algorithm. 

\begin{theorem}\label{thm.hedge1ina}
When applying on binary sequential peer prediction using CA mechanism, \cref{alg:hedge1} is included in $\mathcal{A}$.
\end{theorem}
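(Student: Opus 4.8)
The plan is to avoid redoing any analysis and instead recognize Algorithm~\ref{alg:hedge1} as a special case of the replicator dynamics in Algorithm~\ref{alg:replicatordynamics}, for which membership in $\mathcal{A}$ was already established in \cref{thm.replicatordynamicsina}. Concretely, I would define a function $h:\mathbb{R}\to\mathbb{R}^+$ with $h(1)=1+\beta$, $h(0)=1$, and $h(-1)=1-\beta$, and argue that the weight-based algorithm produces exactly the same sequence of sampling distributions $(q_{1,t},q_{2,t},q_{3,t},q_{4,t})$ as the replicator dynamics driven by this $h$. Since \cref{thm.replicatordynamicsina} already verifies \cref{asm.symf,asm.consistf,asm.fullsupf} for replicator dynamics on the binary CA game, the whole statement then follows immediately.

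The one computation that needs checking is that the multiplicative weight update is equivalent to the replicator update at the level of probabilities. Starting from $q_{i,t+1}=w_{i,t+1}/\sum_j w_{j,t+1}$ and substituting $w_{i,t+1}=h(r_{i,t})w_{i,t}$ (which is precisely the three-case update of Algorithm~\ref{alg:hedge1} under my choice of $h$), I would divide numerator and denominator by $\sum_k w_{k,t}$ to obtain
$$q_{i,t+1}=\frac{h(r_{i,t})w_{i,t}}{\sum_j h(r_{j,t})w_{j,t}}=\frac{h(r_{i,t})q_{i,t}}{\sum_j h(r_{j,t})q_{j,t}},$$
which is exactly the replicator update \cref{eq.updruleofrd}. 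Since both algorithms share the uniform initialization induced by $w_{i,1}=1$ and the same update map, they generate identical trajectories, so Algorithm~\ref{alg:hedge1} is literally an instance of Algorithm~\ref{alg:replicatordynamics}.

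Finally I would confirm that this $h$ meets the hypotheses under which \cref{thm.replicatordynamicsina} applies. In binary sequential peer prediction under the CA mechanism, every per-round payoff $r_{i,t}$ lies in $\{-1,0,1\}$ (the fact used throughout the proof of \cref{thm.replicatordynamicsina} and guaranteed by \cref{lemma.sumofpayoffs}), so only $h(-1),h(0),h(1)$ are ever invoked; hence it suffices that $h$ be positive and monotonic on these three points. Invoking \cref{thm.replicatordynamicsina} with this $h$ then yields that the induced reward-based update function satisfies \cref{asm.symf,asm.consistf,asm.fullsupf}, and therefore Algorithm~\ref{alg:hedge1} belongs to $\mathcal{A}$.

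The main obstacle here is essentially nonexistent: the theorem is a corollary of \cref{thm.replicatordynamicsina} once the weight-to-probability equivalence is spelled out. The only point requiring genuine care is the constraint $0<\beta<1$: the reduction breaks if $\beta\ge 1$ (then $h(-1)\le 0$, violating $h:\mathbb{R}\to\mathbb{R}^+$) or $\beta=0$ (then $h$ is constant, the base $h(1)/h(-1)$ collapses to $1$, and \cref{asm.fullsupf} fails). I would therefore make sure $\beta\in(0,1)$ is in force, which guarantees $0<1-\beta<1<1+\beta$ so that $h$ is strictly increasing and strictly positive on the relevant values.
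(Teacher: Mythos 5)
Your proposal is correct and follows essentially the same route as the paper: the paper's proof likewise sets $h(1)=1+\beta$, $h(0)=1$, $h(-1)=1-\beta$, observes that \cref{alg:hedge1} then behaves identically to \cref{alg:replicatordynamics}, and invokes \cref{thm.replicatordynamicsina}. Your additional observation that $\beta\in(0,1)$ is needed for $h$ to be positive and strictly increasing is a worthwhile refinement the paper leaves implicit (it only stipulates ``a positive constant $\beta$''), but it does not change the argument.
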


\section{Proofs and Details of Section~\ref{sec.main}}
\label{app.mainproof}

\subsection{Proofs of Properties of CA Mechanism in Binary Signal Peer Prediction Games}
\label{app.propertyca}

\subsubsection{Proof of Lemma~\ref{lemma.boundeduninformative}}

\begin{proof}
By the definition of CA mechanism, we know that $r_{3,j}=\mathbb{I}[1=\hat{y}_j]-\mathbb{I}[1=\hat{y}_{j-1}]$ for $j\in\mathbb{N}^+$. Therefore, we can deduce that\begin{align*}
    R_{3,t}&=\sum_{j=1}^tr_{3,j}\\
    &=\sum_{j=1}^t\left(\mathbb{I}[1=\hat{y}_j]-\mathbb{I}[1=\hat{y}_{j-1}]\right)\\
    &=\sum_{j=1}^t\mathbb{I}[1=\hat{y}_j]-\sum_{j=0}^{t-1}\mathbb{I}[1=\hat{y}_j]\\
    &=\mathbb{I}[1=\hat{y}_t]-\mathbb{I}[1=\hat{y}_0]\in[-1,1].
\end{align*}
For strategy $\sum_{j=1}^tp_{4,j},\sum_{j=1}^tq_{3,j},\sum_{j=1}^tq_{4,j}$, the deductions are all similar.
\end{proof}

\subsubsection{Proof of Lemma~\ref{lemma.sumofpayoffs}}

\begin{proof}
By the definition of CA mechanism, we know that $r_{1,j}=\mathbb{I}[x_j=\hat{y}_j]-\mathbb{I}[x_j=\hat{y}_{j-1}]$ and $r_{2,j}=\mathbb{I}[1-x_j=\hat{y}_j]-\mathbb{I}[1-x_j=\hat{y}_{j-1}]$. Therefore, we can deduce that\begin{align*}
    R_{1,t}+R_{2,t}&=\sum_{j=1}^t(r_{1,j}+r_{2,j})\\
    &=\sum_{j=1}^t\left(\mathbb{I}[x_j=\hat{y}_j]-\mathbb{I}[x_j=\hat{y}_{j-1}]\right)+\sum_{j=1}^t\left(\mathbb{I}[1-x_j=\hat{y}_j]-\mathbb{I}[1-x_j=\hat{y}_{j-1}]\right)\\
    &=\sum_{j=1}^t(\mathbb{I}[x_j=\hat{y}_j]+\mathbb{I}[1-x_j=\hat{y}_j])-\sum_{j=1}^t(\mathbb{I}[x_j=\hat{y}_{j-1}]+\mathbb{I}[1-x_j=\hat{y}_{j-1}])\\
    &=t-t=0.
\end{align*}
Similar deductions can be made for $R_{3,t}+R_{4,t}$, $S_{1,t}+S_{2,t}$ and $S_{3,t}+S_{4,t}$, which completes the proof.
\end{proof}

\subsection{Proofs of Truthful Convergence for CA mechanism on Reward-based Algorithms}
\fang{I do not like the mixture of $R_{2,t}$, $R_{1,t}$.  Should be simplify for later version. }
\subsubsection{Proof of Lemma~\ref{lemma.nonoptimality34}}
\label{app.proofnonoptimality34}

\begin{proof}
Because Alice and Bob are symmetric, we only consider Alice.
By \cref{lemma.sumofpayoffs}, without loss of generality, we suppose that $\sum_{j=1}^tr_{1,j}\geq0$. Then $R_{1,t}\geq 1$ and $R_{2,t}\leq 0$ or $R_{1,t}=R_{2,t}=0$ because each $r_{1,j}$ equals to $-1,0$ or $1$, which is an integer. If $R_{1,t}\geq 1$, then we can deduce that $R_{1,t}\geq \max_{i=2,3,4}R_{i,t}$. Therefore, $f_1(R_{1,t},R_{2,t},R_{3,t},R_{4,t})\geq\max_{i=2,3,4}f_i(R_{t,1},R_{2,t},R_{3,t},R_{4,t})$ according to \cref{asm.consistf}. We know that $\sum_{i\in[4]}f_i(R_{1,t},R_{2,t},R_{3,t},R_{4,t})=1$, so $f_1(R_{1,t},R_{2,t},R_{3,t},R_{4,t})\geq\frac{1}{4}$. 

If $R_{1,t}=R_{2,t}=R_{3,t}=R_{4,t}=0$, we can also deduce that $R_{1,t}\geq \max_{i=2,3,4}R_{i,t}$. Therefore, $f_1(R_{1,t},R_{2,t},R_{3,t},R_{4,t})\geq\max_{i=2,3,4}f_i(R_{t,1},R_{2,t},R_{3,t},R_{4,t})$ according to \cref{asm.consistf}. We know that $\sum_{i\in[4]}f_i(R_{1,t},R_{2,t},R_{3,t},R_{4,t})=1$, so $f_1(R_{1,t},R_{2,t},R_{3,t},R_{4,t})\geq\frac{1}{4}$. 

Finally, we prove that $\{R_{1,t}=R_{2,t}=0,R_{3,t}=1,R_{4,t}=-1\}$ and $\{R_{1,t}=R_{2,t}=0,R_{3,t}=-1,R_{4,t}=1\}$ never occur in our game. According to the definition of CA mechanism, we know that $(r_{1,t},r_{2,t},r_{3,t},r_{4,t})$ has only five possibilities, which are $(0,0,0,0)$, $(1,-1,1,-1)$, $(1,-1,-1,1)$, $(-1,1,1,-1)$ and $(-1,1,-1,1)$. Therefore, we can deduce that $r_{1,t}-r_{2,t}+r_{3,t}-r_{4,t}$ can be divided by $4$. Hence, $R_{1,t}-R_{2,t}+R_{3,t}-R_{4,t}=\sum_{i=1}^{t}(r_{1,t}-r_{2,t}+r_{3,t}-r_{4,t})$ is also divided by $4$. This indicates that both $\{\sum_{j=1}^tr_{1,j}=\sum_{j=1}^tr_{2,j}=0,\sum_{j=1}^tr_{3,j}=1,\sum_{j=1}^tr_{4,j}=-1\}$ and $\{\sum_{j=1}^tr_{1,j}=\sum_{j=1}^tr_{2,j}=0,\sum_{j=1}^tr_{3,j}=-1,\sum_{j=1}^tr_{4,j}=1\}$ cannot happen.

To sum up, the probability that Alice chooses $\opt_1$ is larger than $\frac{1}{4}$, which is what we want.
\end{proof}

\subsubsection{Proof of Lemma~\ref{lemma.delta}}
\label{app.proofofdelta}

\begin{proof}
It is easy to verify that $\gamma_1>0$. Then under the situation that $S_{1,t-1}>c_1+1$, for the next round, the expectation of $r_{1,t+1}-r_{2,t+1}$ can be bounded as\begin{align*}
    \mathbb{E}[r_{1,t+1}-r_{2,t+1}]&=\mathbb{E}\left[(\mathbb{I}[x_{t+1}=\hat{y}_{t+1}]-\mathbb{I}[x_{t+1}=\hat{y}_{t}])-(\mathbb{I}[1-x_{t+1}=\hat{y}_{t+1}]-\mathbb{I}[1-x_{t+1}=\hat{y}_{t}])\right]\\
    &=\mathbb{E}\left[\mathbb{I}[x_{t+1}=\hat{y}_{t+1}]-\mathbb{I}[1-x_{t+1}=\hat{y}_{t+1}]\right]-\mathbb{E}\left[\mathbb{I}[x_{t+1}=\hat{y}_{t}]-\mathbb{I}[1-x_{t+1}=\hat{y}_{t}]\right]\\
    &\geq \Pr\{\opt_{t+1}^Y=\opt_1\}\gamma_1-\left(1-\Pr\{\opt_{t+1}^Y=\opt_1\}\right)\\
    &-\Pr\{\opt_{t}^Y=\opt_1\}\gamma_2+(1-\Pr\{\opt_{t}^Y=\opt_1\})\\
    &> (1-\delta)\gamma_1-\delta-\max\left\{\gamma_2,(1-\delta) \gamma_2-\delta\right\}.
\end{align*}
We know that\begin{align*}
    \gamma_1-\gamma_2&=\left(P_{X,Y}(1,1)+P_{X,Y}(0,0)-P_{X,Y}(1,0)-P_{X,Y}(0,1)-|P_{X,Y}(1,0)-P_{X,Y}(0,1)|\right)\\
    &+\left(|P_{X,Y}(0,1)-P_{X,Y}(1,0)|-(P_{X,Y}(0,1)-P_{X,Y}(1,0))^2\right)+\left(P_{X,Y}(1,1)-P_{X,Y}(0,0)\right)^2\\
    &>|P_{X,Y}(0,1)-P_{X,Y}(1,0)|-(P_{X,Y}(0,1)-P_{X,Y}(1,0))^2\\
    &>0.
\end{align*}
Therefore, we can always find $\delta$ such that \begin{align*}
    \mathbb{E}\left[r_{1,t+1}-r_{2,t+1}|S_{1,t-1}>c_1+1\right]\geq (1-\delta)\gamma_1-\delta-\max\left\{\gamma_2,(1-\delta) \gamma_2-\delta\right\}\geq \frac{\gamma_1-\gamma_2}{2}>0,
\end{align*}
which is what we want.
\end{proof}

\subsubsection{Proof of Lemma~\ref{lemma.expectr12}}
\label{app.proofofexpectr12}

\begin{proof}
Noticing that when $\mathcal{E}_t^{1,2}$ happens, Bob will choose $\opt_2$ with probability larger than $1-\delta$ in the next $\lceil\frac{1000}{\gamma_1-\gamma_2}\rceil+1$ rounds. Therefore, we can deduce that\begin{align}
    &\mathbb{E}\left[\sum_{j=1}^{\lceil\frac{1000}{\gamma_1-\gamma_2}\rceil+1}\left(r_{1,t+j}-r_{2,t+j}\right)\middle|\mathcal{H}_t\text{ such that }\mathcal{E}_t^{1,2}=1\right]\nonumber\\
    &=\sum_{j=1}^{\lceil\frac{1000}{\gamma_1-\gamma_2}\rceil+1}(\mathbb{E}\left[\mathbb{I}[x_{t+j}=\hat{y}_{t+j}]-\mathbb{I}[1-x_{t+j}=\hat{y}_{t+j}]\middle|\mathcal{H}_t\text{ such that }\mathcal{E}_t^{1,2}=1\right]\nonumber\\
    &-\mathbb{E}\left[\mathbb{I}[x_{t+j}=\hat{y}_{t+j-1}]-\mathbb{I}[1-x_{t+j}=\hat{y}_{t+j-1}]\middle|\mathcal{H}_t\text{ such that }\mathcal{E}_t^{1,2}=1\right])\nonumber\\
    &\leq 2+\sum_{j=2}^{\lceil\frac{1000}{\gamma_1-\gamma_2}\rceil+1}\left(\delta-(1-\delta)\gamma_1+\max\{\gamma_2,(1-\delta)\gamma_2-\delta\}\right)\label{eq.relaxexpectations}\\
    &\leq 2-\sum_{j=2}^{\lceil\frac{1000}{\gamma_1-\gamma_2}\rceil+1}\frac{\gamma_1-\gamma_2}{2}\label{eq.removedelta}\\
    &<-100.\nonumber
\end{align}
Here, \cref{eq.relaxexpectations} is because when $j\geq 2$, we have\begin{align*}
    &\mathbb{E}\left[\mathbb{I}[x_{t+j}=\hat{y}_{t+j}]-\mathbb{I}[1-x_{t+j}=\hat{y}_{t+j}]\middle|\mathcal{H}_t\text{ such that }\mathcal{E}_t^{1,2}=1\right]\\
    &-\mathbb{E}\left[\mathbb{I}[x_{t+j}=\hat{y}_{t+j-1}]-\mathbb{I}[1-x_{t+j}=\hat{y}_{t+j-1}]\middle|\mathcal{H}_t\text{ such that }\mathcal{E}_t^{1,2}=1\right]\\
    &\leq (1-\Pr\{\opt_{t+j}^Y=\opt_1\})-\Pr\{\opt_{t+j}^Y=\opt_1\}\gamma_1\\
    &-(1-\Pr\{\opt_{t+j-1}^Y=\opt_1\})+\Pr\{\opt_{t+j-1}^Y=\opt_1\}\gamma_2\\
    &<\delta-(1-\delta)\gamma_1+\max\{\gamma_2,(1-\delta)\gamma_2-\delta\}.
\end{align*} 
Moreover, \cref{eq.removedelta} holds according to definition of $\delta$ in \cref{lemma.delta}.
\end{proof}

\subsubsection{Proof of Lemma~\ref{lemma.alwaysnicesituation}}
\label{app.proofalwaysnicesituation}



\begin{proof}
First we show the complement of event $\limsup_{t\to \infty}\overline{\mathcal{E}_t^{1,2}\vee\mathcal{E}_t^{2,1}}$ is 
\begin{equation}\label{eq:alwaysnicesituation0}
    \liminf_{t\to \infty} \mathcal{E}_{t}^{1,2}\vee \liminf_{t\to \infty} \mathcal{E}_{t}^{2,1}.
\end{equation}
According to the definition of $\limsup$, we can write $\limsup_{t\rightarrow\infty}\overline{\mathcal{E}_t^{1,2}\vee\mathcal{E}_t^{2,1}}=1$ as $\wedge_{t=1}^{\infty}\left(\vee_{i=t}^\infty \overline{\mathcal{E}_i^{1,2}\vee\mathcal{E}_i^{2,1}}\right)=1$. This is equivalent to $\vee_{t=1}^{\infty}\left(\wedge_{i=t}^\infty \left(\mathcal{E}_i^{1,2}\vee\mathcal{E}_i^{2,1}\right)\right)=0$ by De Morgan's law, which can be written as $\liminf_{t\rightarrow\infty}\left(\mathcal{E}_t^{1,2}\vee\mathcal{E}_t^{2,1}\right)=0$.

More concretely, according to the definition of $\liminf$, $\limsup_{t\rightarrow\infty}\overline{\mathcal{E}_t^{1,2}\vee\mathcal{E}_t^{2,1}}=1$ is \begin{align*}
    \left\{\exists T\in\mathbb{N}^+,\forall t\geq T,\mathcal{E}_t^{1,2}\vee\mathcal{E}_t^{2,1}=1\right\}.
\end{align*}
Therefore, we only need to prove that $\Pr\left\{\exists T\in\mathbb{N}^+,\forall t\geq T,\mathcal{E}_t^{1,2}\vee\mathcal{E}_t^{2,1}=1\right\}=0$ that is the complement of \cref{eq:alwaysnicesituation0}. We denote all the game history before the $t^{th}$ round as $\mathcal{H}_t$. Then we only need to prove that for any $\mathcal{H}_T$, the conditional probability $\Pr\left\{\mathcal{E}_{t}^{1,2}\vee \mathcal{E}_{t}^{2,1}=1,\forall t\geq T\middle|\mathcal{H}_T\right\}=0$. This is because \begin{align*}
    \Pr\left\{\exists T\in\mathbb{N}^+,\forall t\geq T,\mathcal{E}_t^{1,2}\vee\mathcal{E}_t^{2,1}=1\right\}&\leq\sum_{T=1}^\infty\sum_{\mathcal{H}_T}\Pr\{\mathcal{H}_t\}\Pr\left\{\forall t\geq T,\mathcal{E}_{t}^{1,2}\vee \mathcal{E}_{t}^{2,1}=1\middle|\mathcal{H}_T\right\},
\end{align*}
where the number of summed terms is countable and we know that the sum of countable infinite zeros is still zero.

Moreover, if $\mathcal{E}_t^{1,2}=1$, $\mathcal{E}_{t+1}^{2,1}\neq 1$ according to the definition, we know that $\left\{\mathcal{E}_t^{1,2}=1,\mathcal{E}_{t+1}^{2,1}=1\right\}$ and $\left\{\mathcal{E}_t^{2,1}=1,\mathcal{E}_{t+1}^{1,2}=1\right\}$ always equals to zero. Therefore, $\left\{\forall t\geq T,\mathcal{E}_{t}^{1,2}\vee \mathcal{E}_{t}^{2,1}=1\middle|\mathcal{H}_T\right\}$ is equivalent to \begin{small}\begin{align}
    \left\{\forall t\geq T,\mathcal{E}_{t}^{1,2}\vee \mathcal{E}_{t}^{2,1}=1\middle|\mathcal{H}_T\right\}\wedge\left(\wedge_{t=1}^\infty \overline{\left\{\mathcal{E}_t^{1,2}=1,\mathcal{E}_{t+1}^{2,1}=1\right\}}\right)\wedge\left(\wedge_{t=1}^\infty \overline{\left\{\mathcal{E}_t^{2,1}=1,\mathcal{E}_{t+1}^{1,2}=1\right\}}\right).\label{eq.eqevent}
\end{align}\end{small}
Moreover, suppose the event expressed as \cref{eq.eqevent} happens, we can deduce that for $t\geq T$, if $\mathcal{E}_t^{1,2}=1$, $\mathcal{E}_{t+1}^{2,1}=0$ because  $\left\{\mathcal{E}_t^{1,2}=1,\mathcal{E}_{t+1}^{2,1}=1\right\}=0$, so $\mathcal{E}_{t+1}^{1,2}$ still happens; if $\mathcal{E}_t^{2,1}=1$, $\mathcal{E}_{t+1}^{1,2}=0$ because  $\left\{\mathcal{E}_t^{2,1}=1,\mathcal{E}_{t+1}^{1,2}=1\right\}=0$, so $\mathcal{E}_{t+1}^{2,1}$ still happens. Thus, event in \cref{eq.eqevent} leads to $\left\{\forall t\geq T,\mathcal{E}_{t}^{1,2}=1\middle|\mathcal{H}_T\right\}\vee\left\{ \forall t\geq T,\mathcal{E}_{t}^{2,1}=1\middle|\mathcal{H}_T\right\}$.
Conversely, $\left\{\forall t\geq T, \mathcal{E}_{t}^{1,2}=1\middle|\mathcal{H}_T\right\}\vee\left\{\forall t\geq T, \mathcal{E}_{t}^{2,1}=1\middle|\mathcal{H}_T\right\}$ is obviously included in the event expressed as \cref{eq.eqevent}.
Hence, event $\left\{\forall t\geq T,\mathcal{E}_{t}^{1,2}\vee \mathcal{E}_{t}^{2,1}=1\middle|\mathcal{H}_T\right\}$ is equivalent to \begin{align*}
    \left\{\forall t\geq T,\mathcal{E}_{t}^{1,2}=1\middle|\mathcal{H}_T\right\}\vee\left\{ \forall t\geq T,\mathcal{E}_{t}^{2,1}=1|\mathcal{H}_T\right\}.
\end{align*}
Therefore, without loss of generality, we only need to prove that $\Pr\left\{\forall t\geq T,\mathcal{E}_{t}^{1,2}=1\middle|\mathcal{H}_T\right\}=0$ for any game history $\mathcal{H}_T$.  We show this by using Borel-Cantelli lemma (\cref{thm.borel}) and Azuma-Hoeffding inequality (\cref{thm:asuma}) to find a sub-sequence of tasks $\left\{t_i:=T+i\left(\lceil\frac{1000}{\gamma_1-\gamma_2}\rceil\right):{i\ge 1}\right\}$ for some $T$ so that $\mathcal{E}^{1,2}_{t_i}$ only happens finitely often. For simplicity, we denote $\lceil\frac{1000}{\gamma_1-\gamma_2}\rceil$ by $\zeta$.

Let $\tau$ be a stopping time such that $\tau=\min_{t>T}\{R_{1,t}\leq c_0\text{ or }S_{2,t}\leq c_0\}$ and $i_\tau=\lceil\frac{\tau-T}{\zeta+1}\rceil$.

We design a series of new variables $\{D_i\}_{i=0,1,2,\cdots}$, where \begin{align*}
    D_i&=\left(R_{1,T+i\left(\zeta+1\right)}-R_{2,T+i\left(\zeta+1\right)}+100i\right)\mathbb{I}\left[T+(i-1)\left(\zeta+1\right)<\tau\right]\\
    &+\left(R_{1,T+i_\tau\left(\zeta+1\right)}-R_{2,T+i_\tau\left(\zeta+1\right)}+100i_\tau\right)\mathbb{I}\left[T+(i-1)\left(\zeta+1\right)\geq\tau\right].
\end{align*}

Now we show that $\{D_i\}_{i=0,1,2,\cdots}$ is a super-martingale with bounded difference, and we will use Azuma-Hoeffding inequality (\cref{thm:asuma}) to show the value of $D_i$ (and thus $R_{1,t}$) cannot be too big. Therefore, $\mathcal{E}^{1,2}_t$  can only happen finitely many times by Borel-Cantelli lemma (\cref{thm.borel}).

Actually, if $\mathcal{H}_{T+i\left(\zeta+1\right)}$ satisfies that $\mathbb{I}[T+i\left(\zeta+1\right)<\tau]=1$, we have $\mathcal{E}^{1,2}_{T+i\left(\zeta+1\right)}=1$. According to \cref{lemma.expectr12}, we have\begin{align*}
    &\mathbb{E}\left[D_{i+1}\middle|\mathcal{H}_{T+i\left(\zeta+1\right)}\right]\\
    &=\mathbb{E}\left[\sum_{j=1}^{\zeta+1}(r_{1,T+i\left(\zeta+1\right)+j}-r_{2,T+i\left(\zeta+1\right)+j})\middle|\mathcal{H}_{T+i\left(\zeta+1\right)}\right]+D_{i}+100\\
    &\leq -100+D_{i}+100=D_i.
\end{align*}
On the other hand, if $\mathcal{H}_{T+i\left(\zeta+1\right)}$ satisfies that $\mathbb{I}[T+i\left(\zeta+1\right)<\tau]=0$, we know that $D_{i+1}=D_i$.

Moreover, when $i\leq i_\tau$, we know that $|D_{i}-D_{i-1}|$ is bounded by $102+2\zeta$ for any $i\in\mathbb{N}^+$ because $|R_{1,t+j}-R_{1,t}|$ is bounded by $j$ by definition for any $t,j\in\mathbb{N}$. When $i>i_\tau$, we know that $D_i=D_{i-1}$.

It is worthy to notice that $\mathcal{E}_{T+i\left(\zeta+1\right)}^{1,2}=1$ implies $R_{1,T+i\left(\zeta+1\right)}>c_0$. Therefore, if $\mathcal{E}_{T+j}^{1,2}=1$ for all $j>T$, $\tau$ does not exist, so we have $D_{i}>2c_0+100i$. However, according to Azuma–Hoeffding inequality (\cref{thm:asuma}), for $i>\frac{R_{1,T}-R_{2,T}-2c_0}{100}$, we have\begin{align*}
    \Pr\{D_{i}>2c_0+100i\}&=\Pr\{D_{i}-D_0>2c_0+100i-(R_{1,T}-R_{2,T})\}\\
    &\leq \exp\left(\frac{-(2c_0+100i-(R_{1,T}-R_{2,T}))^2}{2i(102+2\zeta)^2}\right).
\end{align*}
This upper bound of $\Pr\{D_{i}>2c_0+100i\}$ decays exponentially in $i$. By the Borel-Cantelli lemma (\cref{thm.borel}), the event $\{R_{1,T+i\left(\zeta+1\right)}>c_0\}$ will occur only finitely often almost surely. This is contradictory to $\mathcal{E}_{T+j}^{1,2}=1$ for all $j>T$, so $\Pr\{\mathcal{E}_{t}^{1,2}=1,\forall t\geq T|\mathcal{H}_T\}=0$.
\end{proof}

\subsubsection{Proof of Lemma~\ref{lemma.inflargegap}}
\label{app.proofinflargegap}

\begin{proof}
By symmetry, we only consider the case that $\mathcal{H}_T$ with $R_{1,T}+S_{1,T}\geq 0$ and $R_{1,T}\geq S_{1,T}$. If $R_{1,T}\leq c_0$, then we have $S_{1,T}\geq -R_{1,T}\geq-c_0$; if $R_{1,T}>c_0$, because $\overline{\mathcal{E}_T^{1,2}\vee\mathcal{E}_T^{2,1}}=1$, $S_{2,T}\leq c_0$, so $S_{1,T}=-S_{2,T}\geq -c_0$. Therefore, we have $R_{1,T}\geq 0$ together with $S_{1,T}\geq -c_0$.

We now propose a process $\mathcal{P}_T$ with less than $T+4(u+c_0)+100$ of rounds, and we will prove that it happens with probability no less than a constant $\lambda_u$ given any $\mathcal{H}_T$ as we suppose. The process $\mathcal{P}_T$ is defined as \begin{enumerate}
    \item If $S_{1,T}\geq0$, skip this phase. Otherwise, $x_{j+1}=1-\hat{y}_j,y_{j+1}=1-\hat{x}_j$ from $j=T$, Alice uses strategy $\opt_1$, Bob uses strategy $\opt_2$ until some round $T_1$ such that $S_{1,T_1}\geq 0$.
    \item Alice and Bob uses strategy $\opt_1$ for $4u+50$ rounds and signals are generated as $x_{j+1}=y_{j+1}=-x_j$ for $j\geq T_1$.
\end{enumerate}

Firstly, we prove that $\mathcal{P}_T$ will stop in $T+4(u+c_0)+100$ rounds and in the round $T_2$ exactly after $\mathcal{P}_T$, the game will enter good events such that $\mathcal{E}^{1,1}(u)\vee\mathcal{E}^{2,2}(u)=1$. This part can be proved simply according to the definition of $\mathcal{P}_T$.

During this process, we claim that $R_{1,i}$ and $S_{1,i}$ are monotone. If $i\in [T+1, T_1]$, we have\begin{align*}
    &R_{1,i}-R_{1,i-1}=r_{1,i}=\mathbb{I}[x_i=\hat{y}_i]-\mathbb{I}[x_i=\hat{y}_{i-1}]=\mathbb{I}[x_i=\hat{y}_i]\geq 0,\\
    &S_{1,i}-S_{1,i-1}=s_{1,i}=\mathbb{I}[y_i=\hat{x}_i]-\mathbb{I}[y_i=\hat{x}_{i-1}]=\mathbb{I}[y_i=\hat{x}_i]\geq 0.
\end{align*}
Therefore, $R_{1,T_1}\geq R_{1,T}\geq 0$ so after the first phase, both $R_{1,T_1}$ and $S_{1,T_1}$ are non-negative.

If $i\in [T_1+1, T_2]$ where $T_2=T_1+4u+50$, we have\begin{align*}
    &R_{1,i}-R_{1,i-1}=r_{1,i}=\mathbb{I}[x_i=\hat{y}_i]-\mathbb{I}[x_i=\hat{y}_{i-1}]\geq\mathbb{I}[x_i=y_i]-1=0,\\
    &S_{1,i}-S_{1,i-1}=s_{1,i}=\mathbb{I}[y_i=\hat{x}_i]-\mathbb{I}[y_i=\hat{x}_{i-1}]\geq\mathbb{I}[y_i=x_i]-1=0.
\end{align*}

Now we prove that $T_1\leq T+4c_0+50$. To prove this, we only need to show that $S_{1,i+4}-S_{1,i}\geq 1$ if $T\leq i\leq T_1-4$, so $S_{1,T+4c_0}\geq c_0+S_{1,T}\geq 0$ if $T+4c_0\leq T_1-4$ and thus, $T_1\leq T+4c_0$. Actually, we have\begin{align*}
    S_{1,i+4}-S_{1,i}&=s_{1,i+1}+s_{1,i+2}+s_{1,i+3}+s_{1,i+4}\\
    &=\mathbb{I}[y_{i+1}=\hat{x}_{i+1}]-\mathbb{I}[y_{i+1}=\hat{x}_{i}]+\mathbb{I}[y_{i+2}=\hat{x}_{i+2}]-\mathbb{I}[y_{i+2}=\hat{x}_{i+1}]\\
    &+\mathbb{I}[y_{i+3}=\hat{x}_{i+3}]-\mathbb{I}[y_{i+3}=\hat{x}_{i+2}]+\mathbb{I}[y_{i+4}=\hat{x}_{i+4}]-\mathbb{I}[y_{i+4}=\hat{x}_{i+3}]\\
    &=\mathbb{I}[1-\hat{y}_{i+1}=\hat{x}_{i+1}]+\mathbb{I}[1-\hat{x}_{i+1}=1-\hat{y}_{i+1}]+\mathbb{I}[\hat{y}_{i+1}=1-\hat{x}_{i+1}]+\mathbb{I}[\hat{x}_{i+1}=\hat{y}_{i+1}]\\
    &=2,
\end{align*}
which is what we want to prove. An example of the first phase is shown as \cref{tab.examplephase1}.

Next, we prove that after the second phase, $\mathbb{E}^{1,1}_{T_2}(u)=1$. Actually, we have for any $i\in[T_1+2,T_2]$,\begin{align*}
    &R_{1,i}-R_{1,i-1}=r_{1,i}=\mathbb{I}[x_i=\hat{y}_i]-\mathbb{I}[x_i=\hat{y}_{i-1}]=\mathbb{I}[x_i=y_i]-\mathbb{I}[x_i=x_{i-1}]=1,\\
    &S_{1,i}-S_{1,i-1}=s_{1,i}=\mathbb{I}[y_i=\hat{x}_i]-\mathbb{I}[y_i=\hat{x}_{i-1}]=\mathbb{I}[y_i=x_i]-\mathbb{I}[y_i=x_{i-1}]=1.
\end{align*}
Therefore, $R_{1,T_2}\geq R_{1,T_1+1}+4u+49\geq R_{1,T_1}+4u+49>u$ and $S_{1,T_2}\geq S_{1,T_1+1}+4u+49\geq S_{1,T_1}+4u+49>u$. Therefore, $\mathbb{E}^{1,1}_{T_2}(u)=1$. An example of the first phase is shown as \cref{tab.examplephase2}.

\begin{table}[!htbp]
\centering
\begin{tabular}{|c|c|c|c|c|c|c|}
\hline
$\hat{x}_{T}=0$&$\hat{x}_{T+1}=0$&$\hat{x}_{T+2}=1$&$\hat{x}_{T+3}=1$&$\hat{x}_{T+4}=0$&$\hat{x}_{T+5}=0$&$\hat{x}_{T+6}=1$\\
\hline
$x_T$&$x_{T+1}=0$&$x_{T+2}=1$&$x_{T+3}=1$&$x_{T+4}=0$&$x_{T+5}=0$&$x_{T+6}=1$\\
\hline
$y_T$&$y_{T+1}=1$&$y_{T+2}=1$&$y_{T+3}=0$&$y_{T+4}=0$&$y_{T+5}=1$&$y_{T+6}=1$\\
\hline
$\hat{y}_{T}=1$&$\hat{y}_{T+1}=0$&$\hat{y}_{T+2}=0$&$\hat{y}_{T+3}=1$&$\hat{y}_{T+4}=1$&$\hat{y}_{T+5}=0$&$\hat{y}_{T+6}=0$\\
\hline
\end{tabular}
\caption{An example of Phase 1 for $\mathcal{P}_T$ with $T_1=T+6$.}\label{tab.examplephase1}
\end{table}

\begin{table}[!htbp]
\centering
\begin{tabular}{|c|c|c|c|c|c|c|}
\hline
$\hat{x}_{T+6}=1$&$\hat{x}_{T+7}=0$&$\hat{x}_{T+8}=1$&$\hat{x}_{T+9}=0$&$\hat{x}_{T+10}=1$&$\hat{x}_{T+11}=0$&$\hat{x}_{T+12}=1$\\
\hline
$x_{T+6}=1$&$x_{T+7}=0$&$x_{T+8}=1$&$x_{T+9}=0$&$x_{T+10}=1$&$x_{T+11}=0$&$x_{T+12}=1$\\
\hline
$y_{T+6}=1$&$y_{T+7}=0$&$y_{T+8}=1$&$y_{T+9}=0$&$y_{T+10}=1$&$y_{T+11}=0$&$y_{T+12}=1$\\
\hline
$\hat{y}_{T+6}=0$&$\hat{y}_{T+7}=0$&$\hat{y}_{T+8}=1$&$\hat{y}_{T+9}=0$&$\hat{y}_{T+10}=1$&$\hat{y}_{T+11}=0$&$\hat{y}_{T+12}=1$\\
\hline
\end{tabular}
\caption{An example of Phase 2 for $\mathcal{P}_T$ with $T_1=T+6$ and $T_2=T+12$.}\label{tab.examplephase2}
\end{table}

Until now, we have proved that $\mathcal{P}_T$ can lead to $\mathbb{E}^{1,1}_{T_2}(u)=1$ where $T_2\leq T+4(c_0+u)+100$. 

Then we lower-bound the probability of $\mathcal{P}_T$ happens given $\mathcal{H}_T$ such that $R_{1,T}\geq 0,S_{1,T}\geq -c_0$. Roughly speaking, the probability of each round in $\mathcal{P}$ is lower-bounded by a constant and we know that $\mathcal{P}_T$ has a limited number of rounds, which implies the entire probability of $\mathcal{P}_T$ is bounded by a power of the constant probability lower-bounding a single round in $\mathcal{P}_T$.

For a round $i$ in the first phase, because signals are i.i.d and $R_{1,i}\geq 0,S_{2,i}\geq 0$, the probability of $\{x_{i}=\hat{y}_{i-1}\}\wedge\{ y_i=\hat{x}_{i-1}\}\wedge\{\text{Alice chooses }\opt_1\}\wedge\{\text{Bob chooses }\opt_2\}$ given a consistent history $\mathcal{H}_{i-1}$ is no less than $\min_{i,j\in\{0,1\}}\{P_{X,Y}(i,j)\}\times 0.25^2$. For a round $i$ in the second phase, because signals are i.i.d and $R_{1,i}\geq 0,S_{1,i}\geq 0$, the probability of $\{x_{i}=-x_{i-1}\}\wedge\{y_i=-x_{i-1}\}\wedge\{\text{Alice chooses }\opt_1\}\wedge\{\text{Bob chooses }\opt_1\}$ given a consistent history $\mathcal{H}_{i-1}$ is also no less than $\min_{i,j\in\{0,1\}}\{P_{X,Y}(i,j)\}\times 0.25^2$.  

Therefore, for a history $\mathcal{H}_T$ such that $R_{1,T}\geq 0,S_{1,T}\geq -c_0$, we have\begin{align*}
    &\Pr\left\{\left(\vee_{i=T}^{T+4(u+c_0)+100}\mathcal{E}^{1,1}_t(u)\right)\vee\left(\vee_{i=T}^{T+4(u+c_0)+100}\mathcal{E}^{2,2}_t(u)\right)=1\middle|\mathcal{H}_T\right\}\\
    &\geq\Pr\left\{\mathcal{P}_T\text{ happens}\middle|\mathcal{H}_T\right\}\\
    &\geq \prod_{t=1}^{T_2} \left(\min_{i,j\in\{0,1\}}\left\{P_{X,Y}(i,j)\right\}\times0.25^2\right)\\
    &\geq \left(\min_{i,j\in\{0,1\}}\{P_{X,Y}(i,j)\}\times0.25^2\right)^{100+4(u+c_0)}.
\end{align*}
Therefore, we can set $\lambda_u$ as $(\min_{i,j\in\{0,1\}}\{P_{X,Y}(i,j)\}\times0.25^2)^{100+4(u+c_0)}$, which is what we want to prove.
\end{proof}



\subsubsection{Proof of Lemma~\ref{lemma.expectr12increase}}
\label{app.proofexpectr12increase}

\begin{proof}
By the definition of $c_0$, we know that when $\mathcal{E}_t^{1,1}(\lfloor\frac{u}{2}\rfloor)\vee\mathcal{E}_t^{2,2}(\lfloor\frac{u}{2}\rfloor)=1$ happens, Alice and Bob will choose $\opt_1$ both with probability larger than $1-\delta$ in the next $\zeta+1$ rounds.
Using this, we want to prove that $\mathbb{E}[\sum_{j=1}^{\zeta+1}(r_{1,t+j}-r_{2,t+j})|\mathcal{H}_t\text{ such that }\mathcal{E}_t^{1,1}(\lfloor\frac{u}{2}\rfloor)=1]\geq 100$. Actually, we can deduce that \begin{align}
    &\mathbb{E}\left[\sum_{j=1}^{\zeta+1}(r_{1,t+j}-r_{2,t+j})\middle|\mathcal{H}_t\text{ such that }\mathcal{E}_t^{1,1}\left(\lfloor\frac{u}{2}\rfloor\right)=1\right]\nonumber\\
    &=\sum_{j=1}^{\zeta+1}(\mathbb{E}\left[\mathbb{I}[x_{t+j}=\hat{y}_{t+j}]-\mathbb{I}[1-x_{t+j}=\hat{y}_{t+j}]\middle|\mathcal{H}_t\text{ such that }\mathcal{E}_t^{1,1}\left(\lfloor\frac{u}{2}\rfloor\right)=1\right]\nonumber\\
    &-\mathbb{E}\left[\mathbb{I}[x_{t+j}=\hat{y}_{t+j-1}]-\mathbb{I}[1-x_{t+j}=\hat{y}_{t+j-1}]\middle|\mathcal{H}_t\text{ such that }\mathcal{E}_t^{1,1}\left(\lfloor\frac{u}{2}\rfloor\right)=1\right])\nonumber\\
    &\geq -2-\sum_{j=2}^{\zeta+1}\left(\delta-(1-\delta)\gamma_1+\max\{\gamma_2,(1-\delta)\gamma_2-\delta\}\right)\label{eq.relaxexpectations1}\\
    &\geq -2+\sum_{j=2}^{\zeta+1}\frac{\gamma_1-\gamma_2}{2}\label{eq.removedelta1}\\
    &>100.\nonumber
\end{align}
Here, \cref{eq.relaxexpectations1} is because when $j\geq 2$, we have\begin{align*}
    &\sum_{j=1}^{\zeta+1}(\mathbb{E}\left[\mathbb{I}[x_{t+j}=\hat{y}_{t+j}]-\mathbb{I}[1-x_{t+j}=\hat{y}_{t+j}]\middle|\mathcal{H}_t\text{ such that }\mathcal{E}_t^{1,1}\left(\lfloor\frac{u}{2}\rfloor\right)=1\right]\\
    &-\mathbb{E}\left[\mathbb{I}[x_{t+j}=\hat{y}_{t+j-1}]-\mathbb{I}[1-x_{t+j}=\hat{y}_{t+j-1}]\middle|\mathcal{H}_t\text{ such that }\mathcal{E}_t^{1,1}\left(\lfloor\frac{u}{2}\rfloor\right)=1\right])\\
    &\geq -(1-\Pr\{\opt_{t+j}^Y=\opt_1\})+\Pr\{\opt_{t+j}^Y=\opt_1\}\gamma_1\\
    &+(1-\Pr\{\opt_{t+j-1}^Y=\opt_1\})-\Pr\{\opt_{t+j-1}^Y=\opt_1\}\gamma_2\\
    &>-\delta+(1-\delta)\gamma_1-\max\{\gamma_2,(1-\delta)\gamma_2-\delta\}.
\end{align*} 
Moreover, \cref{eq.removedelta1} holds according to definition of $\delta$ in \cref{lemma.delta}.
\end{proof}

\subsubsection{Proof of Lemma~\ref{lemma.convergecondition}}
\label{app.proofconvergecondition}


\begin{proof}
Without loss of generality, we suppose $\mathcal{H}_T$ satisfies $\mathcal{E}_T^{1,1}(u)$, and what we need is to prove that $\exists u\in\mathbb{N}^+$ such that \begin{align*}
    \Pr\left\{\forall i\in\mathbb{N}, \mathcal{E}_{T+i\left(\zeta+1\right)}^{1,1}\left(\lfloor\frac{u}{2}\rfloor+i\right)=1\middle|\mathcal{H}_T\right\}\geq 1-\varepsilon,
\end{align*}
for any $\varepsilon>0$.

We choose $u$ to be larger than $2c_0+1$ at first. Then we let $\tau=\min_{t=T+i\left(\zeta+1\right),i\in\mathbb{N}^+}\{R_{1,t}\leq \lfloor\frac{u}{2}\rfloor \text{ or }S_{1,t}\leq \lfloor\frac{u}{2}\rfloor\}$ and $i_\tau=\lceil\frac{\tau-T}{\zeta+1}\rceil$. We construct a sequence of random variables $\{D_i\}_{i\in\mathbb{N}}$ such that\begin{align*}
    D_i&=\left(R_{1,T+i\left(\zeta+1\right)}- R_{2,T+i\left(\zeta+1\right)}-100i\right)\mathbb{I}\left[T+(i-1)\left(\zeta+1\right)<\tau\right]\\
    &+\left(R_{1,T+i_\tau\left(\zeta+1\right)}- R_{2,T+i_\tau\left(\zeta+1\right)}-100i_\tau\right)\mathbb{I}\left[T+(i-1)\left(\zeta+1\right)\geq\tau\right].
\end{align*}

Now we show that $\{D_i\}_{i=0,1,2,\cdots}$ is a sub-martingale with bounded difference, and we will use Azuma-Hoeffding inequality (\cref{thm:asuma}) to show the value of $D_i$ (and thus $R_{1,t}$) cannot be too small. This implies that $\{D_j<u+2j-100j\}$ happens with a probability upper-bounded by a function of $u$. Therefore, the probability of $\{\forall i\in\mathbb{N}, \mathcal{E}_{T+i\left(\zeta+1\right)}^{1,1}\left(\lfloor\frac{u}{2}\rfloor+i\right)=1|\mathcal{H}_T\}$ can be lower-bounded by a decreasing function of $u$, which tends towards $1$ when $u$ tends towards infinity. 

Actually, we firstly verify that $\{D_i\}_{i\in\mathbb{N}}$ is a sub-martingale. If $\mathcal{H}_{T+(i-1)\left(\zeta+1\right)}$ satisfies that $\mathbb{I}\left[T+(i-1)\left(\zeta+1\right)<\tau\right]=1$, we have $R_{1,T+(i-1)\left(\zeta+1\right)}>\lfloor\frac{u}{2}\rfloor$ and $S_{1,T+(i-1)\left(\zeta+1\right)}>\lfloor\frac{u}{2}\rfloor$. According to \cref{lemma.expectr12increase}, we have \begin{align*}
    &\mathbb{E}\left[D_i\middle|\mathcal{H}_{T+(i-1)\left(\zeta+1\right)}\text{ with }D_{i-1},\cdots,D_0\right]\\
    &=\mathbb{E}\left[\sum_{j=1}^{\zeta+1}\left(r_{1,T+(i-1)\left(\zeta+1\right)+j}-r_{2,T+(i-1)\left(\zeta+1\right)+j}\right)\middle|\mathcal{H}_{T+(i-1)\left(\zeta+1\right)}\right]\\
    &+D_{i-1}-100\\
    &>D_{i-1}.
\end{align*}
If $\mathbb{I}\left[T+(i-1)\left(\zeta+1\right)<\tau\right]=0$, we know that $D_{i}=D_{i-1}$.

When $i\leq i_\tau$, we know that $|D_i-D_{i-1}|=|2R_{1,T+i\left(\zeta+1\right)}-2R_{1,T+(i-1)\left(\zeta+1\right)}|+100\leq 102+\zeta$. When $i>i_\tau$, $|D_i-D_{i-1}|=0$.

Now according to Azuma-Hoeffding inequality (\cref{thm:asuma}), we have\begin{align}
    \Pr\left\{D_j<u+2j-100j\right\}\nonumber
    &\leq \Pr\left\{D_j-D_0<-98j-u\right\}\nonumber\\
    &\leq \exp\left(\frac{-(98j+u)^2}{2j\left(102+\zeta\right)^2}\right).\nonumber
\end{align}

If $\wedge_{i=0}^{j-1}\mathcal{E}_{T+i\left(\zeta+1\right)}^{1,1}\left(\lfloor\frac{u}{2}\rfloor+i\right)=1$ and $R_{1,T+j\left(\zeta+1\right)}\leq\lfloor\frac{u}{2}+j\rfloor$, we can deduce that $i_\tau>j$ and hence, $D_j<u+2j-100j$. We have\begin{align*}
    &\Pr\left\{\wedge_{i=0}^{j-1}\mathcal{E}_{T+i\left(\zeta+1\right)}^{1,1}\left(\lfloor\frac{u}{2}\rfloor+i\right)=1,R_{1,T+j\left(\zeta+1\right)}\leq\lfloor\frac{u}{2}\rfloor+j\middle|\mathcal{H}_T\right\}\\
    &\leq \Pr\{D_j<u+2j-100j\}\\
    &\leq \exp\left(\frac{-(98j+u)^2}{2j\left(102+\zeta\right)^2}\right).
\end{align*}

Symmetrically, we have\begin{align*}
    &\Pr\left\{\wedge_{i=0}^{j-1}\mathcal{E}_{T+i\left(\zeta+1\right)}^{1,1}\left(\lfloor\frac{u}{2}\rfloor+i\right)=1,S_{1,T+j\left(\zeta+1\right)}\leq\lfloor\frac{u}{2}\rfloor+j\middle|\mathcal{H}_T\right\}\\
    &\leq \exp\left(\frac{-(98j+u)^2}{2j\left(102+\zeta\right)^2}\right).
\end{align*}

Therefore, we have\begin{align*}
    &\Pr\left\{\wedge_{i=0}^{j-1}\mathcal{E}_{T+i\left(\zeta+1\right)}^{1,1}\left(\lfloor\frac{u}{2}\rfloor+i\right)=1,\mathcal{E}_{T+j\left(\zeta+1\right)}^{1,1}\left(\lfloor\frac{u}{2}\rfloor+j\right)=0\middle|\mathcal{H}_T\right\}\\
    &\leq \Pr\left\{\wedge_{i=0}^{j-1}\mathcal{E}_{T+i\left(\zeta+1\right)}^{1,1}\left(\lfloor\frac{u}{2}\rfloor+i\right)=1,R_{1,T+j\left(\zeta+1\right)}\leq\lfloor\frac{u}{2}\rfloor+j\middle|\mathcal{H}_T\right\}\\
    &+\Pr\left\{\wedge_{i=0}^{j-1}\mathcal{E}_{T+i\left(\zeta+1\right)}^{1,1}\left(\lfloor\frac{u}{2}\rfloor+i\right)=1,S_{1,T+j\left(\zeta+1\right)}\leq\lfloor\frac{u}{2}\rfloor+j\middle|\mathcal{H}_T\right\}\\
    &\leq 2\exp\left(\frac{-(98j+u)^2}{2j\left(\zeta+102\right)^2}\right).
\end{align*}
Furthermore, we can bound the probability of $\Pr\{\forall i\in\mathbb{N}, \mathcal{E}_{T+\left(\zeta+1\right)i}^{1,1}\left(\lfloor\frac{u}{2}\rfloor+i\right)\vee\mathcal{E}_{T+\left(\zeta+1\right)i}^{2,2}\left(\lfloor\frac{u}{2}\rfloor+i\right)=1|\mathcal{H}_T\}$ as\begin{align*}
     &\Pr\left\{\forall i\in\mathbb{N}, \mathcal{E}_{T+\left(\zeta+1\right)i}^{1,1}\left(\lfloor\frac{u}{2}\rfloor+i\right)=1\middle|\mathcal{H}_T\right\}\\
     &\geq 1-\sum_{j=1}^{+\infty}\Pr\left\{\wedge_{i=0}^{j-1}\mathcal{E}_{T+i\left(\zeta+1\right)}^{1,1}\left(\lfloor\frac{u}{2}\rfloor+i\right)=1,\mathcal{E}_{T+j\left(\zeta+1\right)}^{1,1}(\lfloor\frac{u}{2}\rfloor+j)=0\middle|\mathcal{H}_T\right\}\\
     &\geq 1-\sum_{j=1}^{+\infty}2\exp\left(\frac{-(98j+u)^2}{2j\left(\zeta+102\right)^2}\right)\\
     &\geq 1-\sum_{j=1}^{+\infty}2\exp\left(\frac{-4802j-98u}{\left(\zeta+102\right)^2}\right)\\
     &=1-2\exp\left(\frac{-98u}{\left(\zeta+102\right)^2}\right)\exp\left(\frac{-4802}{\left(\zeta+102\right)^2}\right)\frac{1}{1-\exp\left(\frac{-j}{\left(\zeta+102\right)^2}\right)}\\
     &=1-\frac{2\exp\left(\frac{-98u+4802+j}{\left(\zeta+102\right)^2}\right)}{\exp\left(\frac{j}{\left(\zeta+102\right)^2}\right)-1},
\end{align*}
where $\frac{2\exp\left(\frac{-98u+4802+j}{\left(\zeta+102\right)^2}\right)}{\exp\left(\frac{j}{\left(\zeta+102\right)^2}\right)-1}$ decays exponentially in $u$. Until now, we can find a sufficient large $u$ such that $\Pr\{\forall i\in\mathbb{N}, \mathcal{E}_{T+\left(\zeta+1\right)i}^{1,1}\left(\lfloor\frac{u}{2}\rfloor+i\right)\vee\mathcal{E}_{T+\left(\zeta+1\right)i}^{2,2}\left(\lfloor\frac{u}{2}\rfloor+i\right)=1|\mathcal{H}_T\}\geq 1-\varepsilon$ for $\mathcal{H}_T$ with $\mathcal{E}_T^{1,1}(u)\wedge\mathcal{E}_T^{2,2}(u)=1$.
\end{proof}

\subsubsection{Proof of Truthful Convergence (Theorem~\ref{thm.converge})}

\label{app.proofconverge}

Before the proof of our final theorem, we introduce a lemma showing that good events happen for infinite many times with probability $1$. Combining \cref{lemma.alwaysnicesituation} in step 2 and \cref{lemma.inflargegap}, we use the martingale theory to get the following lemma.

\begin{lemma}
\label{corr.inflargegap}
Given the game defined in \cref{thm.converge}, for all $u$ we have $\Pr\{\limsup_{t\rightarrow+\infty}\mathcal{E}^{1,1}_t(u)\vee\mathcal{E}^{2,2}_t(u)=1\}=1$.
\end{lemma}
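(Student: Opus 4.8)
The plan is to combine the two facts assembled in Steps~2 and~3: that the process returns to intermediate states infinitely often (\cref{lemma.alwaysnicesituation}), and that from any intermediate state a good event is reached within a bounded window with at least the constant probability $\lambda_u$ (\cref{lemma.inflargegap}). Together these say we get infinitely many independent-ish ``trials,'' each succeeding with probability bounded below, so a good event must eventually succeed. I would realize this conditional second-Borel--Cantelli phenomenon through the martingale concentration already at hand, namely Azuma--Hoeffding (\cref{thm:asuma}) together with the ordinary Borel--Cantelli lemma (\cref{thm.borel}), so that the argument stays self-contained.

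Fix $u$ and write $L := 4(u+c_0)+100$ for the window length of \cref{lemma.inflargegap}. First I would carve the time axis into \emph{disjoint} trial windows anchored at intermediate states: set $T_0 := 0$ and, for $k\ge 1$,
$$T_k := \min\left\{t > T_{k-1}+L : \mathcal{H}_t\in \overline{\mathcal{E}_t^{1,2}\vee\mathcal{E}_t^{2,1}}\right\}.$$
Each $T_k$ is a stopping time, and since \cref{lemma.alwaysnicesituation} guarantees intermediate states occur infinitely often almost surely, every $T_k$ is finite a.s. Let $A_k := \{\vee_{t=T_k}^{T_k+L}(\mathcal{E}_t^{1,1}(u)\vee\mathcal{E}_t^{2,2}(u))=1\}$ be the event that a good event occurs in the $k$-th window. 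Because $T_k$ is a stopping time landing in an intermediate state, conditioning on each realization of $\mathcal{H}_{T_k}$ and applying \cref{lemma.inflargegap} gives $\Pr\{A_k\mid \mathcal{H}_{T_k}\}\ge \lambda_u$.

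Next I would convert this uniform conditional lower bound into an almost-sure statement. The windows are disjoint with $T_{k+1}>T_k+L$, so $A_k$ is determined by the history at time $T_k+L$, which precedes $T_{k+1}$; hence the differences $X_k := \mathbb{I}[A_k]-\Pr\{A_k\mid\mathcal{H}_{T_k}\}$ form a bounded martingale-difference sequence ($|X_k|\le 1$, $\E[X_k\mid\mathcal{H}_{T_k}]=0$), and $W_n := \sum_{k=1}^n X_k$ is a martingale. Applying \cref{thm:asuma} with deviation $\tfrac{n\lambda_u}{2}$ and $\sum_{k\le n}c_k^2 = n$ yields $\Pr\{W_n\le -\tfrac{n\lambda_u}{2}\}\le \exp(-n\lambda_u^2/8)$, which is summable in $n$. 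By \cref{thm.borel} applied to the events $\{W_n\le -\tfrac{n\lambda_u}{2}\}$, almost surely $W_n>-\tfrac{n\lambda_u}{2}$ for all large $n$; since $\sum_{k=1}^n \Pr\{A_k\mid\mathcal{H}_{T_k}\}\ge n\lambda_u$, this forces $\sum_{k=1}^n \mathbb{I}[A_k]=W_n+\sum_{k=1}^n\Pr\{A_k\mid\mathcal{H}_{T_k}\}\ge \tfrac{n\lambda_u}{2}\to\infty$. Thus infinitely many $A_k$ occur a.s., so good events occur infinitely often, i.e. $\Pr\{\limsup_{t\to+\infty}\mathcal{E}^{1,1}_t(u)\vee\mathcal{E}^{2,2}_t(u)=1\}=1$.

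The main obstacle will be the passage from the one-window bound of \cref{lemma.inflargegap}, which is stated for a \emph{fixed} time $T$, to the bound $\Pr\{A_k\mid\mathcal{H}_{T_k}\}\ge\lambda_u$ at the \emph{random} stopping time $T_k$. I would handle this exactly as in the proof of \cref{lemma.alwaysnicesituation}: decompose over the countably many values $T_k=T$ and their histories $\mathcal{H}_T\in\overline{\mathcal{E}_T^{1,2}\vee\mathcal{E}_T^{2,1}}$, and use the tower property, so that the uniform-over-admissible-histories form of \cref{lemma.inflargegap} survives the conditioning. Verifying that the $T_k$ are genuine stopping times and that the windows do not overlap (to get the martingale-difference property) is the delicate bookkeeping; the finiteness of the $T_k$ and the Azuma/Borel--Cantelli concentration are then routine.
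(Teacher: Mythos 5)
Your proposal is correct and follows essentially the same route as the paper's own proof: both anchor disjoint windows of length $4(u+c_0)+100$ at intermediate states (which recur infinitely often by \cref{lemma.alwaysnicesituation}), lower-bound each window's success probability by $\lambda_u$ via \cref{lemma.inflargegap}, and apply Azuma--Hoeffding (\cref{thm:asuma}) together with Borel--Cantelli (\cref{thm.borel}) to the compensated count of successful windows to force the number of good events to grow without bound. The only cosmetic differences are your stopping-time anchoring $T_k$ and exact Doob compensator $\sum_k \Pr\{A_k\mid\mathcal{H}_{T_k}\}$, where the paper instead spaces its anchors by counting intermediate states and subtracts $\lambda_u i$ to get a sub-martingale, and it prefaces the argument with a nested-threshold reduction of the $\limsup$ statement to a single-occurrence statement, which your direct linear-growth conclusion renders unnecessary.
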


\begin{proof}

Initially, we prove that in order to prove $\Pr\{\limsup_{t\rightarrow+\infty}\mathcal{E}^{1,1}_t(u)\vee\mathcal{E}^{2,2}_t(u)=1\}=1$, we only need to prove \begin{align*}
    \Pr\{\forall t\in\mathbb{N}^+, \mathcal{E}^{1,1}_t(u)\vee\mathcal{E}^{2,2}_t(u)=0\}=0.
\end{align*}

Actually, in order to prove $\Pr\{\limsup_{t\rightarrow+\infty}\mathcal{E}^{1,1}_t(u)\vee\mathcal{E}^{2,2}_t(u)=1\}=1$, we only need to prove that for any $u\in\mathbb{N}^+$, $\Pr\{\exists t\in\mathbb{N}^+, \mathcal{E}^{1,1}_t(u)\vee\mathcal{E}^{2,2}_t(u)=1\}=1$. More specifically, if this claim holds, we can always find a $t_1\in\mathbb{N}^+$ such that $\mathcal{E}^{1,1}_{t_1}(u_1)\vee\mathcal{E}^{2,2}_{t_1}(u_1)=1$ where $u_1>u$, a $t_2>t_1$ such that $\mathcal{E}^{1,1}_{t_2}(u_2)\vee\mathcal{E}^{2,2}_{t_2}(u_2)=1$ where $u_2>u_1$, a $t_3>t_2$ such that $\mathcal{E}^{1,1}_{t_3}(u_3)\vee\mathcal{E}^{2,2}_{t_3}(u_3)=1$ where $u_3>u_2$ and so on. Hence, we can find a sequence $t_1,t_2,\cdots$ such that $\mathcal{E}^{1,1}_{t_i}(u)\vee\mathcal{E}^{2,2}_{t_i}(u)=1$ for every $i\in\mathbb{N}^+$, which indicates that $\Pr\{\limsup_{t\rightarrow+\infty}\mathcal{E}^{1,1}_t(u)\vee\mathcal{E}^{2,2}_t(u)=1\}=1$. In order to prove $\Pr\{\exists t\in\mathbb{N}^+, \mathcal{E}^{1,1}_t(u)\vee\mathcal{E}^{2,2}_t(u)=1\}=1$, we only need to show that $\Pr\{\forall t\in\mathbb{N}^+, \mathcal{E}^{1,1}_t(u)\vee\mathcal{E}^{2,2}_t(u)=0\}=0$. 

We know from \cref{lemma.alwaysnicesituation} that with probability $1$, there are infinitely many $t\in\mathbb{N}^+$ such that $\overline{\mathcal{E}_t^{1,2}\vee\mathcal{E}_t^{2,1}}=1$. For any $\mathcal{H}_t$ such that $\overline{\mathcal{E}_t^{1,2}\vee\mathcal{E}_t^{2,1}}=1$, we can find a $\{\mathcal{E}^{1,1}_i(u)\vee\mathcal{E}^{2,2}_i(u)=1\}$ in the next $100+4(c_0+u)$ rounds with probability no less than $\lambda_u$ according to Lemma~\ref{lemma.inflargegap}. Therefore, we can create a sub-martingale \begin{align*}
D_i&=\{\sum_{j=0}^i\sum_{t=T_j}^{T_j+100+4(c_0+u)}\mathcal{E}^{1,1}_t(u)\vee\mathcal{E}^{2,2}_t(u)\text{ where }T_j\text{ is the smallest number such that }\\
&\sum_{t=1}^{T_j}\overline{\mathcal{E}^{1,2}_t\vee\mathcal{E}^{2,1}_t}=(4c_0+4u+101)j\}-\lambda_ui.
\end{align*}

Now we show that $\{D_i\}_{i=0,1,2,\cdots}$ is indeed a sub-martingale with bounded difference, and we will use Azuma-Hoeffding inequality (\cref{thm:asuma}) to show the value of $D_i$ cannot be too big. Therefore, $D_i\leq k-\lambda_ui$ for any $k\in\mathbb{N}^+$ such that $k<i\lambda u$ can only happen finitely many times by Borel-Cantelli lemma (\cref{thm.borel}). This implies $\limsup_{t\rightarrow+\infty}\mathcal{E}_t^{1,1}(u)\vee\mathcal{E}_t^{2,2}(u)$ happens with probability $1$.

This is because \begin{align*}
    \mathbb{E}[D_{i}|\mathcal{H}_{T_i}]
    &\geq-\lambda_ui+\sum_{j=0}^{i-1}\sum_{t=T_j}^{T_j+100+4(c_0+u)}\mathcal{E}^{1,1}_t(u)\vee\mathcal{E}^{2,2}_t(u)\\
    &+\mathbb{E}\left[\sum_{t=T_i}^{T_{i}+100+4(c_0+u)}\mathcal{E}^{1,1}_t(u)\vee\mathcal{E}^{2,2}_t(u)\middle|\overline{\mathcal{E}_{T_i}^{1,2}\vee \mathcal{E}_{T_i}^{2,1}}=1\right]\\
    &\geq D_{i-1}+\lambda_{u}(i-1)+\lambda_u-\lambda_ui\geq D_{i-1}.
\end{align*}

Moreover, we know that $|D_{i+1}-D_{i}|\leq 4c_0+4u+101$ for $i\in\mathbb{N}^+$, therefore, according to Azuma-Hoeffding inequality (\cref{thm:asuma}), we have\begin{align*}
    \Pr\{D_{i}\leq k-\lambda_ui\}&\leq \Pr\{D_{i}-D_0\leq k-\lambda_ui\}\\
    &\leq \exp\left(\frac{-(k-\lambda_ui)}{2i(101+4u+4c_0)^2}\right),
\end{align*}
which decays exponentially in $i$. This holds for any $k\in\mathbb{N}^+$ and $i>\frac{k}{\lambda_u}$. Therefore, by Borel-Cantelli lemma (\cref{thm.borel}), $\{D_{i}\leq k-\lambda_ui\}$ will happen only finitely often almost surely. If $\{D_{i}\leq k-\lambda_ui\}$ does not happen, we can deduce that $\mathcal{E}^{1,1}_t(u)\vee\mathcal{E}^{2,2}_t(u)$ happens for at least $k$ times in the first $T_i+4c_0+4u+100$ rounds. Hence, $\mathcal{E}^{1,1}_t(u)\vee\mathcal{E}^{2,2}_t(u)$ happens with probability $1$ and consequentially, $\Pr\{\limsup_{t\rightarrow+\infty}\mathcal{E}^{1,1}_t(u)\vee\mathcal{E}^{2,2}_t(u)=1\}=1$.
\end{proof}

Now combining \cref{corr.inflargegap} and \cref{lemma.convergecondition}, we complete our final proof (\cref{thm.converge}). 

\begin{proof}


First note that if $\mathcal{E}^{\text{Convergence Condition}}_T=\{\forall i\in\mathbb{N}, \mathcal{E}_{T+\left(\lceil\frac{1000}{\gamma_1-\gamma_2}\rceil+1\right)i}^{1,1}\left(\lfloor\frac{u}{2}\rfloor+i\right)\vee\mathcal{E}_{T+\left(\lceil\frac{1000}{\gamma_1-\gamma_2}\rceil+1\right)i}^{2,2}\left(\lfloor\frac{u}{2}\rfloor+i\right)=1\}$ happens for some $T$, we can deduce that $\lim_{t\rightarrow +\infty}(R_{1,t}-R_{2,t})=\lim_{t\rightarrow +\infty}(S_{1,t}-S_{2,t})=\pm\infty$, and $\Pr\{\lim_{t\rightarrow +\infty}\opt^X_t=\lim_{t\rightarrow +\infty}\opt^Y_t=\opt_1\}=1$ or $\Pr\{\lim_{t\rightarrow +\infty}\opt^X_t=\lim_{t\rightarrow +\infty}\opt^Y_t=\opt_2\}=1$.  Therefore, $\Pr\{\vee_{T=1}^{+\infty}\mathcal{E}^{\text{Convergence Condition}}_T\} = 1$ implies \cref{thm.converge}.

Otherwise, suppose $\Pr\{\vee_{T=1}^{+\infty}\mathcal{E}^{\text{Convergence Condition}}_T=1\} = 1-\varepsilon_2< 1$. 
Let $\varepsilon_1=\frac{1}{2}\varepsilon_2$ and $u_1$ satisfy \cref{lemma.convergecondition}, then we can deduce that
\begin{align*}
    &\Pr\left\{\vee_{T=1}^{+\infty}\mathcal{E}^{\text{Convergence Condition}}_T\right\}\\
    &\geq\sum_{T=1}^{+\infty}\Pr\left\{\vee_{t=1}^{+\infty}\mathcal{E}^{\text{Convergence Condition}}_t,\vee_{t=1}^{T-1}\mathcal{E}_t^{1,1}(u_1)\vee\mathcal{E}_t^{2,2}(u_1)=0,\mathcal{E}_T^{1,1}(u_1)\vee\mathcal{E}_T^{2,2}(u_1)=1\right\}\\
    &=\sum_{T=1}^{+\infty}\Pr\left\{\vee_{t=1}^{+\infty}\mathcal{E}^{\text{Convergence Condition}}_t\middle|\left(\vee_{t=1}^{T-1}\mathcal{E}_t^{1,1}(u_1)\vee\mathcal{E}_t^{2,2}(u_1)=0\right)\wedge\left(\mathcal{E}_T^{1,1}(u_1)\vee\mathcal{E}_T^{2,2}(u_1)=1\right)\right\}\\
    &\times \Pr\left\{\vee_{t=1}^{T-1}\mathcal{E}_t^{1,1}(u_1)\vee\mathcal{E}_t^{2,2}(u_1)=0,\mathcal{E}_T^{1,1}(u_1)\vee\mathcal{E}_T^{2,2}(u_1)=1\right\}\\
    &\geq \sum_{T=1}^{+\infty}\Pr\left\{\mathcal{E}^{\text{Convergence Condition}}_T\middle|\left(\vee_{t=1}^{T-1}\mathcal{E}_t^{1,1}(u_1)\vee\mathcal{E}_t^{2,2}(u_1)=0\right)\wedge\left(\mathcal{E}_T^{1,1}(u_1)\vee\mathcal{E}_T^{2,2}(u_1)=1\right)\right\}\\
    &\times \Pr\left\{\vee_{t=1}^{T-1}\mathcal{E}_t^{1,1}(u_1)\vee\mathcal{E}_t^{2,2}(u_1)=0,\mathcal{E}_T^{1,1}(u_1)\vee\mathcal{E}_T^{2,2}(u_1)=1\right\}\\
    &\geq \sum_{T=1}^{+\infty}\min_{T'\in\mathbb{N}^+,\mathcal{H}_T'\in \mathcal{E}_{T'}^{1,1}(u_1)\vee\mathcal{E}_{T'}^{2,2}(u_1)}\Pr\left\{\mathcal{E}^{\text{Convergence Condition}}_{T'}\middle|\mathcal{H}_{T'}\right\}\\
    &\times \Pr\left\{\vee_{t=1}^{T-1}\mathcal{E}_t^{1,1}(u_1)\vee\mathcal{E}_t^{2,2}(u_1)=0,\mathcal{E}_T^{1,1}(u_1)\vee\mathcal{E}_T^{2,2}(u_1)=1\right\}\\
    &= \min_{T\in\mathbb{N}^+,\mathcal{H}_T\in \mathcal{E}_T^{1,1}(u_1)\vee\mathcal{E}_T^{2,2}(u_1)}\Pr\left\{\mathcal{E}^{\text{Convergence Condition}}_T\middle|\mathcal{H}_T\right\}\Pr\left\{\vee_{t=1}^{+\infty}\mathcal{E}_t^{1,1}(u_1)\vee\mathcal{E}_t^{2,2}(u_1)=1\right\}
    \\
    &=\min_{T\in\mathbb{N}^+,\mathcal{H}_T\in \mathcal{E}_T^{1,1}(u_1)\vee\mathcal{E}_T^{2,2}(u_1)}\Pr\left\{\mathcal{E}^{\text{Convergence Condition}}_T\middle|\mathcal{H}_T\right\}\tag{by \cref{corr.inflargegap}}\\
    &\geq 1-\varepsilon_1,\tag{by \cref{lemma.convergecondition}}
\end{align*}
which is a contradiction.
\end{proof}

\section{Error Bars of Simulation in Section~\ref{sec.simu}}
\label{app.errorbar}

We run our simulations single-threadedly on AMD Ryzen\texttrademark~R7-5700U Processor at 3.60GHz with 16GB DDR4 SDRAM. The total running time of all the simulations in \cref{Fig.convrate} is $421$ seconds. Code is available in \url{https://github.com/fengtony686/peer-prediction-convergence}. We give error bars of converge rates of each algorithm shown in \cref{Fig.convrate} one by one. Each error bar is drawn by computing converge proportion of each algorithm in $400$ repeating simulations for $10$ times.

\begin{figure}[H] 
\centering
\includegraphics[width=0.99\textwidth]{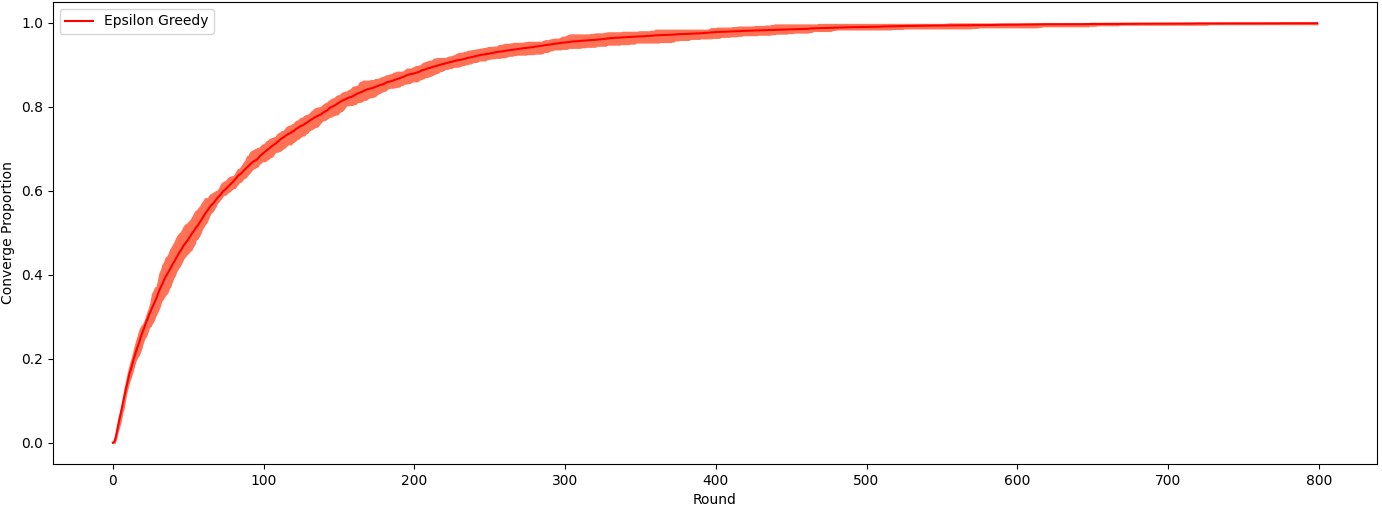}
\caption{Error Bar of $\epsilon$-Greedy in Section~\ref{sec.simu}.}
\label{Fig.errepsgreedy}
\end{figure}

\begin{figure}[H] 
\centering
\includegraphics[width=0.99\textwidth]{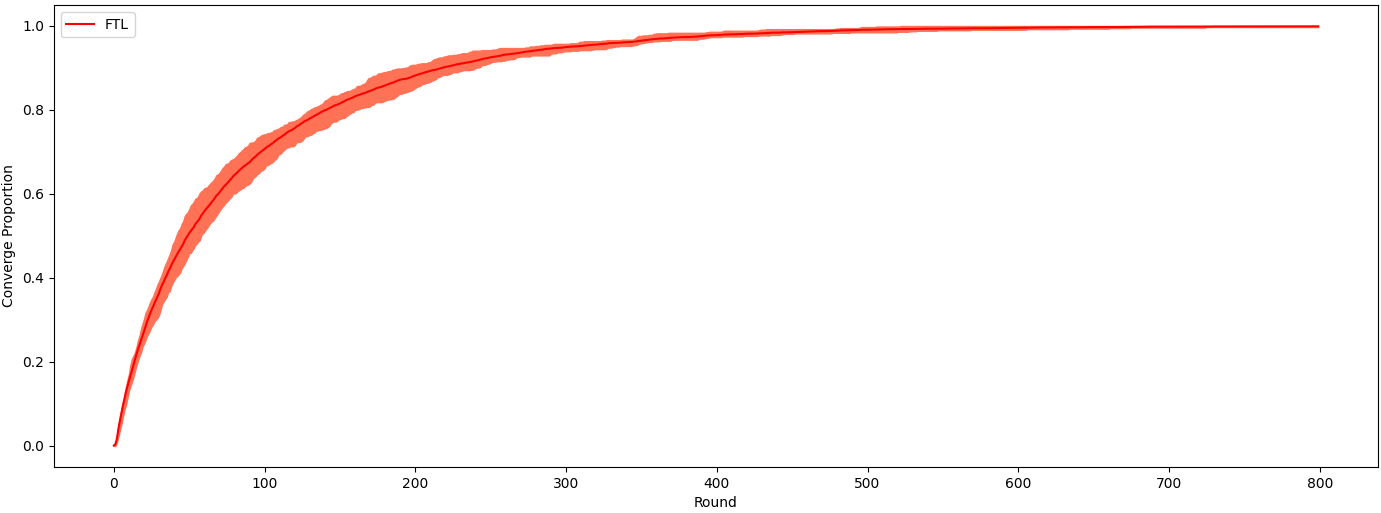}
\caption{Error Bar of FTL in Section~\ref{sec.simu}.}
\label{Fig.errftl}
\end{figure}

\begin{figure}[H] 
\centering
\includegraphics[width=0.99\textwidth]{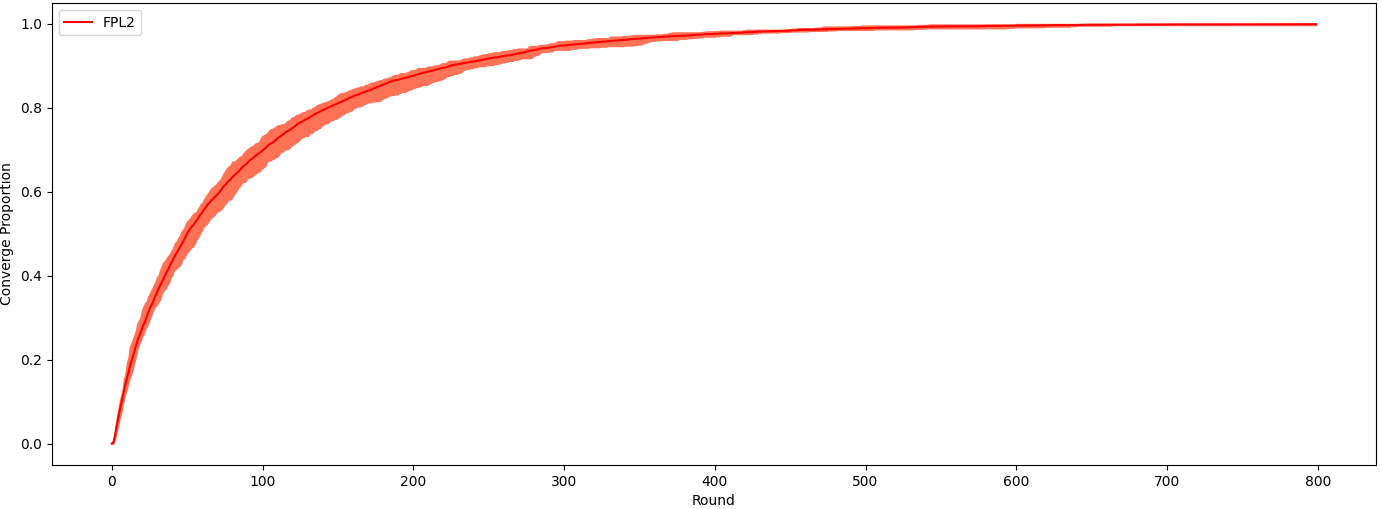}
\caption{Error Bar of FPL2 in Section~\ref{sec.simu}.}
\label{Fig.errfpl2}
\end{figure}

\begin{figure}[H] 
\centering
\includegraphics[width=0.99\textwidth]{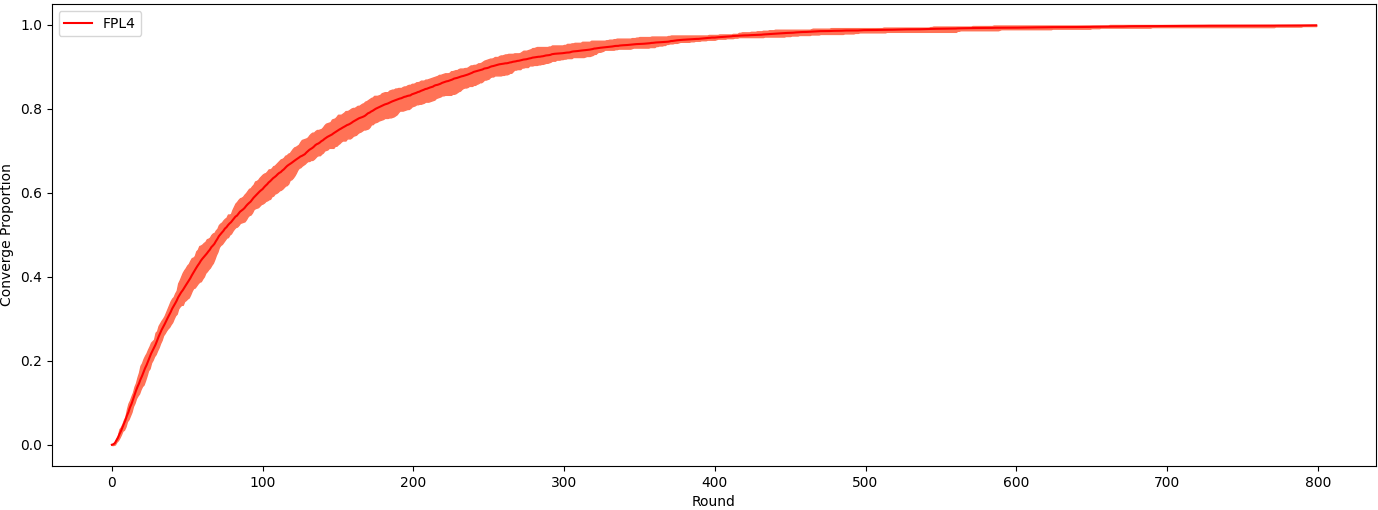}
\caption{Error Bar of FPL4 in Section~\ref{sec.simu}.}
\label{Fig.errfpl4}
\end{figure}

\begin{figure}[H] 
\centering
\includegraphics[width=0.99\textwidth]{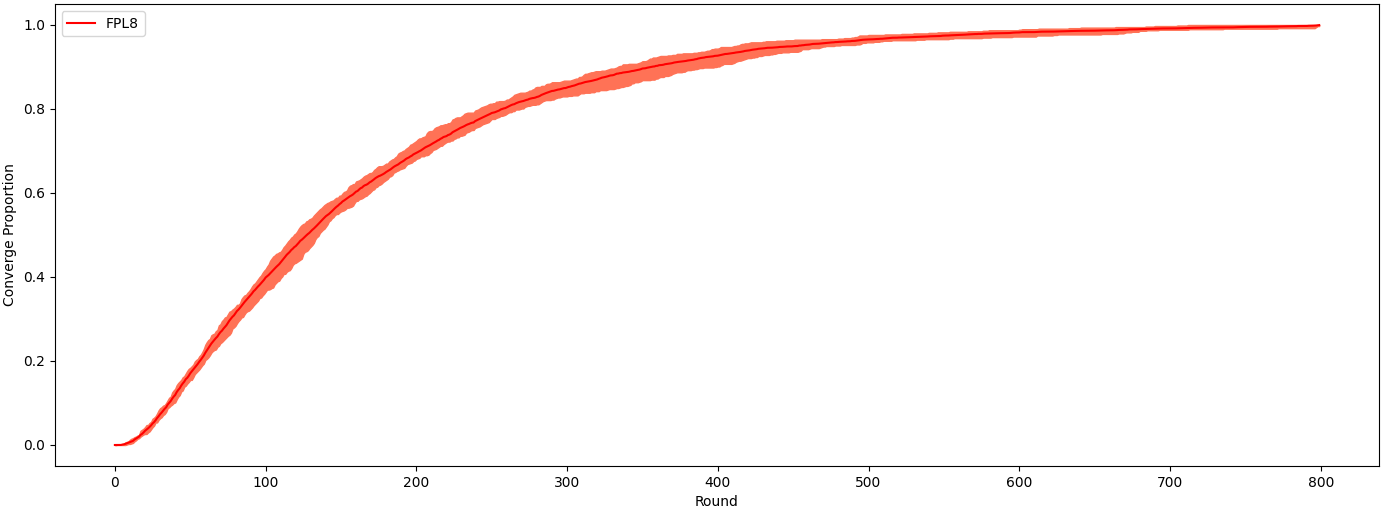}
\caption{Error Bar of FPL8 in Section~\ref{sec.simu}.}
\label{Fig.errfpl8}
\end{figure}

\begin{figure}[H] 
\centering
\includegraphics[width=0.99\textwidth]{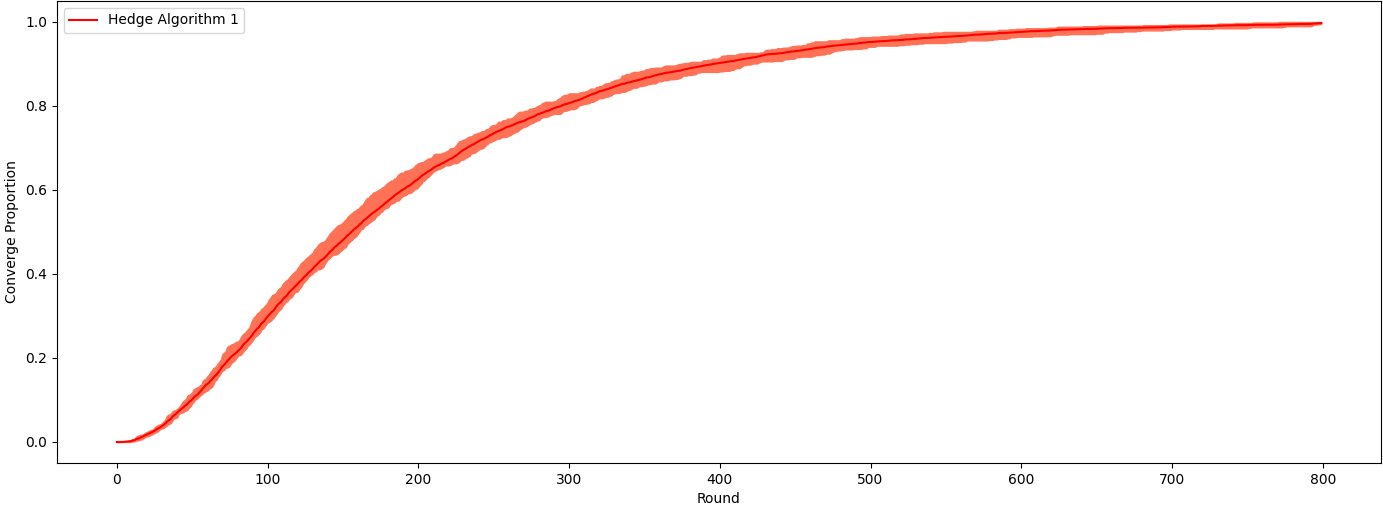}
\caption{Error Bar of Hedge Algorithm 1 in Section~\ref{sec.simu}.}
\label{Fig.errhedge1}
\end{figure}

\begin{figure}[H] 
\centering
\includegraphics[width=0.99\textwidth]{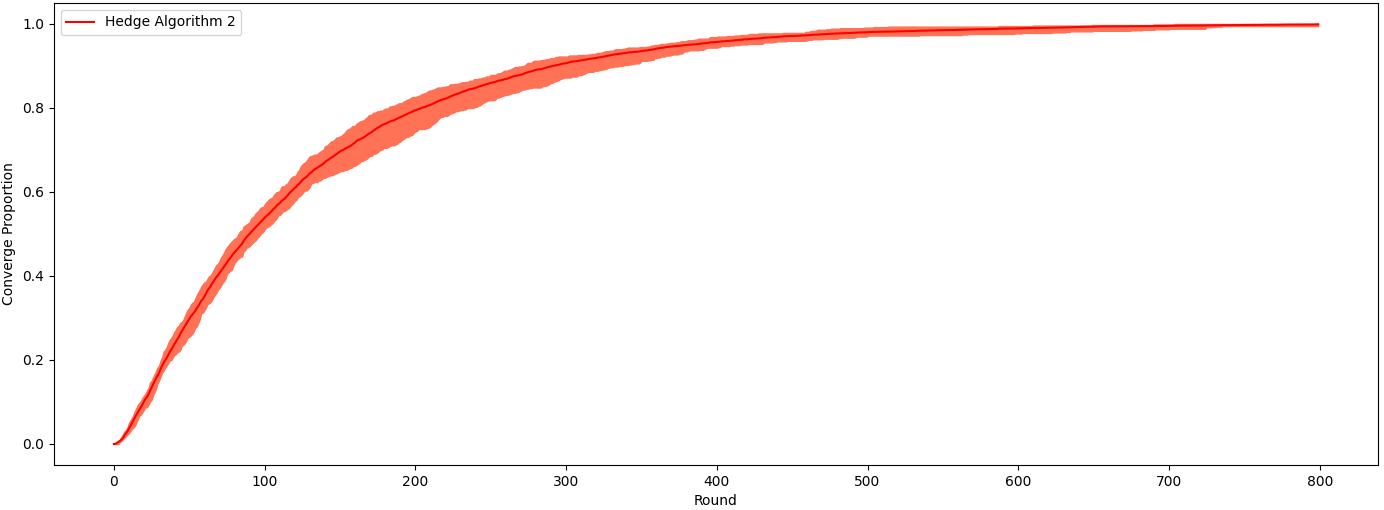}
\caption{Error Bar of Hedge Algorithm 2 in Section~\ref{sec.simu}.}
\label{Fig.errhedge2}
\end{figure}

\end{document}